\newtheorem{theorem}{Theorem}
\newtheorem{claim}{Claim}
\newtheorem{cor}{Corollary}
\newtheorem{remark}{Remark}
\newtheorem{definition}{Definition}
\crefname{claim}{Claim}{Claims}
\crefname{cor}{Corollary}{Corollaries}
\newcommand{\LDSset}[0]{\mathcal{S}_{\text{LDS}}} % set of LDS
\newcommand{\probapoly}[1]{\mathcal{P}^{#1}} % Behavior set
\newcommand{\localpoly}[1]{\mathcal{L}^{#1}} % Local set
\newcommand{\quantumset}[1]{\mathcal{Q}^{#1}} % Quantum set
\newcommand{\nspoly}[1]{\mathcal{NS}^{#1}} % NS set
\newcommand{\simpcycleSet}[1]{\mathcal{C}_{#1}} % Set of simple cycles
\newcommand{\err}[3]{{\operatorname{err}_{#1}(#2,#3)}} % Set of error points
\newcommand{\errtilde}[3]{\widetilde{\operatorname{err}}_{#1}(#2,#3)} % Set of error points shifted
\newcommand{\partySubSet}{\mathcal{I}}
\newcommand{\corrvecNoOne}[1]{\expval*{\vec{A}{#1}}}
\newcommand{\corrvecNoOneNBody}[1]{\expval*{\vec{\boldsymbol{A}}{#1}}}
\newcommand{\TINNproj}{{{\Pi_1}}}
\newcommand{\TINNNproj}{{\Pi_2}}
\newcommand{\TIRproj}{{\Pi_R}}
\newcommand{\Conv}[0]{\mathrm{Conv}} % Convex hull
\newcommand{\db}{\operatorname{DB}} % De Bruijn graph
\newcommand{\lcm}[1]{\operatorname{lcm}(#1)} % Least common multiple
\newcommand{\Break}{\State \textbf{break} } % break command in pseudo code
\begin{document}

\title{Characterizing Translation-Invariant Bell Inequalities using Tropical Algebra and Graph Polytopes}

\author{Mengyao Hu}
\thanks{These two authors contributed equally.}
\affiliation{Instituut-Lorentz, Universiteit Leiden, P.O. Box 9506, 2300 RA Leiden, The Netherlands}
\author{Eloïc Vallée}
\thanks{These two authors contributed equally.}
\affiliation{Instituut-Lorentz, Universiteit Leiden, P.O. Box 9506, 2300 RA Leiden, The Netherlands}
\author{Tim Seynnaeve}
\affiliation{Departments of Computer Science and Mathematics, KU Leuven, Celestijnenlaan 200A,
3001 Leuven, Belgium}
\author{Patrick Emonts}
\affiliation{Instituut-Lorentz, Universiteit Leiden, P.O. Box 9506, 2300 RA Leiden, The Netherlands}
\author{Fatemeh Mohammadi}
\affiliation{Departments of Computer Science and Mathematics, KU Leuven, Celestijnenlaan 200A,
3001 Leuven, Belgium}
\author{Jordi Tura}
\affiliation{Instituut-Lorentz, Universiteit Leiden, P.O. Box 9506, 2300 RA Leiden, The Netherlands}

\date{\today}

\begin{abstract}
Nonlocality is one of the key features of quantum physics, which is revealed through the violation of a Bell inequality.
In large multipartite systems, nonlocality characterization quickly becomes a challenging task.
A common practice is to make use of symmetries, low-order correlators, or exploiting local geometries, to restrict the class of inequalities.
In this paper, we characterize translation-invariant (TI) Bell inequalities with finite-range correlators in one-dimensional geometries. 
We introduce a novel methodology based on tropical algebra tensor networks and highlight its connection to graph theory.
Surprisingly, we find that the TI Bell polytope has a number of extremal points that can be uniformly upper-bounded with respect to the system size.
We give an efficient method to list all vertices of the polytope for a particular system size, and characterize the tightness of a given TI Bell inequality.
The connections highlighted in our work allow us to re-interpret concepts developed in the fields of tropical algebra and graph theory in the context of Bell nonlocality, and vice-versa.
This work extends a parallel article [M. Hu \textit{et al.}, arXiv: 2208.02798 (2022)] on the same subject.

\end{abstract}

\maketitle

\section{Introduction}

Measurements on space-like separated quantum particles may lead to statistics that evade any classical explanation, even if assisted by shared randomness~\cite{brunner_bell_2014}. 
This phenomenon is known as nonlocality and is detected through the violation of a so-called Bell inequality~\cite{bell_einstein_1964,clauser_proposed_1969}.
In addition to being one of the intrinsic features of quantum physics at a foundational level, nonlocality is the key resource for device-independent (DI) quantum information processing tasks, such as DI quantum key distribution~\cite{pironio_device-independent_2009}, DI randomness expansion and amplification~\cite{pironio_random_2010} or DI self-testing~\cite{supic_self-testing_2020}. 
On the experimental side, tremendous progress has been achieved in recent years in performing so-called loophole-free Bell tests in bipartite systems~\cite{hensen_loophole-free_2015, giustina_significant-loophole-free_2015, shalm_strong_2015, rosenfeld_event-ready_2017,storz_loophole-free_2023}.

In multipartite systems, however, the role of nonlocality remains less explored, hindered by the intrinsic mathematical complexity associated to the problem and the much more demanding experimental requirements. 
Progress has relied on imposing extra mathematical structure, such as symmetries in the Bell inequalities~\cite{tura_translationally_2014,wang_entanglement_2017}, low-order correlators~\cite{tura_detecting_2014,tura_nonlocality_2015}, or designing inequalities tailored to a particular class of quantum states~\cite{collins_bell_2002,salavrakos_bell_2017}. 
A particularly appealing class of multipartite Bell inequalities is composed of few-body correlators (often, two-body), as these are the interactions that occur in Nature, either in effective Hamiltonians or on those derived from first principles. 
These Hamiltonians, also interpretable as Bell operators, can give an insight on the nonlocality of quantum systems by, for example, looking at their ground state energy~\cite{tura_energy_2017,fadel_bell_2018,wang_two-dimensional_2018,emonts_effects_2024}. 

The intersection of all valid Bell inequalities forms the so-called local (or Bell) polytope~\cite{froissart_constructive_1981}.
Any half-space in the set of correlations can be turned into a Bell inequality by adding a suitable constant. 
There exists an extremal value for such a constant so that the half-space contains the whole Bell polytope.
This is known as the classical bound.
The Bell polytope can be characterized via its vertices or equivalently via its facets.
While the vertices are easy to generate, their number scales exponentially with the number of parties.
The generation of the resulting facets very quickly becomes intractable~\cite{chazelle_optimal_1993,pitowsky_optimal_2001}. 
However, through the use of symmetries or lower-order correlators, expressivity can be traded for computational complexity, thus making the detection of nonlocality in large systems more amenable.

In this work, we focus our efforts on tailoring multipartite Bell inequalities to quantum many-body systems in a one-dimensional geometry. 
Our objective is to characterize the translation-invariant (TI) Bell local polytope, not in the thermodynamic limit~\cite{wang_entanglement_2017}, but with a finite number of $N$ parties.
Specifically, we identify the vertices of the local polytope and, in some cases, we are able to compute its facets.
Finding the classical bound is an instance of a discrete optimization problem.
Many of these problems can be elegantly formulated in the language of tropical algebra~\cite{maclagan_introduction_2021,liu_tropical_2021}. 
Here, we demonstrate that employing a tensor network formalism within the context of tropical algebra provides an effective framework for optimizing Bell inequalities in such a setting. 
Additionally, notions from tropical algebra, like the tropical eigenvector can be utilized to list all the vertices of the corresponding local polytope.
We also establish a connection between the so-called critical graph~\cite{nowak_tropical_2014} and the vertex and facet information of the TI local polytope. 
We show an inclusion relation between the polytope at the thermodynamic limit and the TI local polytopes with finite number of parties $N$.  
Surprisingly, we find that the number of vertices of the TI local polytope is bounded by a finite number independent of $N$.
Moreover, we provide an algorithm to list all vertices that is of complexity $\mathcal{O}(1)$ in $N$, contrary to the naive enumeration, which runs in $\mathcal{O}(\exp N)$. 
All methods proposed in this paper are focused on two-body correlators with nearest-neighbor and next-to-nearest-neighbor interactions but also generalize naturally to a larger interaction range.

The paper is organized as follows: In \Cref{sec:prelim} we give a quick review of Bell nonlocality and we establish the required preliminary concepts in tropical algebra.
For readers familiar with these concepts, this section can be safely skipped.
In \Cref{sec:MathTINNPolytope} we provide the mathematical definition of the translation-invariant nearest-neighbor (TINN) local polytope, interpret its facets using tropical algebra and vertices using graph theory. 
In \Cref{sec:MathTIRPolytope}, we extend the mathematical definition to TI local polytopes with a higher interaction range (TI-$R$) and generalize the connections to tropical algebra and graph theory.
In \Cref{sec:vertices_projected_poly}, we prove that the number of vertices of the TI-$R$ local polytope can be bounded by a constant independent of the system size. 
We also present an algorithm to simultaneously compute its vertices for every finite $N$ and give the result for the TINN case. 
In \Cref{sec:renormTN}, we classify and design TI-$R$ Bell inequalities using the renormalization of tropical tensors.
In \Cref{sec:tropical_eig+eigvec}, we build a connection between the so-called critical graph and faces of the TI-$R$ local polytope.
The code to generate the data and examples used in this paper is available \href{https://gitlab.com/elo_val/ti-bell-inequalities}{online}~\footnotemark[0].

\section{Preliminaries}
\label{sec:prelim}

Throughout the paper, we will use certain abbreviations and specific notation.
The range from $0$ to $r-1$ is denoted $[r] := \{0,\dots,r-1\}$.
For $N$-partite systems, we denote the parties by $A^{(i)}$, where $i\in[N]$ indexes the party and we label them cyclically, i.e.\ $i \mod N$.
For a single party, the index $i$ is omitted.
The bold font, e.g.\ $\boldsymbol{P}$, $\boldsymbol{A}$, $\boldsymbol{s}$, etc.\ refers to multipartite quantities (of $N$ parties in general, but not necessarily).

\subsection{Theory of Bell nonlocality}
\label{subsec:bell_non_locality}

A Bell test is a method to study correlations between subsystems of a larger system \cite{bell_einstein_1964,clauser_proposed_1969}.
Consider a system shared between $N$ parties such that each party performs a measurement on its subsystem.
We assume that each party has the choice between $m$ different measurement bases (also called inputs), and each of them has $d$ outcomes. 
In this work, we will consider the specific case of $d = 2$.
In such case, we say that the scenario of the Bell experiment is $(N,m,2)$ \cite{pironio_lifting_2005, brunner_bell_2014}. 
We denote $\boldsymbol{P}(a_0,\dots,a_{N-1}\vert x_0,\dots,x_{N-1})$ the multipartite probability that each party $i \in [N]$ gets outcome $a_i \in [2]$ given the measurement basis choice $x_i\in[m]$. 

\subsubsection{Local, quantum and non-signaling behaviors}
A behavior is a way to represent the probability distributions describing the Bell experiment. 
It corresponds to a vector in $\mathbb{R}^{(2m)^N}$:
\begin{align}
\label{eq:behavior}
\vec{\boldsymbol{P}} := 
\begin{pmatrix}
    \boldsymbol{P}(0,\dots,0|0,\dots,0) \\
    \boldsymbol{P}(0,\dots,0|0,\dots,1) \\
    \vdots \\
    \boldsymbol{P}(1,\dots,1|m-1,\dots,m-1) \\
\end{pmatrix} \quad.
\end{align}
It encodes the joint probabilities for all possible inputs and outcomes. 

A local deterministic strategy (LDS) is a single-party strategy that can be described by a deterministic map.
There are $2^m$ LDSs and the set is denoted $\LDSset$ \cite{pironio_lifting_2005,brunner_bell_2014}.
For any $s\in \LDSset$, there is a map $f_s:[m]\mapsto [2]$ that associates an outcome to every input. 
The single-party probability distribution associated to the LDS $s$ is given by the Kronecker delta $P_{s}(a\vert x):= \delta_{a,f_s(x)}$.
To a strategy vector $\boldsymbol{s} \in \LDSset^N$, one can associate the local deterministic joint probability: 
$\boldsymbol{P}_{\boldsymbol{s}}(a_0,\dots,a_{N-1} \vert x_0,\dots,x_{N-1}) := \prod_{i=0}^{N-1} P_{s_i}(a_i \vert x_i)$.
In other words, a behavior is local deterministic if it factors as a tensor product $\vec{\boldsymbol{P}}_{\boldsymbol{s}} = \vec{P}^{(0)}_{s_0} \otimes \dots \otimes \vec{P}^{(N-1)}_{s_{N-1}}$, where $\vec{P}^{(i)}_{s_i} \in \mathbb{R}^{2m}$ is the single-party behavior associated to the LDS $s_i$.

According to quantum mechanics, a measurement in the basis $x$ is defined by a positive operator-valued measure (POVM) $M_x$. 
In other words, $M_x$ is a set of Hermitian positive semi-definite operators $\{M_{x,a}\}_{a=0,1}$ satisfying $M_{x,0} + M_{x,1} = \mathbb{I}$.
The single-party probability to measure outcome $a$ given input $x$ is given by the Born rule $P(a|x) = \Tr(M_{x,a}\rho) = \expval*{M_{x,a}}$, where $\rho$ is the quantum state of the system.
The multipartite joint probability is
$\boldsymbol{P}(a_0,\dots,a_{N-1}|x_0,\dots,x_{N-1}) = \expval*{M_{a_0,x_0}\otimes\dots\otimes M_{a_{N-1},x_{N-1}}}$~\cite{born_quantenmechanik_1926}.

A notable property arising in classical and quantum information theory is the non-signaling (NS) property~\cite{popescu_quantum_1994,brunner_bell_2014}. 
Let $\partySubSet := \{i_1,i_2,\dots,i_{|\partySubSet|}\} \subset [N]$ be a subset of parties.
We define the marginal $\boldsymbol{P}^{(\partySubSet)}(a_{i_1},\dots,a_{i_{|\partySubSet|}}|x_0,\dots,x_{N-1})$ by summing over the output $a_i$ of the parties that are not in $\partySubSet$.
A marginal is said to be NS if it is independent of the inputs of the parties that are not in $\partySubSet$. 
In other words, $\boldsymbol{P}^{(\partySubSet)}(a_{i_1},\dots,a_{i_{|\partySubSet|}}|x_0,\dots,x_{N-1}) = \boldsymbol{P}^{(\partySubSet)}(a_{i_1},\dots,a_{i_{|\partySubSet|}}|x_{i_1},\dots,x_{i_{|\partySubSet|}})$.
A distribution is said to be NS if for all subsets $\partySubSet \subset [N]$ the marginal distributions $\boldsymbol{P}^{(\partySubSet)}$ are NS.
For example, distributions associated to LDS or quantum measurements are NS.

The set of all behaviors forms a polytope $\probapoly{(N,m)}$ which has dimension $(2^N-1)m^N$ \cite{pironio_lifting_2005}. 
It is defined by the normalization constraint $\sum_{\{a_i\}_{i=0}^{N-1} = 0,1} \boldsymbol{P}(a_0,\dots,a_{N-1}|x_0,\dots,x_{N-1}) = 1,\forall x_i \in [m], i\in[N]$ and the inequalities $0 \leq \boldsymbol{P} \leq 1$.
The non-signaling constraints together with the normalization constraints define a subspace of dimension $(m+1)^N-1$ in $\mathbb{R}^{(2m)^N}$ \cite{pironio_lifting_2005}. 
The polytope $\nspoly{(N,m)}$ is the intersection of the polytope $\probapoly{(N,m)}$ with the NS subspace. 
The quantum set $\quantumset{(N,m)}$ is the set of behaviors whose components can be written as probabilities obtained from quantum measurements.
Note that the quantum set is convex but not a polytope \cite{pitowsky_range_1986}.
Finally, the local set $\localpoly{(N,m)}$ is a polytope whose vertices are given by the behaviors associated to LDS.
We have the strict inclusions $\localpoly{} \subset \quantumset{} \subset \nspoly{} \subset \probapoly{}$ (we dropped the upper script, for ease of reading) \cite{brunner_bell_2014}.

\subsubsection{LDS in correlator space}

The correlator space is a natural coordinate change of the non-signaling space. 
One-body correlators
%\footnote{The term correlator is inspired from well-known CHSH experiment. 
%There, the expectation values $\expval*{A^{(0)}_x A^{(1)}_y}, x,y\in \{0,1\}$ represent the correlations among the parties, and are called correlators. 
% By analogy, any expectation value will be called a correlator.} 
are defined as $\expval*{A_x} = P(0|x) - P(1|x)$ and two-body correlators as
$\expval{A_x\cdot B_y}=\sum_{a,b=0}^{1} (-1)^{a+b}{P(a,b|x,y)}$. The generalization to more bodies is straightforward. In total there are $(m+1)^N-1$ correlators, which we will use as coordinates on the non-signaling space. 

For a given $s \in \LDSset$, the correlator takes the value $\expval*{A_{x}}_s = (-1)^{f_s(x)}$, where $f_s$ is the deterministic map introduced in the previous subsection. 
We define the correlator vector as
\begin{align}
\label{eq:singlecorre_without1}
\corrvecNoOne{}_s := 
     \begin{pmatrix}
    \expval{A_{0}}_s\\
    \vdots \\
    \expval{A_{x}}_s \\
    \vdots \\
    \expval{A_{m-1}}_s\\
    \end{pmatrix} 
    \in \mathbb{R}^{m}.
\end{align}
For instance if $m=2$ we have $|\LDSset| =4$. The input-outcome maps are $f_0(x)=0$,  $f_1(x)=x$, $f_2(x)=1-x$ and $f_3(x)=1$. The correlator vectors are
\begin{equation}
\label{eq:s-to-singlecorre}
\begin{split}
& \corrvecNoOne{}_{0} = (1,1)^T, \quad
\corrvecNoOne{}_{1}=(1,-1)^T, \quad  \\
& \corrvecNoOne{}_{2}=(-1,1)^T, \quad
\corrvecNoOne{}_{3}=(-1,-1)^T.
\end{split}
\end{equation}

For LDSs, the bipartite correlators between two parties $A$ and $B$ are given by the products of the correlators~\cite{fine_hidden_1982}:
\begin{align} \label{eq:product_of_correlators}
\expval{A_{x} \cdot B_{y}}_{(s_A,s_B)} = \expval{A_{x}}_{s_A} \cdot
\expval{B_{y}}_{s_B},
\end{align}
where $(s_A,s_B) \in \LDSset^2$, and analogously for more parties.
This means that the $N$-partite behavior associated to an LDS can be written in the correlator space as a tensor product of single-partite correlator vectors:
\begin{align} \label{eq:N-partite_corr_vector} 
\corrvecNoOneNBody{}_{\boldsymbol{s}}
=& \bigotimes_{i=0}^{N-1} 
\begin{pmatrix} 1 \\ 
\corrvecNoOne{}_{s_i}
\end{pmatrix}\in \mathbb{R}^{(m+1)^N},
\end{align}
where $\boldsymbol{s} := (s_0,\dots,s_{N-1}) \in \LDSset^N$ and $s_i$ is the strategy of the party $A^{(i)}$. 
For LDSs, the correlator values of party $A^{(i)}$ are fully specified by the strategy $s_i$, therefore we can drop the label $i$ for the party: $\expval*{A_x^{(i)}}_{s_i} \rightarrow \expval{A_x}_{s_i}$. The entries of $\corrvecNoOneNBody{}_{\boldsymbol{s}}$ are all possible products of correlators between the different parties (including the ``empty product" $1$). 

For instance, in the case $N=m=2$, $\corrvecNoOneNBody{}_{(s,t)}$ can be seen as a matrix
\begin{equation}
    \left[\begin{array}{ c | c c }
    1 & \expval{A_0}_t & \expval{A_1}_t \\
    \hline
    \expval{A_0}_s & \expval{A_0}_s\expval{A_0}_t & \expval{A_0}_s\expval{A_1}_t\\
    \expval{A_1}_s & \expval{A_1}_s\expval{A_0}_t & \expval{A_1}_s\expval{A_1}_t
  \end{array}\right].
\end{equation}
Since the first entry of $\corrvecNoOneNBody{}_{\boldsymbol{s}}$ is always equal to $1$, we will ignore it from now on and view $\corrvecNoOneNBody{}_{\boldsymbol{s}}$ as a vector in $\mathbb{R}^{(m+1)^N-1}$.
Note this is equivalent to the Collins-Gisin notation in correlator space ~\cite{collins_relevant_2004}.

\subsubsection{Local polytope in correlator space}

In the correlator space, the local polytope $\localpoly{(N,m)}$ can be written as the convex hull of LDS correlator vectors ~\cite{fine_hidden_1982}: 
\begin{align}
    \label{eq:vecNmd_without1}
    \localpoly{(N,m)} = \Conv\left\{ \corrvecNoOneNBody{}_{\boldsymbol{s}} : \boldsymbol{s}\in \LDSset^N \right\} \subset \mathbb{R}^{(m+1)^N-1}.
\end{align} 
This polytope $\localpoly{(N,m)}$ is full-dimensional inside this space~\cite{pironio_lifting_2005}. 
Geometrically, the $2^{mN}$ points listed in \Cref{eq:vecNmd_without1} are the \emph{vertices} of the local polytope.

A coefficient vector $\boldsymbol{\alpha} \in \mathbb{R}^{(m+1)^N-1}$ defines the linear function $\boldsymbol{\alpha}\cdot \boldsymbol{q}$, with  $\boldsymbol{q} \in \mathbb{R}^{(m+1)^N-1}$. 
%\footnote{In linear algebra terms: $\boldsymbol{\alpha}$ lies in the vector space dual to the space $\mathbb{R}^{(m+1)^N-1}$ where the local polytope lives.}
By virtue of convexity, the minimum value on $\localpoly{(N,m)}$ is achieved at a vertex:
\begin{align}\label{eq:classicalBound}
    \beta := \min_{\boldsymbol{q} \in \localpoly{(N,m)}}{\boldsymbol{\alpha} \cdot \boldsymbol{q}}
    = \min_{\boldsymbol{s}\in\LDSset^N}{\boldsymbol{\alpha}\cdot \corrvecNoOneNBody{}_{\boldsymbol{s}}}.
\end{align}
Then every point $\boldsymbol{q} \in \localpoly{(N,m)}$ satisfies the \emph{Bell inequality} 
\begin{equation} \label{eq:BellIneq}
    \boldsymbol{\alpha}\cdot \boldsymbol{q} \geq \beta,
\end{equation}
the number $\beta$ is the so-called \emph{classical bound}.
For convenience, we use the vector notation $(\boldsymbol{\alpha};\beta)$ to encode the inequality.

By the Minkowksi-Weyl theorem~\cite{minkowski_geometrie_2018,farkas_theorie_1902,weyl_elementare_1934,charnes_strong_1958}, there is a finite collection of coefficient vectors $\boldsymbol{\alpha}$ such that the corresponding Bell inequalities define the local polytope. These are called \emph{facet inequalities} or \emph{tight Bell inequalities}. 
For any $\boldsymbol{\alpha}$, the set 
\begin{equation}\label{eq:face}
    \mathcal{H}_{\boldsymbol{\alpha}}:=\{\boldsymbol{q} \in \localpoly{(N,m)} \mid \boldsymbol{\alpha}\cdot \boldsymbol{q} = \beta \},
\end{equation}
where the Bell inequality is achieved is a \emph{face} of the polytope. The facet inequalities are precisely those corresponding to a face of maximal possible dimension (i.e.\ a \emph{facet} of the polytope). 

Computing all tight Bell inequalities is a special case of a general problem in polyhedral geometry: given the vertex description of a polytope, compute its facet description. Algorithms for this problem are available \cite{fukuda_cddlib_2003,fukuda_double_1996}, but they scale exponentially in the number of vertices~\cite{chazelle_optimal_1993}.
In particular, a facet description of the Bell polytope is unknown except for small instances, in particular, small values of $N$ and $m$~\cite{brunner_bell_2014,brunner_partial_2008}.

\subsection{Tropical Algebra}
\label{subsec:TA_intro}

Tropical algebra is a mathematical framework for solving discrete optimization problems and analyzing combinatorial structures.
Let us begin by defining the arithmetic operations of tropical addition $x\oplus y:= \min(x,y)$ and tropical multiplication $x\odot y:= x+y$. 
These operations on real numbers plus infinity give rise to the tropical semiring $(\mathbb{R}_{\infty},\oplus,\odot)$, also referred as the min-sum algebra. 
For a formal exposition, see~\cite{maclagan_introduction_2021,joswig_essentials_2021,akian_max-plus_2006}. Note that the neutral element for tropical addition is $\infty$, and the neutral element for tropical multiplication is $0$.

A \emph{tropical matrix} is a matrix with entries in $\mathbb{R}_{\infty}$; the multiplication of tropical matrices is defined analogously to usual matrix multiplication, with $+$ and $\cdot$ replaced by their tropical analogues:
\begin{align}
\left(F \odot G\right)_{i,j} = \bigoplus_k\left(F_{i,k} \odot G_{k,j}\right) = \min_{k}\left\{F_{i,k} + G_{k,j}\right\}.
\end{align}
In the case $F=G$ is an $n \times n$-matrix, this has an elegant interpretation in terms of graphs: consider the weighted oriented graph with adjacency matrix $F$, i.e.\ if $F_{i,j} \neq \infty$ we draw an arrow from $i$ to $j$ and label it with the weight $F_{i,j}$; if $F_{i,j}=\infty$ we don't draw an edge. 
We will denote this graph by $\Gamma_F$. 
Then the $(i,j)$-th entry of the tropical square $F \odot F$ gives the minimum weight of a path from $(i,j)$ in exactly two steps. 
More generally, the entries of $F^{\odot k}$ give the minimum weight of a $k$-step path from $i$ to $j$.  

\subsubsection{Tropical eigenvalues and eigenvectors} 
\label{subsec:TA_eigva+vec}

Similarly to linear algebra, we can also define the tropical eigenvalue by replacing the traditional multiplication by tropical multiplication. 
If for a tropical $n \times n$ matrix $F$ we can find a vector $v \in \mathbb{R}^n$ and a scalar $\lambda \in \mathbb{R}$ such that
\begin{align}
    \label{eq:eigvector_F}
    F \odot v =\lambda \odot v,
\end{align}
we call $\lambda$ a \emph{tropical eigenvalue} of $F$, and $v$ is called a \emph{tropical eigenvector} of $F$. 
The matrix $F$ is called \emph{irreducible} if the corresponding graph $\Gamma_F$ is strongly connected.
We will always assume this from now on. 
In this case, one can show that $F$ has a unique eigenvalue, which we denote $\lambda(F)$. 
This unique $\lambda$ is equal to the minimum normalized weight (i.e.\ weight divided by the number of edges) of a directed cycle in the graph $\Gamma_F$ and $\lambda$ can be computed in $O(mn)$ time using Karp's Algorithm ~\cite{karp_characterization_1978}, where $m$ is the number of edges in $\Gamma_F$.
The connection between tropical eigenvalue and the optimization of Bell inequalities can be found in \Cref{subsec:Tropical_TINN} and \Cref{subsec:Tropical_TINNN}.

\subsubsection{Kleene plus} 
\label{subsec:kleene_plus}
Let $F'=F-\lambda(F)$ be the normalization of $F$.
This substitution ensures that the eigenvalue is $\lambda(F') = 0$.
All cycles in $F'$ have non-negative weight, and there exists a cycle of weight $0$. 
Let us define the \textit{Kleene plus} of the matrix $F'$ as follows:
\begin{align}
    \label{eq:kleene_plus_F'}
    F'^{+ k}=F' \oplus F'^{\odot 2} \oplus F'^{\odot 3} \oplus \dots \oplus F'^{\odot k}.
\end{align}
The entry $[F'^{+k}]_{i,j}$ gives the minimum weight of all paths of length \emph{at most} $k$ from node $i$ to $j$ in $\Gamma_F$. 
The sequence $\left(F'^{+k}\right)_k$ stabilizes at $k=n$ (recall $n$ is the size of the matrix $F$); we will denote the matrix $F'^{+n}$ simply by $F'^+$. 
Its $(i,j)$-th entry denotes the minimum weight of any path from $i$ to $j$ of any length.
In \Cref{subsec:tropical_power_num}, the Kleene plus is used to classify Bell inequalities.
 
\subsubsection{Computing a tropical eigenvector} 
\label{subsec:pre:trop_eigvec}
One can show that one of the diagonal entries of $F'^+$ is equal to $0$.
For such a diagonal entry $[F'^{+}]_{j,j}=0$, let $v=F'^{+}_{\cdot j}$ be the $j$-th column vector of the matrix $F'^{+}$. 
One can show that $v$ is a tropical eigenvector of $F$ with tropical eigenvalue $\lambda$~\cite{maclagan_introduction_2021}.
Alternatively, as explained in~\cite{hogben_handbook_2006}, computing an eigenvector of $F$ is a single source shortest path problem, which can be solved in $O(nm)$ time, where $m$ is again the number of edges in $\Gamma_F$.
The tropical eigenvector is essential in constructing the critical graph shown in \Cref{fig:example_TINN_min_graph}. 

\subsubsection{Critical graph} 
\label{subsec:pre:crit_graph}

We call a closed path \textit{critical} if its normalized weight is equal to the tropical eigenvalue $\lambda$ of $F$, i.e.\ if it is minimal. 
The \textit{critical graph} of $F$, denoted as $\Gamma_F^{\text{crit}}$, consists of all the nodes and edges that belong to critical closed paths in $\Gamma_F$~\cite{nowak_tropical_2014}. 
Conversely, every closed path in the critical graph will be critical. 
After obtaining the eigenvalue $\lambda$ and one associated eigenvector $v$, the critical graph $\Gamma_F^{\operatorname{crit}}$ can be obtained in $O(m)$ additional time ~\cite{hogben_handbook_2006}. 
We explain now how to do this.

For the row index $k \in [n]$,
 \Cref{eq:eigvector_F} implies that
		\begin{align}
             \label{eq:eigenvector_entry}
			\min_{\ell \in [n]} (F_{k,\ell} +v_\ell)-v_k=\lambda(F).
		\end{align}
We first construct a subgraph $\Gamma^v_F$ by drawing an edge $(k,\ell)$ to every $\ell \in [n]$ for which the minimum in \Cref{eq:eigenvector_entry} is achieved, i.e.\ we draw an edge from $k$ to $\ell$ if $
 F_{k,\ell}+v_{\ell}-v_k=\lambda(F).
 $
Now we can list the strongly connected components of the subgraph $\Gamma_F^v$, which can be found using Kosaraju's algorithm in $O(m)$ time~\cite{hopcroft_data_1983}.  
The union of the strongly connected components forms the critical graph $\Gamma_F^{\operatorname{crit}}$. 
The critical graph will be used to analyze the faces of the projected local polytope, as detailed in \Cref{sec:tropical_eig+eigvec}.

As an example, consider the following $4\times 4$ matrix: 
\begin{align}
\label{eq:maF_TINN}
F=
    \begin{pmatrix}
         2 & 4 &-4 &-2 \\
         0 &2 & -2 & 0\\
        0 & -2 & 2 & 0 \\
        -2 & -4 &  4 & 2
    \end{pmatrix}.
\end{align}
The tropical eigenvalue of $F$ is $\lambda(F)=-2$ and $v=(0,2,2,0)^T$ is a tropical eigenvector. 
The graph $\Gamma_F$ is shown in \Cref{fig:example_TINN_min_graph}.
To construct $\Gamma_F^v$, we connect each pair $(k,\ell)$ for which $(F_{k,\ell} +v_\ell)-v_k=-2$. 
The result is depicted in red in \Cref{fig:example_TINN_min_graph}. 
Since $\Gamma_F^v$ is strongly connected, $\Gamma_F^{\operatorname{crit}}$ and $\Gamma_F^v$ are the same in this example.

\begin{figure}[h!]
    \centering
    \includegraphics[width=0.5\textwidth]{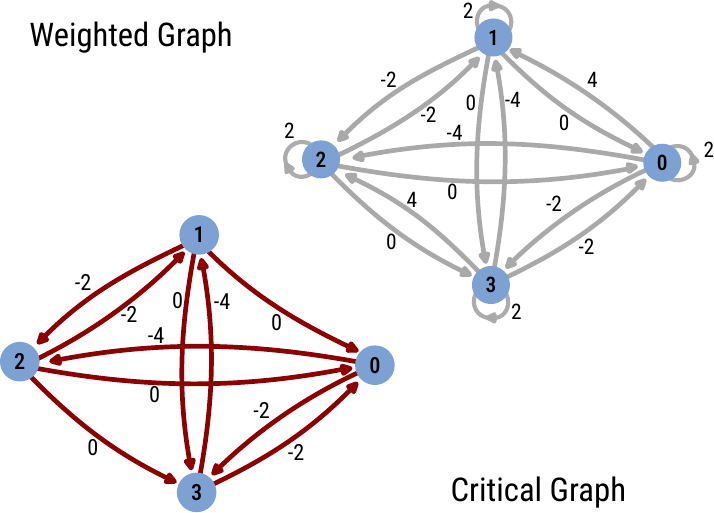}
    \caption{Weighted directed graph $\Gamma_F$ (gray color) and the critical graph $\Gamma_F^{\operatorname{crit}}$ (red color). The critical graph $\Gamma_F^{\operatorname{crit}}$ is strongly connected and forms a subgraph of $\Gamma_F$.
    All the closed paths on $\Gamma_F^{\operatorname{crit}}$ are critical and their normalized weight is $-2$, e.g.\ the closed path $(1,2,3)$. }
    \label{fig:example_TINN_min_graph}
\end{figure}

The critical graph $\Gamma_F^{\operatorname{crit}}$ has $11$ simple cycles: 
\begin{equation} \label{eq:11simpleCycles}
    \begin{split}
       &(0, 3, 1, 2), (0, 3, 1), (0, 3), (0, 2, 3, 1), (0, 2, 3), \\
       &(0, 2, 1, 3), (0, 2, 1), (0, 2), (1, 3), (1, 2, 3), (1, 2).
    \end{split}
\end{equation}
These are precisely the simple cycles of minimal normalized weight in the original graph.

\subsubsection{Sequence of tropical powers and cyclicity}
\label{subsec:pre:trop_power+cyclicity}
In \Cref{subsec:graphTINN,sec:vertices_projected_poly}, we will interpret the classical bound of a translation-invariant Bell inequality as the \emph{tropical trace} of a matrix power. 
The tropical trace of a matrix $F$ is equal to $\operatorname{tropTr}(F):= \bigoplus_i{F_{i,i}}$, the minimal diagonal entry of $F$. Thus, the minimal weight of a cycle of length $N$ in $\Gamma_F$ is given by $\operatorname{tropTr}(F^{\odot N})$. This leads to the formula 
\begin{equation}
    \lambda(F)=\lim_{N \to \infty}{\operatorname{tropTr}(F^{\odot N})/N}
\end{equation}
for the tropical eigenvalue, which is reminiscent of the iterated power algorithm for computing the largest eigenvalue of a matrix (dividing by $N$ is a ``tropical $N$-th root")~\cite{spalding_min-plus_1998,nowak_tropical_2014}. 
In fact, infinitely many of the entries in the sequence $(\operatorname{tropTr}(F^{\odot N})/N)_{N=1,2,\dots}$ are already equal to the limit $\lambda(F)$.
More precisely, the entries $\operatorname{tropTr}(F^{\odot N})/N$ for which $F$ contains a minimal weight cycle of length $N$.
The other entries can be larger, but they will converge to $\lambda(F)$ when $N \to \infty$. 
We will see this pattern again when considering Bell inequalities for TI systems with a finite number of sites.
One can also consider the sequence of matrices $(F^{\odot N})_{N=1,2,\ldots}$.  
According to tropical algebra~\cite{spalding_min-plus_1998, nowak_tropical_2014}, if $F$ is an irreducible matrix, there exist two integers $\sigma$ and $N_0$, such that
\begin{equation} \label{eq:stabilization}
    F^{\odot N+\sigma} = \lambda(F)^{\odot \sigma} \odot F^{\odot N},\quad \forall N \geq N_0.
\end{equation} 
In other words, the sequence $(F'^{\odot k})_{k=1,2,\ldots}$, where $F' = F- \lambda(F)$, eventually becomes periodic with period $\sigma(F)$. 
The number $\sigma=\sigma(F)$ is known as the \emph{cyclicity} of the matrix $F$~\cite{nowak_tropical_2014}. 
It can be computed from the critical graph $\Gamma_F^{\operatorname{crit}}$ as follows: for every strongly connected component of $\Gamma_F^{\operatorname{crit}}$, compute the greatest common divisor of the lengths of all simple cycles. 
Then the cyclicity of $F$ is equal to the least common multiple of these gcd's.
The aforementioned equality 
\begin{equation} \label{eq:tropTrIsLambda}
    \operatorname{tropTr}(F^{\odot N})/N=\lambda(F)
\end{equation}
holds for all $N \geq N_0$ that are divisible by $\sigma$. 

In the above example, the cyclicity $\sigma(F)$ is $\gcd\{2,3,4\}=1$ since the critical graph $\Gamma_F^{\operatorname{crit}}$ is strongly connected and has simple cycles of length $2$, $3$ and $4$. 
Additionally, $N_0$ in \Cref{eq:stabilization} is $2$.

In \Cref{sec:renormTN}, tropical powers and cyclicity will be interpreted as a tropical notion of renormalization. 
These concepts will be used to classify Bell inequalities.

\section{Mathematical definition of the TINN local polytope}
\label{sec:MathTINNPolytope}
In this section, we use the previously defined concepts of Bell nonlocality and tropical algebra to study the so-called TINN local polytope. 
It encodes the translation-invariant Bell inequalities of a 1D system that only contains nearest-neighbor correlators (cf.\ left panel of \Cref{fig:system_sketch}).
In \Cref{sec:MathTIRPolytope}, the case of longer interaction range (right panel in \Cref{fig:system_sketch}) will be considered.

\begin{figure}
    \centering
    \includegraphics[width=\columnwidth]{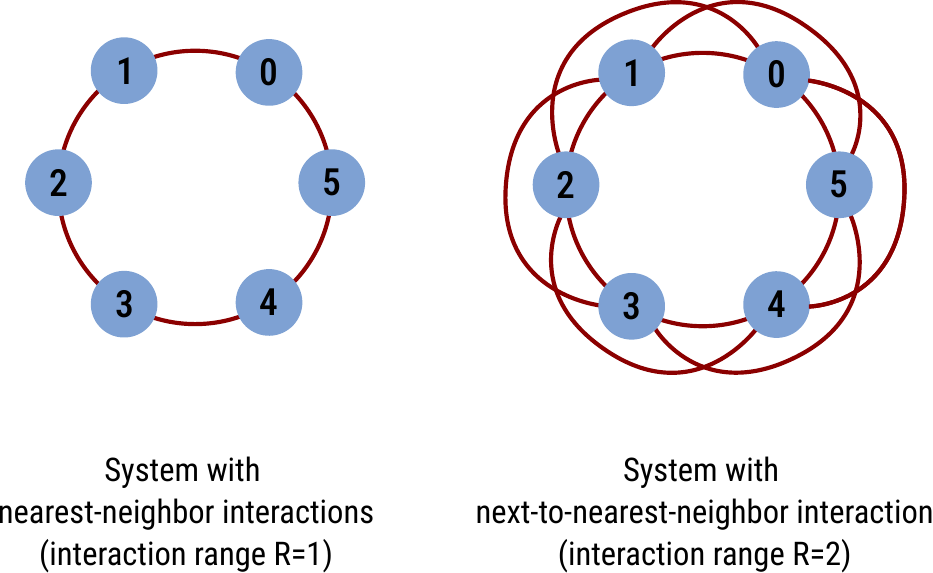}
    \caption{Sketch of the translation-invariant system under consideration.
    Each node represents a party in the Bell scenario and the red lines depict allowed interactions, i.e.\ correlators allowed in the Bell inequality.}
    \label{fig:system_sketch}
\end{figure}

\subsection{Translation-invariant Bell inequalities}
\label{subsec:TIN_Bell_Ineq}

Since the dimension and the number of vertices of the local polytope $\localpoly{(N,m)}$ grows exponentially with the number of parties, the description of the polytope for large $N$ is intractable. 
Therefore we will restrict to Bell inequalties with a specific structure ~\cite{bancal_looking_2010, tura_translationally_2014, tura_detecting_2014, tura_energy_2017, tura_nonlocality_2015}. 
A general Bell inequality [cf.\ \Cref{eq:BellIneq}] can be spelled out as 
\begin{equation} \label{eq:BellIneqInCoords}
    \sum{\alpha^{i_1\dots i_k}_{x_1\dots x_k} \expval*{A^{(i_1)}_{x_1} \cdots A^{(i_k)}_{x_k}}} \geq \beta.
\end{equation}
To start with, we will
study Bell inequalities that
\begin{enumerate}
    \item only involve single-body correlators, and two-body correlators of \emph{nearest neighbors} (NN). That is, the only nonzero coefficients in \Cref{eq:BellIneqInCoords} are $\alpha^{i}_x$ and $\alpha^{i,i+1}_{x,y}$ (where $i+1$ is to be considered modulo $N$). 
    In such a case, we say that the interaction range is $R=1$.
    \item are \emph{translation-invariant} (TI), i.e.\ 
    \begin{align}
        \alpha^{0}_x = \alpha^{1}_x = \dots = \alpha^{N-1}_x, \\
        \alpha^{0,1}_{x,y} = \alpha^{1,2}_{x,y} = \dots = \alpha^{N-1,0}_{x,y}.
    \end{align}
\end{enumerate}

Consider the projection map $\TINNproj$ (the index 1 stands for $R = 1$) that takes as input a correlator vector $\corrvecNoOneNBody{} \in \mathbb{R}^{(m+1)^N-1}$, and outputs the averages of single-body and NN two-body correlators.
Formally, the projection is a vector in $\mathbb{R}^{m+m^2}$ given by
\begin{align} \label{eq:TINN_proj}
\TINNproj\left(\corrvecNoOneNBody{}\right) := 
\frac{1}{N}
\begin{pmatrix}
\sum_{i=0}^{N-1}\corrvecNoOne{^{(i)}} \\
\sum_{i=0}^{N-1}\expval*{\vec{A}^{(i)}\otimes \vec{A}^{(i+1)}} \\
\end{pmatrix},
\end{align}
where $\expval*{\vec{A}^{(i)}\otimes \vec{A}^{(i+1)}}$ denotes the vector of two-body correlators whose entries are given by $\expval*{A^{(i)}_x \cdot A^{(i+1)}_y}$.

For a fixed LDS $\boldsymbol{s} = (s_0,\dots,s_{N-1})$, \Cref{eq:TINN_proj} can be written as a sum over $N$ terms depending only on the strategies of neighboring parties:
\begin{equation}
\TINNproj\left( \corrvecNoOneNBody{} \right) = \frac{1}{N} \sum_{i=0}^{N-1} \corrvecNoOneNBody{_{[R=1]}}_{(s_i,s_{i+1})}
\end{equation}
by defining the correlator vector 
\begin{align}
\label{eq:LDS_TINN_correvec}
\corrvecNoOneNBody{_{[R=1]}}_{(s_i,s_{i+1})}:= 
\begin{pmatrix}
\frac{1}{2}(\corrvecNoOne{}_{s_i}+\corrvecNoOne{}_{s_{i+1}} )\\
\corrvecNoOne{}_{s_i}\otimes\corrvecNoOne{}_{s_{i+1}}
\end{pmatrix}.
\end{align}
As an example, for $m=2$, the TINN correlator vector is the 6 dimensional vector
\begin{equation}\label{eq:TINN_correlator_example}
\corrvecNoOneNBody{_{[R=1]}}_{(s_i,s_{i+1})} =
\begin{pmatrix}
\frac{1}{2} (\expval{A_0}_{s_i} + \expval{A_0}_{s_{i+1}}) \\
\frac{1}{2} (\expval{A_1}_{s_i} + \expval{A_1}_{s_{i+1}}) \\
\expval{A_0}_{s_i} \cdot \expval{A_0}_{s_{i+1}} \\
\expval{A_0}_{s_i} \cdot \expval{A_1}_{s_{i+1}} \\
\expval{A_1}_{s_i} \cdot \expval{A_0}_{s_{i+1}} \\
\expval{A_1}_{s_i} \cdot \expval{A_1}_{s_{i+1}} \\
\end{pmatrix}.
\end{equation}

The image of the local polytope under this projection $\TINNproj$ is also a polytope, we denote it $\localpoly{(N,m)}_{\TINNproj}$. 
It is now easy to verify that the inequalities defining $\localpoly{(N,m)}_{\TINNproj}$ are precisely the Bell inequalities satisfying the TI and NN conditions. 
%This is analogous to~\cite[Section 3.1]{tura_translationally_2014}.
As shown in \Cref{fig:projection_polytope}, Bell inequalities that are tight in $\mathbb{R}^{m+m^2}$ might not be tight in the original space $\mathbb{R}^{(m+1)^N-1}$, 
since points outside $\localpoly{(N,m)}$ might be projected inside $\localpoly{(N,m)}_{\TINNproj}$. 
\begin{figure}[h!]
    \centering
    \includegraphics[width=\columnwidth]{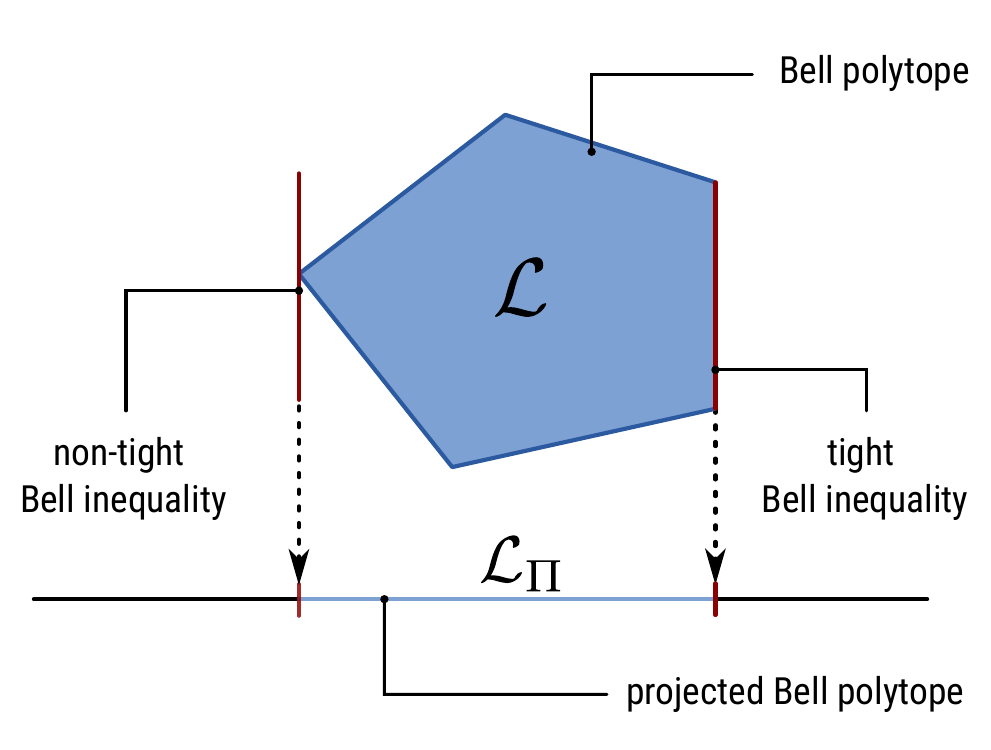}
    \caption{Projection of the local polytope on a subspace. Bell inequalities that are tight in the projected space might not be tight in the original space. Nevertheless, violation of the projected Bell inequalities necessarily implies nonlocality.}
    \label{fig:projection_polytope}
\end{figure}

Importantly, the projected local polytope can be written as the convex hull of projected LDS correlator vectors:
\begin{equation} \label{eq:TINN_polytope}
    \begin{split}
    \localpoly{(N,m)}_{\TINNproj} = \Conv\left\{ \TINNproj\left(\corrvecNoOneNBody{}_{\boldsymbol{s}} \right) : \boldsymbol{s}\in \LDSset^N \right\} \\ 
    = \Conv \left\{ \frac{1}{N} \sum_{i=0}^{N-1}\corrvecNoOneNBody{_{[R=1]}}_{(s_i,s_{i+1})} \mid \boldsymbol{s} \in \LDSset^N \right\},
    \end{split}
\end{equation}
but note that some of the projected correlator vectors might not be vertices of $\localpoly{(N,m)}_{\TINNproj}$. 
As long as $N$ is at least $3$, the polytope $\localpoly{(N,m)}_{\TINNproj}$ is full-dimensional inside $\mathbb{R}^{m^2+m}$ (see \Cref{app:generalTIpolytope} for more details).

\subsection{Optimizing a Bell inequality using tropical algebra} 
\label{subsec:Tropical_TINN}

In this section, we use tropical algebra to find the classical bound of TINN Bell inequalities.
As explained in \Cref{eq:BellIneq}, a Bell inequality can be expressed by a vector of coefficients $\boldsymbol{\alpha}$.
In particular, a Bell inequality in the TINN subspace is defined by the coefficients $\boldsymbol{\alpha} \in \mathbb{R}^{m+m^2}$. The associated linear functional is then given by
\begin{equation}
    I_{\boldsymbol{\alpha}}(\boldsymbol{s}) := \frac{1}{N}\sum_{i=0}^{N-1}{\boldsymbol{\alpha}\cdot\corrvecNoOneNBody{_{[R=1]}}_{(s_i,s_{i+1})}} ,
\end{equation}
hence the linear function can be written as a sum of bipartite terms:
\begin{equation}
    \label{eq:bellfunctionTINN}
    I_{\boldsymbol{\alpha}}(\boldsymbol{s})= \frac{1}{N}\sum_{i=0}^{N-1} I_{\boldsymbol{\alpha}}(s_i, s_{i+1}),
\end{equation}
where $I_{\boldsymbol{\alpha}}(s,t) := \boldsymbol{\alpha}\cdot\corrvecNoOneNBody{_{[R=1]}}_{(s,t)}$ is the expression of the bipartite linear function. 

The classical bound $\beta$ is obtained by optimizing \Cref{eq:bellfunctionTINN} over all LDSs~\cite{tura_energy_2017}.
However, this optimization is in general very hard as it corresponds to classical spin glasses ~\cite{barahona_computational_1982}, since there are exponentially many LDSs to check. 
This optimization task can be restated in terms of contracting an associated tensor network using tropical algebra ~\cite{hu_tropical_2022}, as we now explain. 
The domain of $I_{\boldsymbol{\alpha}}(s,t)$ is the finite discrete set $\LDSset \times \LDSset$, this function can be encoded into the matrix $F(\boldsymbol{\alpha}) \in \mathbb{R}^{2^m \times 2^m}$ by listing all the values of the bipartite linear function: 
\begin{align}
    \label{eq:encodeF_TINN}
    F(\boldsymbol{\alpha}) _{st} = I_{\boldsymbol{\alpha}}(s,t) = \boldsymbol{\alpha}\cdot\corrvecNoOneNBody{_{[R=1]}}_{(s,t)},
\end{align}
where $s,t \in \LDSset$.  
In the rest of the paper, we will often simply write $F$ instead of $F(\boldsymbol{\alpha})$. 
Note that given a matrix $F(\boldsymbol{\alpha})$, it is possible to find back the corresponding TI Bell coefficients $\boldsymbol{\alpha}$ by solving a system of linear equations.
The classical bound per party is finally obtained by the contraction of the tropical tensor network depicted in \Cref{fig:TN_F}: 
\begin{align}
\label{eq:cb_troptrace}
\beta := \min_{\boldsymbol{s}} I_{\boldsymbol{\alpha}}(\boldsymbol{s}) = \frac{1}{N}\operatorname{tropTr}(F^{\odot N}),
\end{align}
where, again, the tropical trace $\operatorname{tropTr}(F^{\odot N})$ is the minimum diagonal entry of $F^{\odot N}$.
\begin{figure}[h!]
    \centering
    \includegraphics[width=\columnwidth]{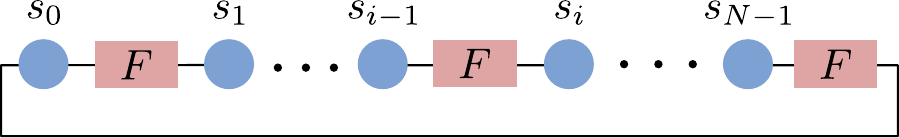}
    \caption{A graphical representation of the tropical tensor network contraction to minimize the linear function $I_{\boldsymbol{\alpha}}(\boldsymbol{s})$ associated to the TINN Bell inequality with coefficients $\boldsymbol{\alpha}$.}
    \label{fig:TN_F}
\end{figure}

As explained in \Cref{subsec:TA_intro}, when the number of parties $N$ grows, the right hand side of \Cref{eq:cb_troptrace} will converge to the eigenvalue of $F$.
Whenever $N$ is divisible by the cyclicity of $F$, we already have an equality $\beta = \lambda(F)$.

The use of tropical algebra reveals a strong connection between Bell inequalities and graph theory. 
One of the main applications of tropical algebra is finding the shortest path in a weighted directed graph. 
Let's consider the graph $\Gamma_F$ with nodes labeled by $s\in \LDSset$ and whose adjacency matrix is ~$F$.
As explained in \Cref{subsec:TA_intro}, the shortest path in $k$ steps from node $s$ to node $t$ in $\Gamma_F$ is given by $[F^{\odot k}]_{st}$.
Therefore the classical bound of the linear function $I_{\boldsymbol{\alpha}}$ can be seen as the shortest normalized closed path of length $N$ in the graph $\Gamma_F$. %This implies that we only need to consider the diagonal entries of $[F^{\odot k}]_{st}$. 

\subsection{The projected local polytope from cycles in the complete graph}
\label{subsec:graphTINN}

In addition to finding the classical bound for a fixed coefficient vector, we can also characterize the entire projected TINN-polytope of local correlations $\localpoly{(N,m)}_{\TINNproj}$ using graph theory. Recall that $F(\boldsymbol{\alpha})$ from \Cref{eq:encodeF_TINN} can be encoded by the graph $\Gamma_F$. This is the complete directed graph with self-loops $K_{2^m}$, where the nodes correspond to $s\in\LDSset$ and the edge $(s,t)$ is labeled by the number $\boldsymbol{\alpha}\cdot\corrvecNoOneNBody{_{[R=1]}}_{(s,t)}$. 
This graph is independent of the system size, and only depends on the number of inputs.  
For our new construction, we consider the same graph $K_{2^m}$, but instead associate the projected correlator vector $\corrvecNoOneNBody{_{[R=1]}}_{(s,t)}$ to the edge $(s,t)$. 
A strategy vector $\boldsymbol{s} \in \LDSset^N$ can be interpreted as a closed path of length $N$ in $K_{2^m}$. 
The point in $\localpoly{(N,m)}_{\TINNproj}$ corresponding to the strategy vector $\boldsymbol{s}$ is exactly the average of the correlator vectors associated to the edges in the closed path $\boldsymbol{s}$. 
Note that a closed path allows node repetitions and if we pass through an edge several times we count it with multiplicity.

As an example, let us consider the scenario $m=2$. 
\Cref{fig:completegraph-example} shows the complete directed graph with four nodes $K_{4}$. 
Each node corresponds to a strategy $s$. 
According to \Cref{eq:s-to-singlecorre}, each strategy is associated to a single body correlator vector $\corrvecNoOne{}_{s}$. 
For each edge $(s,t)$, the NN correlator vector $\corrvecNoOneNBody{_{[R=1]}}_{(s,t)}$ can be computed using \Cref{eq:LDS_TINN_correvec}.
For a system with $N=5$, the strategy vector $\boldsymbol{s} = (0,0,1,3,1)$ is associated to the correlator vector:
\begin{equation} \label{eq:egCorrVec}
\begin{split}
&\TINNproj (\corrvecNoOneNBody{}_{\boldsymbol{s}} ) 
= \frac{1}{5}\bigg( \corrvecNoOneNBody{_{[R=1]}}_{(0,0)} + \corrvecNoOneNBody{_{[R=1]}}_{(0,1)}  \\
& + \corrvecNoOneNBody{_{[R=1]}}_{(1,3)} + \corrvecNoOneNBody{_{[R=1]}}_{(3,1)} + \corrvecNoOneNBody{_{[R=1]}}_{(1,0)} \bigg) \\
& = \frac{1}{5}(3,-1,1,1,1,1)^T,
\end{split}
\end{equation}
where we ordered the entries as in \Cref{eq:TINN_correlator_example}.
\begin{figure}[h!]
    \centering
    \includegraphics[width=\columnwidth]{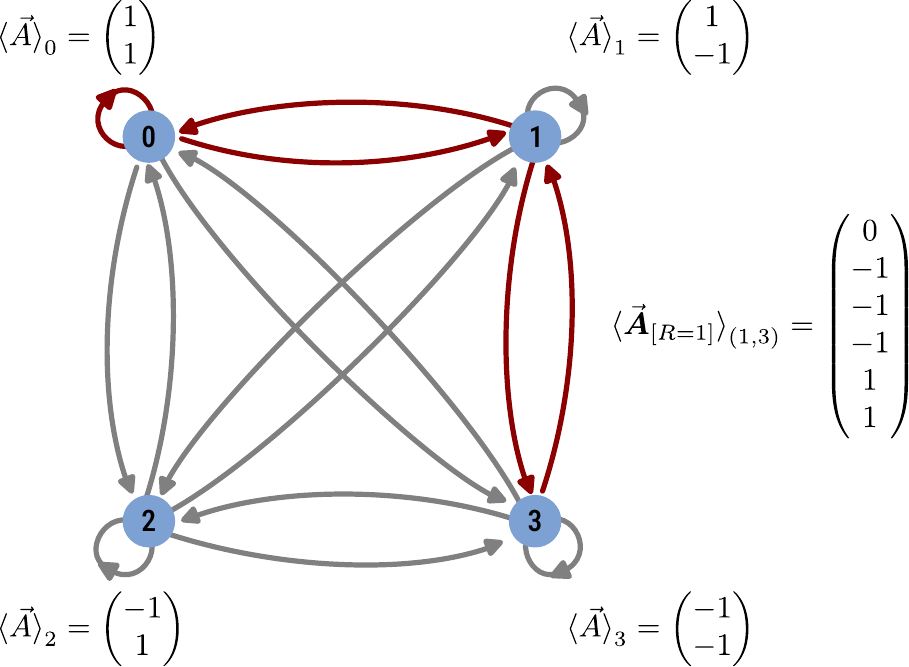}
    \caption{Representation of a $4$-node complete directed graph with self-loops. Each node $s \in \LDSset$ corresponds to an LDS in the $m=2$ scenario. The two-body nearest-neighbor correlator vector $\corrvecNoOneNBody{_{[R=1]}}_{(s,t)}$ can be computed using \Cref{eq:LDS_TINN_correvec}. By averaging $\corrvecNoOneNBody{_{[R=1]}}_{(s,t)}$ along the closed path in red, we get the projection of the vertex of $\localpoly{(5,2)}$ associated to the strategy $\boldsymbol{s} = (0,0,1,3,1)$.}
    \label{fig:completegraph-example}
\end{figure}

The TINN local polytope can be written as a projection of the \emph{normalized closed path polytope} $p_N(K_{2^m})$, which is defined analogously except that to the edges of $K_{2^m}$ we associate linearly independent vectors. Explicitly, for a general graph $\Gamma$, we define $p_N(\Gamma)$ as
\begin{equation} \label{eq:definition_pN}
p_N(\Gamma) := 
    \Conv\left\{ \frac{W(\boldsymbol{s})}{N}  : \boldsymbol{s} ~\text{closed path of } N ~\text{edges on } \Gamma\right\}
\end{equation}
where $W(\boldsymbol{s})$ is the \emph{weight matrix} associated to the closed path $\boldsymbol{s}$ (also known as the strategy vector).
The entry $W_{ij}$ counts the number of times the closed path goes through the edge $(i,j)$.
For example, the closed path $(0,0,1,3,1)$ given in \Cref{fig:completegraph-example} is associated to the $4\times 4$ weight matrix $W_{0,0} = W_{0,1} = W_{1,3} = W_{3,1} = W_{1,0} = 1$ and the other entries are zero.
In what follows, we will denote the
%weight matrix of a generic closed path $\boldsymbol{c}$ (it is not necessarily related to a strategy vector) by a capital letter $W(\boldsymbol{c})$, and the 
normalized weight matrix by a lower-case letter $w(\boldsymbol{s}):= \frac{1}{\ell(\boldsymbol{s})}W(\boldsymbol{s})$, where $\ell(\boldsymbol{s})$ is the number of edges in the closed path.
We study the polytope $p_N(\Gamma)$ in much more detail in \Cref{app:graph_theory}.

Now, $\localpoly{(N,m)}_{\TINNproj}$ is the image of $p_N(K_{2^m})$ under the linear projection:
\begin{equation} \label{eq:weight matrix_to_corrvec}
     \Phi(w) = \sum_{s,t = 0}^{2^m-1} w_{s,t} \corrvecNoOneNBody{_{[R=1]}}_{(s,t)} ,
\end{equation}
since $\TINNproj(\corrvecNoOne{}_{\boldsymbol{s}}) = \Phi(w(\boldsymbol{s}))$, for all strategy vectors $\boldsymbol{s}$.

According to graph theory~\cite{bondy_graph_2008}, a closed path can be decomposed as a union of simple cycles. 
This leads to writing the weight matrix of a closed path as a convex combination of weight matrices of simple cycles.
This insight can be used to show that for any $N$, the polytope $p_{N}(K_{2^m})$ is included in the \emph{normalized cycle polytope} $p_*(K_{2^m})$, defined (for any graph $\Gamma$) as
\begin{equation} \label{eq:definition_pstar}
p_*(\Gamma) := \operatorname{Conv}\{ w(\boldsymbol{c}) \mid \boldsymbol{c} \text{ simple cycle in } \Gamma\}.
\end{equation}
Note that the polytope $p_*(\Gamma)$ is independent of $N$.
Moreover, if $N$ is divisible by all numbers $1,2,\ldots,2^{m}$, then we have an equality $p_{N}(K_{2^m})=p_{*}(K_{2^m})$.
Both of the above statements are proven in \Cref{thm:periodicityPolytope} in \Cref{app:graph_theory}. 

The image of $p_*(K_{2^m})$ under the projection $\Phi$ gives:
\begin{equation} \label{eq:projectedTINNcv}
\Conv\left\{\frac{1}{\ell(\boldsymbol{c})}\sum_{i=0}^{\ell(\boldsymbol{c})-1} \corrvecNoOneNBody{_{[R=1]}}_{(c_i,c_{i+1})} \mid \boldsymbol{c} \in \mathcal{C}(K_{2^m}) \right\},
\end{equation}
where $\mathcal{C}(K_{2^m})$ is the set of simple cycles in $K_{2^m}$ and $i+1$ is taken modulo $\ell(\boldsymbol{c})$.
We will denote this polytope by $\localpoly{(*,m)}_{\TINNproj}$ and it is independent of $N$.
This implies our first result: 
\begin{theorem}
The TINN local polytope $\localpoly{(N,m)}_{\TINNproj}$ is contained in the polytope $\localpoly{(*,m)}_{\TINNproj}$. If $N$ is divisible by $\lcm{1,\dots,2^m}$, then the inclusion becomes an equality.
\end{theorem}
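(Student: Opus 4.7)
The plan is to deduce the theorem from its graph-theoretic counterpart for the normalized closed-path and simple-cycle polytopes $p_N(K_{2^m})$ and $p_*(K_{2^m})$, by pushing that statement forward under the linear projection $\Phi$ defined in Eq.~(\ref{eq:weight matrix_to_corrvec}).

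First, I would note that by construction $\localpoly{(N,m)}_{\TINNproj} = \Phi(p_N(K_{2^m}))$ and $\localpoly{(*,m)}_{\TINNproj} = \Phi(p_*(K_{2^m}))$; this is exactly how the two polytopes were rewritten above in terms of closed paths and simple cycles. Since $\Phi$ is a linear map, both inclusions and equalities of convex polytopes are preserved under it. Hence the theorem reduces entirely to the assertion
\[
p_N(K_{2^m}) \subseteq p_*(K_{2^m}), \qquad \text{with equality when } \lcm{1,\dots,2^m} \mid N,
\]
which is \Cref{thm:periodicityPolytope} in \Cref{app:graph_theory} and can be invoked as a black box here.

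For the inclusion $p_N \subseteq p_*$, I would exhibit every vertex $W(\boldsymbol{s})/N$ of $p_N$ as a convex combination of vertices of $p_*$. The key graph-theoretic input is the classical fact that any closed walk in $K_{2^m}$ decomposes edge-wise into simple cycles: there exist simple cycles $\boldsymbol{c}_1,\dots,\boldsymbol{c}_k$ with multiplicities $n_j\in\mathbb{Z}_{\geq 0}$ satisfying $W(\boldsymbol{s}) = \sum_j n_j W(\boldsymbol{c}_j)$ and $\sum_j n_j \ell(\boldsymbol{c}_j) = N$. Dividing by $N$ yields $W(\boldsymbol{s})/N = \sum_j \lambda_j\, w(\boldsymbol{c}_j)$ with $\lambda_j := n_j \ell(\boldsymbol{c}_j)/N \geq 0$ and $\sum_j \lambda_j = 1$, a genuine convex combination of vertices of $p_*$.

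For the reverse inclusion when $\lcm{1,\dots,2^m} \mid N$, every simple cycle in $K_{2^m}$ has length $\ell$ with $1 \le \ell \le 2^m$, so $\ell \mid \lcm{1,\dots,2^m} \mid N$. Concatenating a simple cycle $\boldsymbol{c}$ with itself $N/\ell$ times produces a closed path $\boldsymbol{s}$ of length exactly $N$ with $W(\boldsymbol{s})/N = w(\boldsymbol{c})$, so each vertex of $p_*$ already lies in $p_N$, and convexity of $p_N$ gives $p_* \subseteq p_N$. The only nontrivial step is the appendix decomposition lemma, in particular verifying the edge-count bookkeeping $\sum_j n_j \ell(\boldsymbol{c}_j) = N$ and the fact that arbitrary (not necessarily Eulerian) closed walks admit such a simple-cycle decomposition; once that is granted, the projection-to-correlators step via $\Phi$ is entirely formal.
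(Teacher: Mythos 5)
Your proposal is correct and follows essentially the same route as the paper: Theorem 1 is deduced by applying the linear projection $\Phi$ to the graph-theoretic statement $p_N(K_{2^m})\subseteq p_*(K_{2^m})$ (with equality for divisible enough $N$), which is exactly \Cref{thm:periodicityPolytope}, and your sketch of that lemma (simple-cycle decomposition of closed walks for the inclusion, repetition of each simple cycle $N/\ell$ times for the converse) matches the paper's appendix proof.
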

\begin{proof}
Follows from \Cref{thm:periodicityPolytope} and the fact that $\localpoly{(N,m)}_{\TINNproj}$ and $\localpoly{(*,m)}_{\TINNproj}$ are projections of $p_N(K_{2^m})$ and $p_*(K_{2^m})$.
\end{proof}

The polytope $\localpoly{(*,m)}_{\TINNproj}$ does not only describe the Bell scenario for every $N$ divisible by $\lcm{1,\dots,2^m}$, but also the thermodynamic limit. Indeed, $\localpoly{(*,m)}_{\TINNproj}$ gives a very close approximation of the local polytope $\localpoly{(N,m)}_{\TINNproj}$ for every large $N$ (not necessarily divisible by $\lcm{1,\dots,2^m}$). See also the discussion at the end of \Cref{app:relation_periodicityPolytope}.

For example, consider the scenario $m=2$. 
There are $24$ simple cycles in the graph $K_4$ as shown in \Cref{fig:completegraph-example}: $\mathcal{C} = \{(0),(1),(2),(3),(0,1),(0,2),\dots,(0,3,2,1)\}$. 
Since $\operatorname{lcm}(1,2,3,4)=12$, we have that $\localpoly{(12,2)}_{\TINNproj} = \localpoly{(24,2)}_{\TINNproj} = \ldots = \localpoly{(*,2)}_{\TINNproj}$. 
This polytope has $20$ vertices. 
Each of them can be associated to a simple cycle. Only the simple cycles $(0,1,2,3), (0,2,1,3), (0,3,1,2), (0,3,2,1)$ do not correspond to a vertex; the corresponding vertices of $p_*(K_{4})$ get projected (by $\Phi$) to the interior of the local polytope. 
For $N$ not a multiple of $12$, the polytope might have more vertices; we will analyze this in \Cref{sec:vertices_projected_poly}.

The vertices corresponding to the simple cycles 
$(0)$, $(0,1)$, and $(1,3)$ are given by the respective correlator vectors $v_{(0)}=(1,1,1,1,1,1)^T$, $v_{(0,1)}=(1,0,1,0,0,-1)^T$, and $v_{(1,3)}=(0,-1,-1,0,0,1)^T$. The vector from \Cref{eq:egCorrVec} is a convex combination of these three vertices: $\frac{1}{5}(v_{(0)}+ 2v_{(0,1)} + 2v_{(1,3)})$, compare \Cref{fig:completegraph-example}. 

From the list of vertices of $\localpoly{(*,2)}_{\TINNproj}$, we can compute its list of facets. 
In this case, there are $36$ facets. Using the symmetries described in \Cref{app:symmetry_class}, they are grouped into six classes represented by the following coefficient vectors $(\boldsymbol{\alpha};\beta)$:
\begin{align*}
   (2, \phantom{-}0, \phantom{-}1, \phantom{-}0, \phantom{-}0, \phantom{-}0; -1), \\
   (1, \phantom{-}1, \phantom{-}0, \phantom{-}0, \phantom{-}1, \phantom{-}0; -1), \\
   (2, \phantom{-}0, \phantom{-}1, -1, \phantom{-}1, -1; -2), \\
   (0, \phantom{-}0, \phantom{-}2, -1, \phantom{-}1, \phantom{-}0; -2), \\
   (0, \phantom{-}0, \phantom{-}1, \phantom{-}0, \phantom{-}2, -1; -2), \\
   (0, \phantom{-}0, -2, -1, \phantom{-}1, \phantom{-}0; -2).
\end{align*}
For instance, the fourth coefficient vector corresponds to the tight Bell inequality
\begin{equation}\label{eq:example_face_TINN}
\begin{split}
\frac{1}{N}\sum_{i=0}^{N-1} \big( & 2\expval*{A_0}_{s_i}\expval*{A_0}_{s_{i+1}} - \expval*{A_0}_{s_i}\expval*{A_1}_{s_{i+1}} \\
&+\expval*{A_1}_{s_i}\expval*{A_0}_{s_{i+1}} \big) \geq -2.
\end{split}
\end{equation}

Finally, \Cref{eq:weight matrix_to_corrvec} implies that the number of vertices of $\localpoly{(*,m)}_{\Pi_1}$ is upper bounded by the number of vertices of $p_*(K_{2^m})$,
which is given by the number of simple cycles in $K_{2^m}$:
\begin{align}
\label{eq:logarithmicnumber}
|\mathcal{C}(K_{2^m})| = \sum_{k=1}^{2^m} \frac{2^m!}{k(2^m-k!)} \approx e(2^m-1)! \; .
\end{align}
The number of simple cycles of the complete connected graph form a sequence known as the \emph{logarithmic numbers}, sequence A002104 in OEIS \footnote{J. H. Conway, Logarithmic Numbers, Entry A002104 in The On-Line Encyclopedia of Integer Sequences, https://oeis.org/A002104}.

\section{Longer interaction length}
\label{sec:MathTIRPolytope}

In this section, we consider the TI projection with a longer interaction range $R$ (TI-$R$). 
This means that the projected correlator vector contains only the terms $\expval*{A^{(i)}_x},\expval*{A^{(i)}_xA^{(j)}_y}$, where $j = i+1,\dots,i+R$.
For simplicity, we will review here the case $R=2$ (nearest and next-to-nearest neighbor interactions).
The generalization to any interaction range $R$ is straightforward (see \Cref{app:generalTIpolytope}).

\subsection{TI-2 projection}
The TI-$2$ projection $\TINNNproj$ is similar to the TINN projection in \Cref{eq:TINN_proj}, except that now we consider Bell inequalities for which the only nonzero coefficients are $\alpha_x^{i},\alpha_{x,y}^{i,i+1}$ and $\alpha_{x,y}^{i,i+2}$.
Such projection can be written as
\begin{align} \label{eq:TINNN_proj}
\TINNNproj\left(\corrvecNoOneNBody{}\right) := 
\frac{1}{N}
\begin{pmatrix}
\sum_{i=0}^{N-1}\corrvecNoOne{^{(i)}} \\
\sum_{i=0}^{N-1}\expval*{\vec{A}^{(i)} \otimes \vec{A}^{(i+1)}} \\
\sum_{i=0}^{N-1}\expval*{\vec{A}^{(i)}\otimes \vec{A}^{(i+2)}} \\
\end{pmatrix}.
\end{align}
For a given LDS $\boldsymbol{s} = (s_0,\ldots,s_{N-1})$, \Cref{eq:TINNN_proj} can be written as a sum over $N$ terms 
\begin{equation} \label{eq:TIR_proj_for_LDS}
    \TINNNproj\left(\corrvecNoOneNBody{}\right) = \frac{1}{N}\sum_{i=0}^{N-1} \corrvecNoOneNBody{_{[R=2]}}_{(s_i,s_{i+1},s_{i+2})},
\end{equation}
where $\corrvecNoOneNBody{_{[R=2]}}_{(s_i,s_{i+1},s_{i+2})}$ represents the correlator vector 
\begin{align} \label{eq:LDS_TINNN_correvec}
\begin{pmatrix}
\frac{1}{3}(\corrvecNoOne{}_{s_i}+\corrvecNoOne{}_{s_{i+1}}+\corrvecNoOne{}_{s_{i+2}} )\\
\frac{1}{2}(\corrvecNoOne{}_{s_i} \otimes \corrvecNoOne{}_{s_{i+1}}+\corrvecNoOne{}_{s_{i+1}} \otimes \corrvecNoOne{}_{s_{i+2}}) \\
\corrvecNoOne{}_{s_i} \otimes \corrvecNoOne{}_{s_{i+2}}
\end{pmatrix}.
\end{align}

For example, in the case $m=2$, the strategies $(s_i, s_{i+1}, s_{i+2})=(0,0,3)$ give the correlator vector $\corrvecNoOneNBody{_{[R=2]}}_{(0,0,3)}=(1/3,1/3\,\vert\,0,0,0,0\,\vert\,-1,-1,-1,-1)^T$ using \Cref{eq:s-to-singlecorre,eq:LDS_TINNN_correvec}.
In \Cref{app:generalTIpolytope}, we explain how to to similarly define the correlator vector $\corrvecNoOneNBody{_{[R]}}_{(s_i,\ldots,s_{i+R})}$ for $R=3,4,\ldots$; see in particular  \Cref{eq:LDS_TIR_correvec}.

It is then straightforward to express the $N$-partite TI-$2$ polytope: 
\begin{equation}
    \label{eq:TINNN_polytope}
    \localpoly{(N,m)}_{\TINNNproj} = \Conv\left\{ \TINNNproj \left( \corrvecNoOne{}_{\boldsymbol{s}} \right) \mid \boldsymbol{s}\in\LDSset^N\right\} ,
\end{equation} 
with $\TINNNproj(\corrvecNoOne{}_{\boldsymbol{s})}$ defined as in \Cref{eq:TIR_proj_for_LDS}. 
Note that $\localpoly{(N,m)}_{\TINNNproj} \subset \mathbb{R}^{m+2m^2}$ is full-dimensional when $N>5$ (see \Cref{app:generalTIpolytope} for more details).

\subsection{Optimizing TI-\texorpdfstring{$2$}{2} Bell inequalities using tropical algebra}
\label{subsec:Tropical_TINNN}

Similar to \Cref{subsec:Tropical_TINN}, 
for a given coefficient vector $\boldsymbol{\alpha} \in \mathbb{R}^{m+2m^2}$ for a Bell inequality in the TI-$2$ subspace, the associated linear function is
\begin{equation}
    I_{\boldsymbol{\alpha}}(\boldsymbol{s}) := \frac{1}{N}\sum_{i=0}^{N-1}{\boldsymbol{\alpha}\cdot \corrvecNoOneNBody{_{[R=2]}}_{(s_i,s_{i+1},s_{i+2})}}.
    \label{eq:def_I_TINNN_trop}
\end{equation}
The linear function $I_{\boldsymbol{\alpha}}$ can be written as a sum of tripartite linear functions:
\begin{equation}
    \label{eq:bellfunctionTINNN}
    I_{\boldsymbol{\alpha}}(\boldsymbol{s})= \frac{1}{N}\sum_{i=0}^{N-1}I_{\boldsymbol{\alpha}}(s_i, s_{i+1},s_{i+2}).
\end{equation}
where $I_{\boldsymbol{\alpha}}(s_i, s_{i+1},s_{i+2}) := \boldsymbol{\alpha}\cdot \corrvecNoOneNBody{_{[R=2]}}_{(s_i,s_{i+1},s_{i+2})}$.

Since we have three parties in $I_{\boldsymbol{\alpha}}(s_i, s_{i+1},s_{i+2})$, it is not straightforward to construct a matrix from this tripartite linear function. 
In particular, the matrix has to be square to apply the same methodology. 
However, there is not a unique way to construct the matrix. 
Here, we present three methods giving rise to three different matrices $F,G$ and $H$. 
While the methods are, in principle, equivalent, the different constructions may lead to varying properties. 
In \Cref{subsec:tropical_power_num}, the stabilization properties of these matrices will be connected to the violation of Bell inequalities.

\paragraph{Method 1:} consider a square matrix with more entries, and make sure that only some of them are related to $I_{\boldsymbol{\alpha}}$.
Fill the rest with \enquote{tropical zeros}, i.e.\ $\infty$.
\begin{align}
    \label{eq:tensorF}
    F_{(s,t),(t',u)}=\begin{cases}
    I_{\boldsymbol{\alpha}}(s,t,u) = \boldsymbol{\alpha}\cdot \corrvecNoOneNBody{_{[R=2]}}_{(s,t,u)} & \text{if}\ t=t', \\
    \infty & \text{otherwise,} 
\end{cases}
\end{align}
where $s,t,t',u \in \LDSset$ and $F \in \mathbb{R}^{2^{2m} \times 2^{2m}}$. 
In words, the conditions in \Cref{eq:tensorF} ensure consistency among the LDS.
A party has to commit to a strategy $t$ in each round of the Bell game. 
The conditions ensure that a party chooses the same strategy in pairings with different parties.
As one can see, the matrix $F$ encodes the linear function $I_{\boldsymbol{\alpha}}$, i.e.\ $F_{(s_i,s_{i+1}),(s_{i+1},s_{i+2})} = I_{\boldsymbol{\alpha}}(s_i, s_{i+1},s_{i+2})$.
Note by construction, $F$ is ``tropically sparse". Next, we can compute the classical bound of $I_{\boldsymbol{\alpha}}(\boldsymbol{s})$ using the tropical trace of $F^{\odot N}$ as follows:
\begin{equation}
     \label{eq:TINNN_bound_trF}
    \beta:=\min_{\boldsymbol{s}\in\LDSset^N} I_{\boldsymbol{\alpha}}(\boldsymbol{s})=\frac{1}{N}\operatorname{tropTr}(F^{\odot N}).
\end{equation}
Similar to \Cref{subsec:Tropical_TINN}, $\beta$ converges to the tropical eigenvalue $\lambda(F)$ for $N\to \infty$, and $\beta=\lambda(F)$ whenever $N$ is a multiple of the cyclicity of $F$.
In \Cref{subsec:tropical_power_num} we will perform a detailed analysis of the sequences $(F^{\odot N})_{N=1,2,\ldots}$.

As we will see in the following two methods, it is possible to represent the linear function $I_{\boldsymbol{\alpha}}$ in matrices by grouping two parties together or eliminating every third party.
The encoding of strategies in $F$ is quite sparse, i.e. many matrix elements are $\infty$ to ensure consistency.
Using tropical algebra, we can encode the information more efficiently and possibly change the stabilization characteristics of the matrix.

\paragraph{Method 2:}
we can define a square matrix $G=F^{\odot 2}$, whose entries are
\begin{equation}
   \begin{aligned}
        \label{eq:entry_G}
    G_{(s,t),(u,v)} &=\min_{(t',u')}F_{(s,t),(t',u)}+F_{(t',u),(u',v)}, \\
       &=I_{\boldsymbol{\alpha}}(s,t,u) + I_{\boldsymbol{\alpha}}(t,u,v).
   \end{aligned}
\end{equation}
This corresponds to the physical idea of grouping the two nearest parties into pairs. The strategies of each pair can then be represented as a tuple $\mathbf{s}_i=(s_{2i}, s_{2i+1}),i=0,\dots,\lfloor N/2\rfloor-1$. Then the classical bound of $I_{\boldsymbol{\alpha}}(\boldsymbol{s})$ can be computed by contracting the tensor network via tropical algebra as follows:
\begin{eqnarray}
     \label{eq:TINNN_bound_trG}
    \beta:=\min_{\boldsymbol{s}\in\LDSset^N} I_{\boldsymbol{\alpha}}(\boldsymbol{s})=\frac{1}{N}\operatorname{tropTr}(G^{\odot{\lfloor N/2\rfloor}}\odot T),
\end{eqnarray}
where $T$ is the tail matrix defined as 
\begin{equation}
\label{eq:tail_F2}
T :=  
\begin{cases}
\mathbb{I}_{\mathrm{trop}} &\;,\; \text{if } N = 0 \mod 2 \\
F &\;,\; \text{if } N = 1 \mod 2 
\end{cases}
\end{equation}
with the tropical identity
\begin{align}
   \mathbb{I}_{\mathrm{trop}} := \left(
		\begin{array}{cccc}
			0&\infty&\dots&\infty\\
			\infty&\ddots&\ddots&\vdots\\
			\vdots& \ddots& \ddots& \infty\\
			\infty&\dots& \infty &0
		\end{array}
		\right). \notag
\end{align}
Here, we add the tail matrix $T$ to account for leftover parties when the total number of parties $N$ is odd.

\paragraph{Method 3:} 
similarly to the above Method $2$, we can define $H=F^{\odot 3}$. 
This approach is inspired by the first step of the exponentially faster solution of classical bound presented in~\cite{tura_energy_2017}. 
The entries of $H$ are 
\begin{equation}
    \begin{aligned}
        \label{eq:H_entry}
    &H_{(s,t),(v,w)}=\min_{(u,v')} G_{(s,t),(u,v')}+F_{(u,v'),(v,w)}, \\
    &=\min_{u} I_{\boldsymbol{\alpha}}(s,t,u) + I_{\boldsymbol{\alpha}}(t,u,v)+I_{\boldsymbol{\alpha}}(u,v,w).
    \end{aligned}
\end{equation}

One can verify that the entries of $H$ correspond to the following function by eliminating every third party and grouping every two parties into pairs:
	\begin{align}
		\label{eq:optim-f1}
	\tilde{I}_{\boldsymbol{\alpha}}(\mathbf{s}_j,\mathbf{s}_{j+1}):=\min_{s_{3j+2}}\sum_{i=3j}^{3j+2}I_{\boldsymbol{\alpha}}(s_i, s_{i+1},s_{i+2}),
	\end{align}
 where $\mathbf{s}_j=(s_{3j}, s_{3j+1}),j=0,\dots,q$ and $q:=\lfloor N/3 \rfloor$ since we cannot assume that $N$ is a multiple of $3$. 
 Thus $N=3q+r$ and $r$ can be $0,1,2$. By comparing \Cref{eq:H_entry} with \Cref{eq:optim-f1}, it can be observed that $H=F^{\odot 3}$ precisely represents the procedure of eliminating every third party $3j+2,j\in [q]$ before constructing the matrix $H$ 
 by listing all the values of function $\tilde{I}_{\boldsymbol{\alpha}}$. In details, we can define $H$ using $\tilde{I}_{\boldsymbol{\alpha}}(\mathbf{s}_j,\mathbf{s}_{j+1})$ as follows:
\begin{align}
    \label{eq:tensorH}
    H_{\mathbf{s}_j,\mathbf{s}_{j+1}} =\tilde{I}_{\boldsymbol{\alpha}}(\mathbf{s}_j,\mathbf{s}_{j+1}).
\end{align}

Since the minimization in \Cref{eq:optim-f1} runs on the strategy of the $(3j+2)$-th party, the function $\tilde{I}_{\boldsymbol{\alpha}}(\mathbf{s}_j,\mathbf{s}_{j+1})$ does not depend on $s_{3j+2}$, but only on $\boldsymbol{s}_j = (s_{3j},s_{3j+1})$ and $\boldsymbol{s}_{j+1} = (s_{3j+3},s_{3j+4})$.
The \Cref{fig:eg_IR2} is a visualization of the grouping and eliminating procedure. 
\begin{figure}[h!]
\centering
\includegraphics[width=\columnwidth]{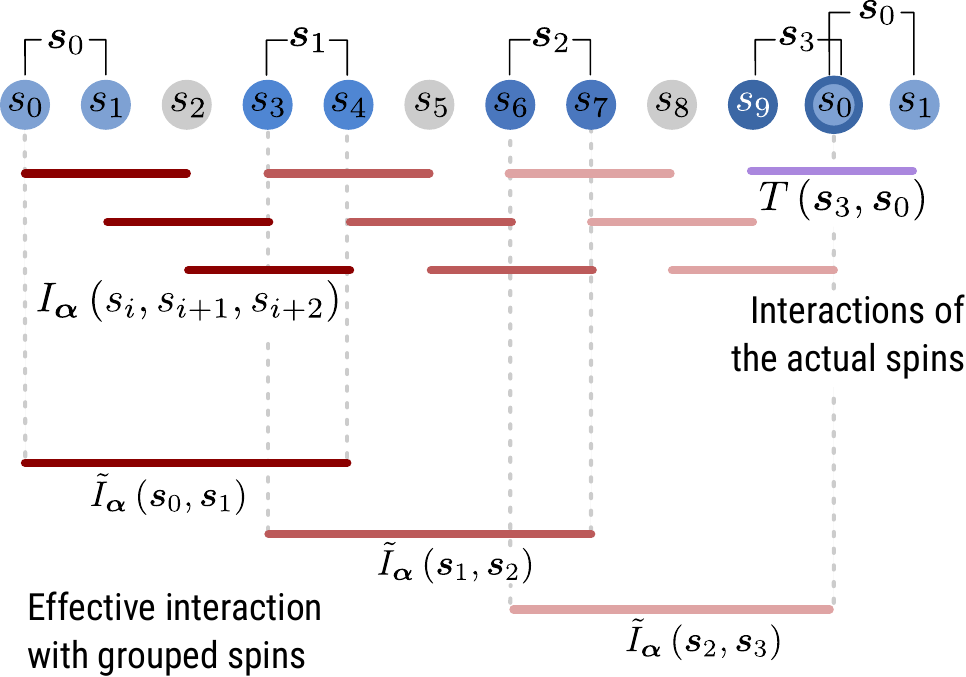}
\caption{A visualization of the grouping and elimination procedure to construct 2-variable functions when $N=10$~\cite{tura_energy_2017}. 
The gray circles correspond to the eliminated parties. 
The parties $3j$ and $3j+1$ with matching colors are paired to form a new super-party $j$. 
Terms of the tripartite function $I_{\boldsymbol{\alpha}}(s_i,s_{i+1},s_{i+2})$ containing the party $3j+2$ (plain lines of the same color) are summed. 
The minimization over $s_{3j+2}$ (gray circle) gives the new 2-variable function $\tilde{I}_{\boldsymbol{\alpha}}(\boldsymbol{s}_j,\boldsymbol{s}_{j+1})$, which can be encoded in a matrix $H$.
It turns out that $H=F^{\odot 3}$, where $F$ is the matrix encoding $I_{\boldsymbol{\alpha}}(s_i,s_{i+1},s_{i+2})$ as shown in \Cref{eq:tensorF}.
The tail function $T(\mathbf{s}_3,\mathbf{s}_0)$ (purple line) can be encoded as $F^{\odot 1}$. 
Following \Cref{eq:troptrace_H}, the classical bound $\beta=\operatorname{tropTr}(H^{\odot 3} \odot F^{\odot 1})$, where we used  $N=3q+r$ with $q=3,r=1$. }
\label{fig:eg_IR2}
\end{figure}

The optimization of $I_{\boldsymbol{\alpha}}(\boldsymbol{s})$ can be written as:
\begin{align}
    \label{eq:eliminateR-th}
    \begin{split}
    & \min_{\boldsymbol{s}\in\LDSset^N} I_{\boldsymbol{\alpha}}(\boldsymbol{s}) \\
    & =\min_{\mathbf{s}_0,\dots,\mathbf{s}_{q}} \frac{1}{N}\left\{\sum_{i=0}^{q-1}\tilde{I}_{\boldsymbol{\alpha}}(\mathbf{s}_i,\mathbf{s}_{i+1})+T(\mathbf{s}_{q},\mathbf{s}_0) \right\},
    \end{split} 
\end{align}
where similarly to Method 2 and~\cite{tura_energy_2017}, we introduce a tail function $T(\mathbf{s}_{q},\mathbf{s}_0)$ to overcome the issue when $N \mod 3 =r\neq 0$. 
The tail function is defined as
\begin{equation}
    \begin{aligned}
            \label{eq:taillocal}
T(\mathbf{s}_{q},\mathbf{s}_0):=&\sum_{i=3q}^{3q+r}I_{\boldsymbol{\alpha}}(s_i,s_{i+1},s_{i+2}),
    \end{aligned}
\end{equation}
and $T(\mathbf{s}_{q},\mathbf{s}_0)$ can be encoded to a square matrix $T \in \mathbb{R}^{2^{2m} \times 2^{2m}}$. This square matrix $T$ is precisely $F^{\odot r}$.
Note that if $r=0$, the tail function is $T=F^{\odot 0}$, which corresponds to the tropical identity matrix $ \mathbb{I}_{\mathrm{trop}}$.

The classical bound $\beta$ of 
$I_{\boldsymbol{\alpha}}$ can be obtained by the contraction of the tropical tensor network as follows:
\begin{align}
    \label{eq:troptrace_H}
    \beta:=\frac{1}{N}\operatorname{tropTr}(H^{\odot q}\odot T).
\end{align}

To summarize, we gave three methods to encode the TI-$2$ linear function $I_{\boldsymbol{\alpha}}(\boldsymbol{s})$ into a matrix $F$, $G=F^{\odot 2}$ or $H=F^{\odot 3}$.
When $N\to \infty$, the classical bound is $\beta=\lambda(F)=\frac{\lambda(G)}{2}=\frac{\lambda(H)}{3}$.

\subsection{The TI-2 local polytope from cycles in the De Bruijn graph}
\label{subsec:DBgraphTINNN}

In \Cref{subsec:graphTINN}, LDSs are represented as nodes on a complete graph $K_{2^m}$, and for each edge $(s_i,s_{i+1})$ we can compute a projected correlator vector $\corrvecNoOneNBody{_{[R=1]}}_{(s_i,s_{i+1})}$. 
In this section, we extend this representation to the TI-$2$ case. 

The projected correlator vector $\corrvecNoOneNBody{_{[R=2]}}_{(s_i,s_{i+1},s_{i+2})}$ depends on three strategies.
Thus, we consider the so-called \emph{De Bruijn graph} $\Gamma_{\db(2^m,2)}$ instead of the complete graph.
The nodes of $\Gamma_{\db(2^m,2)}$ are labeled by a pair of strategies $(s,t)$, and the edges are labeled by triples of strategies, where the edge $(s,t,u)$ links the node $(s,t)$ with the node $(t,u)$. 
In other words, there is an (oriented) edge between two nodes $(s,t)$ and $(t',u)$ if and only if $t=t'$. 
An example of the De Bruijn graph $\Gamma_{\db(2,2)}$ is given in \Cref{fig:ex_DB_graph}. 
It can be thought as a game of dominoes: the node $(0,1)$ can only be followed by a node whose first digit is $1$, i.e.\ either the node $(1,0)$ or $(1,1)$.
Closed paths on a De Bruijn graph can be seen as domino loops, e.g.\ the sequence of nodes $(0,1),(1,0),(0,0)$ forms a closed path on $\Gamma_{\db(2,2)}$.

\begin{figure}[h!]
  \centering
    \includegraphics[width=\columnwidth]{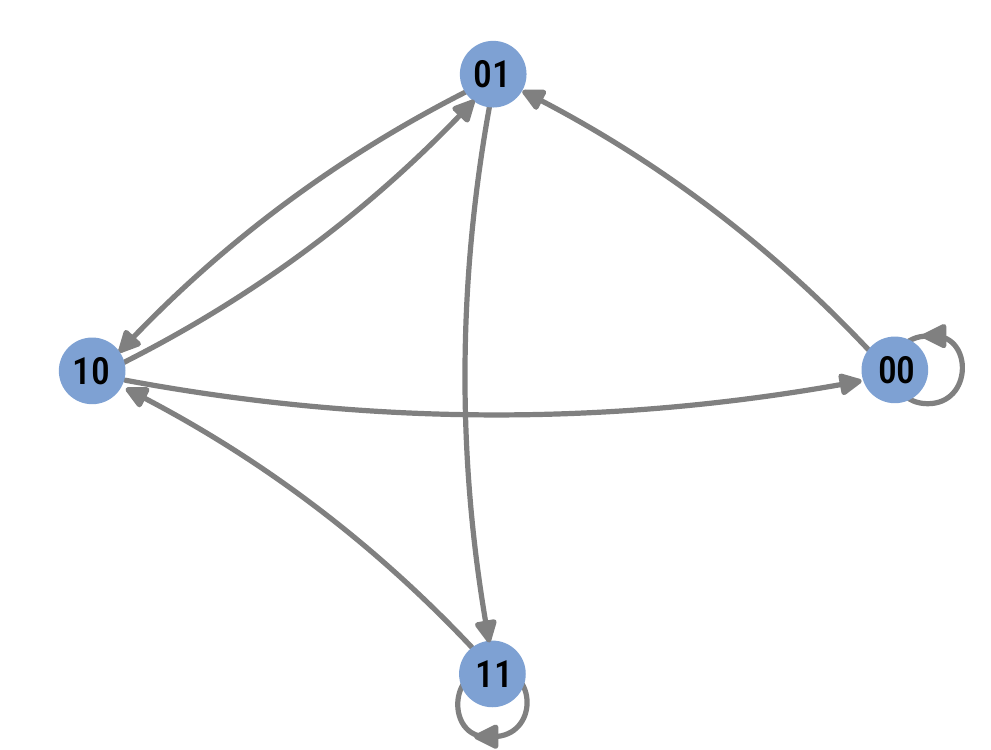}
    \caption{An example of a De Bruijn  graph $\Gamma_{\db(2,2)}$. In a path, the node $(0,1)$ can only be followed by either node $(1,0)$ or node $(1,1)$. The last digit of the node $(0,1)$ must match the first digit of the following node.}   
    \label{fig:ex_DB_graph}
\end{figure}

This same construction can be done for any interaction range $R$: the nodes of the De Bruijn graph $\Gamma_{\db(2^m,R)}$ are labeled by $R$-tuples of strategies and the edges labeled by $R+1$-tuples, where the edge $(s_0,\ldots,s_R)$ links the nodes $\mu=(s_0,\ldots,s_{R-1})$ and $\nu=(s_1,\ldots,s_R)$. We will occasionally denote this edge simply by $(\mu,\nu)$.  
As a specific case, the De Bruijn graph with interaction range $R=1$ is equal to the complete graph, i.e.\  $\Gamma_{\db(2^m,1)} = K_{2^m}$. 

To every edge $(\mu,\nu)=(s_0,\ldots,s_R)$ of the De Bruijn graph, we associate the correlator vector
\begin{equation}
\label{eq:def_corvec_edge}
\corrvecNoOneNBody{_{[R]}}_{(\mu,\nu)} := \corrvecNoOneNBody{_{[R]}}_{(s_0,\ldots,s_{R})},
\end{equation}
as defined in \Cref{eq:LDS_TINNN_correvec} for the case $R=2$ and \Cref{eq:LDS_TIR_correvec} for general $R$.

To connect with the previous subsection: the entries of the matrix $F(\boldsymbol{\alpha})$ in \Cref{eq:tensorF} are nonzero on the positions corresponding to the edges of the De Bruijn graph, and they are obtained by applying the linear map $\boldsymbol{\alpha}$ to \Cref{eq:def_corvec_edge}. This tells us how to generalize Method 1 from \Cref{eq:tensorF} to larger interaction range: if we define a $2^{Rm} \times 2^{Rm}$ matrix $F(\boldsymbol{\alpha})$ with entries 
\begin{equation}
    \label{eq:entry_F_general}
    F(\boldsymbol{\alpha})_{(\mu,\nu)} = \begin{cases}
        \boldsymbol{\alpha} \cdot \corrvecNoOneNBody{_{[R]}}_{(\mu,\nu)} & \text{if } (\mu, \nu) \in \Gamma_{\db(2^{m},R)},\\
        \infty & \text{otherwise,}
    \end{cases}
\end{equation}
then Method 1 generalizes mutatis mutandis.

Similarly to \Cref{subsec:graphTINN}, the TI-$R$ local polytope can be related to the normalized closed path polytope $p_N(\Gamma)$ and normalized cycle polytope $p_*(\Gamma)$ defined in \Cref{eq:definition_pN,eq:definition_pstar}, where $\Gamma$ is taken to be the De Bruijn graph $\Gamma_{\db(2^m,R)}$.
An edge $(\mu,\nu)$ of $\Gamma_{\db(2^m,R)}$ can be associated to $R+1$ strategies $(s_0,\dots,s_R)$.
The TI-$R$ local polytope is related to $p_N(\Gamma_{\db(2^m,R)})$ by the linear projection 
\begin{equation}
\label{eq:map_weight_matrix}
\Phi(w) = \sum_{\mu,\nu = 0}^{n-1} w_{\mu,\nu} \corrvecNoOneNBody{_{[R]}}_{(\mu,\nu)}  ,
\end{equation}
where $n=2^{Rm}$ is the number of nodes in the De Bruijn graph $\Gamma_{\db(2^m,R)}$.
This is the direct generalization of \Cref{eq:weight matrix_to_corrvec}.
Analogously, we define $\localpoly{(*,m)}_{\TIRproj} := \Phi[p_*(\Gamma_{\db(2^m,R)})]$ which is explicitly equal to
\begin{equation}
\label{eq:TINNN_star_convex_hull}
\Conv\left\{\frac{1}{\ell(\boldsymbol{c})}\sum_{i=0}^{\ell(\boldsymbol{c})-1} \corrvecNoOneNBody{_{[R]}}_{(c_i,\dots,c_{i+R})} \mid \boldsymbol{c} \in \mathcal{C} \right\},
\end{equation}
where $\mathcal{C}$ is the set of simple cycles on $\Gamma_{\db(2^m,R)}$ and $\ell(\boldsymbol{c})$ denotes the length of the simple cycle $\boldsymbol{c}$.
These constructions allow us to extend the result from \Cref{subsec:graphTINN} to general $R$.

\begin{theorem}
\label{thm:general_DB_vertices}
The TI-$R$ local polytope $\localpoly{(N,m)}_{\TIRproj}$ is contained in the polytope $\localpoly{(*,m)}_{\TIRproj}$. 
If $N$ is divisible by  $\lcm{1,\dots,n}$, where $n=2^{Rm}$ is the number of nodes in the corresponding de Bruijn graph $\Gamma_{\db(2^m,R)}$, 
then the inclusion becomes an equality.
\end{theorem}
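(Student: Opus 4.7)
The strategy is to reduce the statement to the purely graph-theoretic fact about the polytopes $p_N(\Gamma)$ and $p_*(\Gamma)$ attached to the De Bruijn graph $\Gamma := \Gamma_{\db(2^m,R)}$, mirroring the proof in the $R=1$ case. Combining the definitions in \Cref{eq:definition_pN,eq:definition_pstar} with the linear projection $\Phi$ of \Cref{eq:map_weight_matrix} and the expression in \Cref{eq:TINNN_star_convex_hull}, one reads off
\begin{equation*}
\localpoly{(N,m)}_{\TIRproj} = \Phi(p_N(\Gamma)), \qquad \localpoly{(*,m)}_{\TIRproj} = \Phi(p_*(\Gamma)).
\end{equation*}
Linear maps preserve both inclusions and equalities of convex sets, so both assertions of the theorem follow from the analogous graph statement: $p_N(\Gamma) \subseteq p_*(\Gamma)$ for every $N$, with equality whenever $\lcm{1,\dots,n}$ divides $N$. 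This is exactly \Cref{thm:periodicityPolytope} in \Cref{app:graph_theory}, so on the body of the paper the proof reduces to invoking that appendix result and applying $\Phi$.

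For completeness I would sketch the graph statement as follows. The containment $p_N(\Gamma) \subseteq p_*(\Gamma)$ rests on the classical fact that the edge-multiset of a closed path $\boldsymbol{s}$ of length $N$ decomposes into simple cycles $\boldsymbol{c}_1,\dots,\boldsymbol{c}_K$, giving $N \cdot w(\boldsymbol{s}) = \sum_k \ell(\boldsymbol{c}_k)\, w(\boldsymbol{c}_k)$ with $\sum_k \ell(\boldsymbol{c}_k) = N$, which exhibits $w(\boldsymbol{s})$ as a convex combination of vertices of $p_*(\Gamma)$. For the reverse inclusion under the divisibility hypothesis, every simple cycle satisfies $\ell(\boldsymbol{c}) \le n$, hence $\ell(\boldsymbol{c})$ divides $\lcm{1,\dots,n}$ and therefore $N$; concatenating $N/\ell(\boldsymbol{c})$ copies of $\boldsymbol{c}$ produces a closed path of length $N$ with the same normalized weight matrix. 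A general rational point of $p_*(\Gamma)$ is then obtained by splicing together prescribed integer numbers of copies of several simple cycles into one closed path of length exactly $N$, using the strong connectivity of the De Bruijn graph to bridge between cycles based at different nodes; irrational points follow by the topological closedness of $p_N(\Gamma)$.

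The main obstacle is precisely this last splicing step: realising an arbitrary rational convex combination of simple-cycle weight matrices as the weight matrix of a single closed path of length exactly $N$, rather than merely approximating it. The divisibility condition $\lcm{1,\dots,n} \mid N$ is what makes the integer multiplicities consistent, while the strong connectivity of $\Gamma_{\db(2^m,R)}$ (which is automatic since every $(R{-}1)$-tuple of strategies extends to every $R$-tuple) is what permits the concatenation in the first place. Once the appendix statement is established, transporting it along $\Phi$ is immediate and yields both conclusions of the theorem.
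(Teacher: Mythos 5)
Your reduction is exactly the paper's: identify $\localpoly{(N,m)}_{\TIRproj}=\Phi(p_N(\Gamma_{\db(2^m,R)}))$ and $\localpoly{(*,m)}_{\TIRproj}=\Phi(p_*(\Gamma_{\db(2^m,R)}))$, invoke \Cref{thm:periodicityPolytope}, and push the inclusion/equality through the linear map. The containment $p_N\subseteq p_*$ via cycle decomposition also matches the appendix argument. The one place you go astray is the reverse inclusion: you declare the ``main obstacle'' to be splicing an arbitrary rational convex combination of simple-cycle weight matrices into a \emph{single} closed path of length exactly $N$. That step is not needed, because $p_N(\Gamma)$ is defined in \Cref{eq:definition_pN} as a \emph{convex hull}: once the repetition argument shows that each vertex $w(\boldsymbol{c})$ of $p_*(\Gamma)$ lies in $p_N(\Gamma)$ (which you do correctly, using $\ell(\boldsymbol{c})\le n$ so that $\ell(\boldsymbol{c})\mid N$), convexity of $p_N(\Gamma)$ immediately gives $p_*(\Gamma)=\Conv\{w(\boldsymbol{c})\}\subseteq p_N(\Gamma)$. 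This is precisely how the appendix proof concludes. Worse, the splicing claim is false as literally stated: bridging between two simple cycles that share no node forces the path to traverse connector edges, so the resulting weight matrix is \emph{not} the prescribed convex combination (this is exactly why the paper's \Cref{eq:simple_cycle_decomposition} only runs in one direction, subject to a connectivity condition). Since the step is superfluous, your proof is correct once that paragraph is deleted; but as written it attributes the burden of the proof to an argument that would not go through.
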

\begin{proof}
Follows from \Cref{thm:periodicityPolytope} and the fact that $\localpoly{(N,m)}_{\TIRproj}$ and $\localpoly{(*,m)}_{\TIRproj}$ are projections of $p_N(\Gamma_{\db(2^m,R)})$ and $p_*(\Gamma_{\db(2^m,R)})$.
\end{proof}

From now on, we say $N$ is \emph{divisible enough} if it is divisible by $\lcm{1,\dots,n}$, where $n$ is the number of nodes in the corresponding De Bruijn graph. 
The formulation in terms of De Bruijn graphs reproduces the result from~\cite{wang_entanglement_2017} where it is shown that TI-$R$ Bell inequalities for $N$ divisible enough can be obtained using irreducible domino loops. 
As an example, the irreducible domino loop $(2,4,3),(4,3,1),(3,1,2),(1,2,4)$ from~\cite[Supplemental Material, Fig. 1]{wang_entanglement_2017} is a simple cycle of length $4$ in the De Bruijn graph $\Gamma_{\db(6,3)}$.  
It can be written in short as $(2,4,3,1)$. 

In the specific case of $m=R=2$, the de Bruijn graph $\Gamma_{\db}(4,2)$ has $120538$ simple cycles, which correspond to the vertices of the normalized cycle polytope $p_*(\Gamma_{\db}(4,2))$.
The local polytope $\localpoly{(*,2)}_{\TINNNproj}$ is the projection of $p_*(\Gamma_{\db}(4,2))$ under the map $\Phi$ from \Cref{eq:map_weight_matrix}; it is the convex hull of $120538$ correlator vectors in $\mathbb{R}^{10}$ as in \Cref{eq:TINNN_star_convex_hull}. It turns out that several simple cycles can give rise to the same correlator vector; the number of distinct vectors is $26213$. Some of these will be vertices of $\localpoly{(*,2)}_{\TINNNproj}$, while others lie on higher-dimensional faces. Using the software PANDA \cite{lorwald_panda_2015}, it is possible to compute the facets and vertices; there are $32372$ facets and $2796$ vertices. Using the symmetries detailed in \Cref{app:symmetry_class}, the facets can be grouped into $2102$ classes~\cite{wang_entanglement_2017} and the vertices into $216$ classes.
For $N$ divisible enough, i.e.\ divisible by $\text{lcm}\{1,\dots,16\}=720720$, $\localpoly{(N,2)}_{\TINNNproj}$ is equal to the polytope $\localpoly{(*,2)}_{\TINNNproj}$ we just described.
Our method also has the advantage of extending the formulation of the TI-$2$ local polytope in terms of the De Bruijn graph to any $N$ (no conditions on the divisibility): then we need to consider closed paths of fixed length $N$ instead of simple cycles.

\section{Vertices for finitely many sites}
\label{sec:vertices_projected_poly}

In this section we analyze the normalized closed path polytopes $p_N(\Gamma)$, and the local polytopes $\localpoly{(N,m)}_{\TIRproj}$, for finite values of $N$. Our first main result is that surprisingly, the number of vertices is bounded by a constant that does not depend on $N$. 

Next, we describe an algorithm that computes the vertices of $p_N(\Gamma)$, and hence of $\localpoly{(N,m)}_{\TIRproj}$, for all values of $N$ at the same time. We deduce from it that $\localpoly{(N,2)}_{\Pi_1}$ has at most $200$ vertices for any $N$.
As a notation convention, we denote $n$ for the number of nodes in the graph $\Gamma$, and $\mathcal{C}$ for the set of simple cycles.

\subsection{Uniform bound for any \texorpdfstring{$N$}{N}}

As a reminder, the polytope $p_N(\Gamma)$ is the convex hull of all the normalized weight matrices of closed paths of length exactly $N$ in $\Gamma$. 
Since a closed path $\boldsymbol{s}$ can be decomposed into simple cycles, its weight matrix can be written as
\begin{equation}\label{eq:simple_cycle_decomposition}
w(\boldsymbol{s}) = \frac{1}{N} \sum_{\boldsymbol{c}' \in \mathcal{C}} a_{\boldsymbol{c}'} W(\boldsymbol{c}'),
\end{equation}
where the $a_{\boldsymbol{c}'}$ are natural numbers and 
$\sum_{\boldsymbol{c}' \in \mathcal{C}}{\ell(\boldsymbol{c}')a_{\boldsymbol{c}'}} = N$, with $\ell(\boldsymbol{c}')$ being the length of the simple cycle. 
Going in the other direction, the right-hand-side of \Cref{eq:simple_cycle_decomposition} 
corresponds to the weight matrix of a closed path, only if the union of the simple cycles in $\{\boldsymbol{c}' \in \mathcal{C} \mid a_{\boldsymbol{c}'} \neq 0\}$ is connected.

\begin{claim} \label{claim:rough_bound}
    If there exist $a_{\boldsymbol{c}'}$ and $a_{\boldsymbol{c}''}$ both larger than the number of nodes $n$ in $\Gamma$, then $w(\boldsymbol{s})$ as defined in \Cref{eq:simple_cycle_decomposition} is not a vertex of $p_N(\Gamma)$.
\end{claim}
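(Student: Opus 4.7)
The plan is to show that $w(\boldsymbol{s})$ lies strictly between two other points of $p_N(\Gamma)$, so it cannot be a vertex. The idea is a ``swap'' argument: trade $\ell(\boldsymbol{c}'')$ copies of $\boldsymbol{c}'$ for $\ell(\boldsymbol{c}')$ copies of $\boldsymbol{c}''$ (and vice versa). This keeps the total length equal to $N$, and the hypothesis $a_{\boldsymbol{c}'}, a_{\boldsymbol{c}''} > n$ (together with $\ell(\boldsymbol{c}'), \ell(\boldsymbol{c}'') \leq n$) ensures that the resulting multiplicities remain positive.

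Concretely, I would define
\begin{equation*}
w_{\pm} := w(\boldsymbol{s}) \pm \tfrac{1}{N}\bigl(\ell(\boldsymbol{c}'') W(\boldsymbol{c}') - \ell(\boldsymbol{c}') W(\boldsymbol{c}'')\bigr),
\end{equation*}
which correspond to modifying only the coefficients of $\boldsymbol{c}'$ and $\boldsymbol{c}''$ in the decomposition \Cref{eq:simple_cycle_decomposition}. By construction $w(\boldsymbol{s}) = \tfrac{1}{2}(w_+ + w_-)$, so it suffices to verify (i) $w_+, w_- \in p_N(\Gamma)$, and (ii) $w_+ \neq w_-$.

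For (i) I would check three things: the new coefficients are nonnegative integers (this is where the bound $a_{\boldsymbol{c}'}, a_{\boldsymbol{c}''} > n \geq \ell(\boldsymbol{c}''), \ell(\boldsymbol{c}')$ is used), the total weighted length is preserved (a direct computation since $\ell(\boldsymbol{c}'')\ell(\boldsymbol{c}') - \ell(\boldsymbol{c}')\ell(\boldsymbol{c}'') = 0$), and the support of the modified weight matrix still corresponds to a closed path. The Eulerian balance condition is automatic because any integer combination of simple-cycle weight matrices has equal in-degree and out-degree at every node. Connectivity of the support follows because neither $\boldsymbol{c}'$ nor $\boldsymbol{c}''$ is entirely removed from the decomposition (their coefficients remain strictly positive), so the support of $w_{\pm}$ equals the support of $w(\boldsymbol{s})$, which is strongly connected since $\boldsymbol{s}$ itself is a closed path.

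For (ii), the inequality $w_+ \neq w_-$ is equivalent to $w(\boldsymbol{c}') \neq w(\boldsymbol{c}'')$, i.e.\ the two normalized weight matrices are distinct. Since a simple cycle in a directed graph is determined by its edge set, distinct simple cycles cannot share their weight matrices; if they had different lengths the normalized weight matrices are distinguished by their nonzero entries ($1/\ell(\boldsymbol{c}')$ vs.\ $1/\ell(\boldsymbol{c}'')$), and if they had equal lengths they would have to use the same edges, hence coincide. I expect the main subtle point to be the strong-connectedness verification in step (i); phrasing everything at the level of supports (rather than tracking individual edges of $\boldsymbol{c}'$ and $\boldsymbol{c}''$ case by case) makes this step short and clean.
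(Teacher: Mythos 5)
Your proposal is correct and follows essentially the same route as the paper: the paper performs exactly this swap, replacing $(a_{\boldsymbol{c}'},a_{\boldsymbol{c}''})$ by $(a_{\boldsymbol{c}'}\pm\ell(\boldsymbol{c}''),\,a_{\boldsymbol{c}''}\mp\ell(\boldsymbol{c}'))$, checks integrality, nonnegativity, preservation of total length, and connectedness of the support, and then writes $w(\boldsymbol{s})$ as the midpoint of the two resulting points. Your explicit verification that $w_+\neq w_-$ (via distinctness of the normalized weight matrices of distinct simple cycles) is a detail the paper leaves implicit, but otherwise the arguments coincide.
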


\begin{proof}
Suppose that $a_{\boldsymbol{c}'}, a_{\boldsymbol{c}''}>n$. 
We define 
\begin{equation}
a^{(1)}_{\boldsymbol{\gamma}} := 
    \begin{cases}
        a_{\boldsymbol{c}'} + \ell(\boldsymbol{c}'') ,& \text{if } \boldsymbol{\gamma} = \boldsymbol{c}', \\
        a_{\boldsymbol{c}''} - \ell(\boldsymbol{c}') ,& \text{if } \boldsymbol{\gamma} = \boldsymbol{c}'', \\
        a_{\boldsymbol{\gamma}} & \text{otherwise.}
    \end{cases}
\end{equation}
and $a^{(2)}$ analogously with the plus and minus switched. 
It is straightforward to verify that $a^{(i)}_{\boldsymbol{\gamma}}$ are natural numbers for all $\boldsymbol{\gamma} \in \mathcal{C}$ and $\sum_{\boldsymbol{\gamma}} a^{(i)}_{\boldsymbol{\gamma}} \ell(\boldsymbol{\gamma}) = N$.
Moreover, $a_{\boldsymbol{\gamma}}$ and  $a^{(i)}_{\boldsymbol{\gamma}}$ are non-zero for exactly the same $\boldsymbol{\gamma}$, thus the union of the simple cycles $\{\boldsymbol{\gamma} \mid a^{(i)}_{\boldsymbol{\gamma}} \neq 0\}$ is connected, since the union of simple cycles in $\boldsymbol{s}$ is connected.

Therefore, there exist two closed paths $\boldsymbol{s}^{(1)}$ and $\boldsymbol{s}^{(2)}$ of the same length $N$ such that 
\begin{equation} \label{eq:simple_cycle_decomposition_proof}
w(\boldsymbol{s}^{(i)}) = \frac{1}{N} \sum_{\boldsymbol{\gamma} \in \mathcal{C}} a_{\boldsymbol{\gamma}}^{(i)} W(\boldsymbol{\gamma})
\end{equation}
for $i=1,2$.

But now $w(\boldsymbol{s})=\frac{1}{2}\left(w(\boldsymbol{s}^{(1)})+w(\boldsymbol{s}^{(2)})\right)$, which means that $w(\boldsymbol{s})$ is in the convex hull of two other points in $p_N(\Gamma)$, i.e.\ it is not a vertex.
\end{proof}
As a corollary, we obtain the following theorem:
\begin{theorem}
    For any graph $\Gamma$ with $n$ nodes and $|\mathcal{C}|$ simple cycles, the polytope $p_{N}(\Gamma)$ has at most $|\mathcal{C}|\cdot(n+1)^{|\mathcal{C}|-1}$ vertices. 
\end{theorem}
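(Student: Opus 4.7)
The plan is to derive the theorem directly from \Cref{claim:rough_bound} via a counting argument. By the definition of $p_N(\Gamma)$ in \Cref{eq:definition_pN}, the polytope is the convex hull of the weight matrices $w(\boldsymbol{s})$ of closed paths of length $N$, so every vertex must itself be of the form $w(\boldsymbol{s})$ for some closed path $\boldsymbol{s}$. Decomposing $\boldsymbol{s}$ into simple cycles as in \Cref{eq:simple_cycle_decomposition} yields non-negative integer coefficients $(a_{\boldsymbol{c}})_{\boldsymbol{c} \in \mathcal{C}}$ with $\sum_{\boldsymbol{c} \in \mathcal{C}} \ell(\boldsymbol{c}) a_{\boldsymbol{c}} = N$. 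By the contrapositive of \Cref{claim:rough_bound}, if $w(\boldsymbol{s})$ is a vertex of $p_N(\Gamma)$, then at most one of these coefficients exceeds $n$.

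Given this structural restriction, it suffices to bound the number of non-negative integer tuples $(a_{\boldsymbol{c}})_{\boldsymbol{c} \in \mathcal{C}}$ which satisfy the length constraint and have at most one entry larger than $n$. I would designate an ``exceptional'' cycle $\boldsymbol{c}^{\ast} \in \mathcal{C}$ whose coefficient is allowed to be arbitrary; this choice contributes a factor of $|\mathcal{C}|$. For each of the remaining $|\mathcal{C}|-1$ simple cycles, the coefficient lies in $\{0, 1, \ldots, n\}$, giving $(n+1)^{|\mathcal{C}|-1}$ combinations. Once these are fixed, the constraint $\sum_{\boldsymbol{c}} \ell(\boldsymbol{c}) a_{\boldsymbol{c}} = N$ forces $a_{\boldsymbol{c}^{\ast}}$ to take a single specific value (or renders the tuple infeasible, in which case it contributes nothing). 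The total count is therefore at most $|\mathcal{C}| \cdot (n+1)^{|\mathcal{C}|-1}$.

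Because every vertex of $p_N(\Gamma)$ arises from at least one such admissible tuple via its closed-path decomposition, this tuple count upper-bounds the number of vertices, which finishes the argument. I do not expect a substantial obstacle here: the conceptually nontrivial step is already packaged in \Cref{claim:rough_bound}, and what remains is elementary combinatorics. The only minor subtleties to acknowledge are that (i) a single weight matrix may admit several simple-cycle decompositions, so the tuple-to-vertex map is not injective, and (ii) tuples in which every coefficient is at most $n$ get counted up to $|\mathcal{C}|$ times under the ``exceptional cycle'' convention. Both effects only inflate the enumeration and therefore preserve the claimed upper bound without any adjustment.
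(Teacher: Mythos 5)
Your proposal is correct and follows essentially the same route as the paper: invoke \Cref{claim:rough_bound} to conclude that a vertex admits a simple-cycle decomposition with at most one coefficient exceeding $n$, then count such tuples as $|\mathcal{C}|$ choices of the exceptional cycle times $(n+1)^{|\mathcal{C}|-1}$ choices for the remaining coefficients. Your explicit remark that the length constraint $\sum_{\boldsymbol{c}}\ell(\boldsymbol{c})a_{\boldsymbol{c}}=N$ pins down the exceptional coefficient is a detail the paper leaves implicit, but the argument is the same.
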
 
\begin{proof}
    There are at most $|\mathcal{C}|\cdot(n+1)^{|\mathcal{C}|-1}$ ways to choose the coefficients $a_{\boldsymbol{c}'}$ in \Cref{eq:simple_cycle_decomposition} without two of them being larger than $n$: $|\mathcal{C}|$ choices for which of the coefficients we allow to be large, and $(n+1)^{|\mathcal{C}|-1}$ choices for the values of the other coefficients.
\end{proof}

 Since applying a linear map can only decrease the number of vertices of a polytope, we immediately obtain:
 \begin{cor}
    \label{corr:upper_bound_cycles}
     The projected local polytope $\localpoly{(N,m)}_{\TINNproj}$ has at most $C\cdot(2^m+1)^{C-1}$ vertices, where $C=|\mathcal{C}(K_{2^m})|\approx e(2^m-1)!$ is the number of simple cycles in the complete graph $K_{2^m}$ (crf.\ \Cref{eq:logarithmicnumber}).
 \end{cor}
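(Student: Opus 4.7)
The plan is to simply specialize the preceding theorem to the case $\Gamma = K_{2^m}$ and then push the bound through the linear projection $\Phi$ introduced in \Cref{eq:weight matrix_to_corrvec}. Since the corollary is a one-line consequence of machinery already built up, there is no real obstacle; the work lies in stating the two facts in the right order.

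First, I would invoke the theorem above with the graph $\Gamma = K_{2^m}$. This graph has $n = 2^m$ nodes and, by the logarithmic-number count in \Cref{eq:logarithmicnumber}, exactly $C = |\mathcal{C}(K_{2^m})|$ simple cycles. Substituting $n = 2^m$ and $|\mathcal{C}| = C$ into the bound $|\mathcal{C}|\cdot (n+1)^{|\mathcal{C}|-1}$ yields that $p_N(K_{2^m})$ has at most $C \cdot (2^m+1)^{C-1}$ vertices for every $N$.

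Second, I would recall from \Cref{subsec:graphTINN} that $\localpoly{(N,m)}_{\TINNproj}$ is obtained as the image $\Phi\bigl(p_N(K_{2^m})\bigr)$ under the linear map $\Phi$ of \Cref{eq:weight matrix_to_corrvec}. It is a standard fact in polyhedral geometry that every vertex of $\Phi(P)$ is the image of some vertex of $P$: indeed, if $v \in \Phi(P)$ is a vertex, then its preimage $\Phi^{-1}(v) \cap P$ is a nonempty face of $P$, which has at least one vertex of $P$ in it, and that vertex maps to $v$. Consequently, a linear projection cannot increase the number of vertices, so the bound on $p_N(K_{2^m})$ descends verbatim to $\localpoly{(N,m)}_{\TINNproj}$.

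Combining the two steps gives the claimed bound of $C\cdot(2^m+1)^{C-1}$, uniform in $N$, and the asymptotic $C \approx e(2^m-1)!$ is just the evaluation from \Cref{eq:logarithmicnumber}. The only tiny subtlety worth flagging explicitly is the projection-decreases-vertices fact; everything else is bookkeeping.
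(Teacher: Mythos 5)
Your proposal is correct and follows exactly the paper's route: specialize the preceding theorem to $\Gamma = K_{2^m}$ (so $n = 2^m$, $|\mathcal{C}| = C$) and then use the fact that the linear projection $\Phi$ from \Cref{eq:weight matrix_to_corrvec} cannot increase the number of vertices. Your explicit justification of that last fact (via preimages of vertices being faces) is a welcome addition to what the paper states without proof, but the argument is the same.
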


For instance, for $\localpoly{(N,2)}_{\TINNproj}$ we need to consider the complete graph $K_4$, which has $4$ vertices and $24$ simple cycles. So we can conclude that the number of vertices of $\localpoly{(N,2)}_{\TINNproj}$ is at most $24\cdot5^{23} \sim 3 \cdot 10^{17}$. While this is an astronomically large number, it is a constant independent of $N$. In the next section, we will do a more careful analysis which will lower this bound by several orders of magnitude.

The same result as \Cref{corr:upper_bound_cycles} also holds for larger interaction range $R$. 
The number of vertices of $\localpoly{(N,m)}_{\TIRproj}$ is upper bounded by $C\cdot(2^{Rm}+1)^{C-1}$, where $C$ is the number of simple cycles in the De Bruijn graph $\Gamma_{\db(2^m,R)}$.

\subsection{Computing the vertices of \texorpdfstring{$p_N(\Gamma)$}{P\_N(G)}}

In what follows, we will outline an algorithm to describe the vertices of the polytope $p_N(\Gamma)$. 
Applying it to the complete graph $K_4$ will give the vertices of the TINN local polytope in the case $m=2$ and finite $N$. 

Since the algorithm is rather technical we will only give the main ideas and refer to \Cref{app:graph_theory} for the full description.

In the previous section, we have seen that the vertices of $p_N(\Gamma)$ are given by closed path $\boldsymbol{s}$ of length $N$ [cf. \Cref{eq:definition_pN}] and they are of the form \Cref{eq:simple_cycle_decomposition}, 
where at most one of the coefficients $a_{\boldsymbol{c}'}$ can be larger than $n$. Letting $a_{\boldsymbol{c}}$ be this large coefficient, we can rewrite \Cref{eq:simple_cycle_decomposition} as
\begin{equation} \label{eq:vertices_of_W_N}
   w(\boldsymbol{s}) = w(\boldsymbol{c})+\frac{1}{N}v,
\end{equation}
where $v=\sum_{\boldsymbol{c}' \neq \boldsymbol{c}}a_{\boldsymbol{c}'}\ell(\boldsymbol{c}')(w(\boldsymbol{c}')-w(\boldsymbol{c}))$ is a correction term, whose entries are small. 
Since $w(\boldsymbol{c})$ are the vertices of the normalized cycle polytope $p_*(\Gamma)$, we can interpret this as saying that each vertex of $p_N(\Gamma)$ lies close to a vertex of $p_*(\Gamma)$, with the distance scaling as $\frac{1}{N}$.

We can now fix $\boldsymbol{c}$ and consider for every $N$ the set $\operatorname{err}_{\Gamma}(\boldsymbol{c},N) \subset \mathbb{R}^{n \times n}$ of correction vectors $v$. The crucial fact is that this sequence in $N$ becomes eventually periodic with period $\ell(\boldsymbol{c})$: for $N$ sufficiently large, we have $\operatorname{err}_{\Gamma}(\boldsymbol{c},N) =\operatorname{err}_{\Gamma}(\boldsymbol{c},N+\ell(\boldsymbol{c}))$. This is the content of \Cref{thm:exactVertices} in \Cref{app:graph_theory}. It means that \emph{we can list the vertices of the infinite collection of polytopes $p_N(\Gamma)$ (for $N$ sufficiently large)  using only a finite amount of data:} 
for each simple cycle $\boldsymbol{c}$, and each value $r=0,\ldots,\ell(\boldsymbol{c})-1$, a finite list $\operatorname{err}_{\Gamma}(\boldsymbol{c},r) \subset \mathbb{R}^{n \times n}$. The vertices of $p_N(\Gamma)$ are then of the form \eqref{eq:vertices_of_W_N}, where $v \in \operatorname{err}_{\Gamma}(\boldsymbol{c},  N \operatorname{mod} \ell(\boldsymbol{c}))$. 

In \Cref{app:graph_theory} we prove the aforementioned \Cref{thm:exactVertices}, and moreover give an algorithm to compute the ``error sets" $\operatorname{err}_{\Gamma}(\boldsymbol{c},r)$. The main idea is to view the rescaled polytope $N\cdot p_{N}(\Gamma)$ as the convex hull of a set of lattice points in $N\cdot p_{*}(\Gamma)$. In order to restrict our attention to what happens around a vertex of $N\cdot p_{*}(\Gamma)$, we consider the tangent cone at that vertex. The error set we want to compute is then encoded by the vertices of a certain unbounded polyhedron inside that cone. This polyhedron is given as the convex hull of an explicit but infinite set of points, and only depends on the value of $N$ modulo $\ell(\boldsymbol{c})$. We can reduce to a finite computation using arguments similar to \Cref{claim:rough_bound}, and speed up the computation by triangulating the tangent cone.

In the case of the four nodes complete graph, the number of error points for each simple cycle $\boldsymbol{c}_0$ is summarized in \Cref{tab:errorTerms1}. From this we can compute the number of vertices of $p_N(K_4)$ and $\localpoly{(N,4)}_{\TINNproj}$ for every value $N$; these numbers only depend on the remainder of $N$ modulo $12$, and can be found in \Cref{tab:errorTerms3}. 
In particular, we see that the $p_N(K_4)$ has at most $424$ vertices, and $\localpoly{(N,4)}_{\TINNproj}$ has at most $200$, where both bounds are achieved at for instance $N=19$.

\section{Renormalization of tropical tensors as a method to design Bell inequalities}
\label{sec:renormTN}

In this section, we link our tropical algebra setup to the concept of renormalization~\cite{wilson_renormalization_1974}: the description and study of system changes under rescaling. 
We say the Bell inequality given by a coefficient vector $\boldsymbol{\alpha}$ is invariant under scaling if 
\begin{equation}
    \label{eq:renorm_F_2_def}
    F(\boldsymbol{\alpha})^{\odot 2} = \lambda \odot F(\boldsymbol{\alpha}). 
\end{equation}
Physically, this means that the system behaves in a predictable way if we group the parties pairwise. 
In \Cref{subsec:tropical_renorm}, we present a method to find all Bell inequalities satisfying \Cref{eq:renorm_F_2_def}. 

Note that \Cref{eq:renorm_F_2_def} is the special case of the stabilization equation \Cref{eq:stabilization} where $(\sigma,N_0)=(1,1)$. 
More generally, we can consider the stabilization numbers $(\sigma,N_0)$ associated to the matrix $F(\boldsymbol{\alpha})$ of a given Bell inequality. 
In light of \Cref{eq:tropTrIsLambda}, the classical bound $\beta$ is equal to the eigenvalue $\lambda(F)$ whenever the number of parties is at least $N_0$ and divisible by $\sigma$. 
In \Cref{subsec:tropical_power_num}, we classify the tight TI-2 Bell inequalities by their stabilization numbers. 
Furthermore, we investigate how these numbers are related to quantum violation of the given inequality.

\subsection{Finding stabilized TINN Bell inequalities}
\label{subsec:tropical_renorm}
In this section, we analytically construct TINN Bell inequalities satisfying the stabilization property in \Cref{eq:renorm_F_2_def}, 
where $F(\boldsymbol{\alpha})$ is as defined in \Cref{eq:encodeF_TINN} and $\lambda$ is the eigenvalue of $F$.
Note this is equivalent to \Cref{eq:stabilization} with stabilization numbers $\sigma = N_0 = 1$.

Finding stabilized Bell inequalities means to solve~\Cref{eq:renorm_F_2_def} for $\boldsymbol{\alpha}$ and $\lambda$. 
If we define $F'(\boldsymbol{\alpha},\lambda)$ to be the matrix with entries $F_{i,j}-\lambda$, 
then this equation is equivalent to 
\begin{equation} \label{eq:trop_renorm_f2} 
F'(\boldsymbol{\alpha},\lambda)^{\odot 2} = F'(\boldsymbol{\alpha},\lambda). 
\end{equation}
The condition in \Cref{eq:trop_renorm_f2} can be spelled out as
\begin{align}
    \label{eq:lp_min}
    \min_{k} F'_{i,k}(\boldsymbol{\alpha},\lambda)+F'_{k,j}(\boldsymbol{\alpha},\lambda)=F'_{i,j}(\boldsymbol{\alpha},\lambda), \quad \forall i,j. 
\end{align}
We can first consider a relaxation of this problem: the solutions of \Cref{eq:lp_min} satisfy the linear constraints
\begin{align}\label{eq:linearConstraints}
F_{i,k}(\boldsymbol{\alpha},\lambda)+F_{k,j}(\boldsymbol{\alpha},\lambda) \leq F_{i,j}(\boldsymbol{\alpha},\lambda), \quad \forall i,j,k.
\end{align}
The set of pairs $(\boldsymbol{\alpha},\lambda)$, for which \Cref{eq:linearConstraints} holds, forms a polyhedral cone. 
The solutions to \Cref{eq:trop_renorm_f2} consist of certain faces of this cone; we describe how to compute them in \Cref{Appendix:renormComp}. The computationally most difficult step is computing the rays of the cone from the given facet description. 

For example in the TINN scenario with $m=2$ and no single-body correlators, the matrix $F$ has the form
\begin{align}
	F(\boldsymbol{\alpha})=	\begin{pmatrix}
		\begin{array}{cccc}
			c_0&c_1&-c_1&-c_0\\
			c_2&c_3&-c_3&-c_2\\
			-c_2&-c_3&c_3&c_2\\
			-c_0&-c_1&c_1&c_0
		\end{array} 
		\end{pmatrix},
\end{align} 
where $c_0=\alpha_{00}+\alpha_{01}+\alpha_{10}+\alpha_{11}$, $c_1=\alpha_{00}-\alpha_{01}+\alpha_{10}-\alpha_{11}$, $c_2=\alpha_{00}+\alpha_{01}-\alpha_{10}-\alpha_{11}$ and $c_3=\alpha_{00}-\alpha_{01}-\alpha_{10}+\alpha_{11}$. 
Note that to show $F(\boldsymbol{\alpha})$ in a neat way, we express its entries, as defined in \Cref{eq:encodeF_TINN}, in terms of $c_i$ which are linear combinations of $\boldsymbol{\alpha}$.

The solutions are $F'_1$ with $(c_0,c_1,c_2,c_3,\lambda)=(-1,-1,-1,-1,-1)$ and $F'_2$ with $(c_0,c_1,c_2,c_3,\lambda)=(-1,1,1,-1,-1)$. 
Furthermore, for any real number $\tau$, $\tau F'_1$ and $\tau F'_2$ also belong to the set of solutions $\{F': F'^{\odot 2} = F'\}$. 
The two inequalities corresponding to $F'_1$ and $F'_2$ are 
$\sum_i \expval*{A^{(i)}_0A^{(i+1)}_0} \leq N$ and $\sum_i \expval*{A^{(i)}_1A^{(i+1)}_1} \leq N$. 
If we allow single-body correlators, we find four additional renormalization-invariant Bell inequalities: $-N \leq \sum_i \expval*{A^{(i)}_0} \leq N$ and $-N \leq \sum_i \expval*{A^{(i)}_1} \leq N$. All of these inequalities are rather trivial.

In the more complex TI-$2$ case, the algorithm does not reach completion. Some preliminary computations indicate that applying this algorithm to fractal lattices may generate intriguing Bell inequalities.

\subsection{Bell inequalities classification with tropical power and Kleene plus}
\label{subsec:tropical_power_num}

For TI-$2$ Bell inequalities, we investigate the relationship between their stabilization properties and their quantum violation. 
The classification via stabilization properties gives us a structured way to explore different Bell inequalities.
As explained in \Cref{subsec:Tropical_TINN} and \Cref{subsec:Tropical_TINNN}, a facet of the TI-$R$ local polytope $\localpoly{(N,m)}_{\Pi_{R}}$ is fully described by the coefficient vector $\boldsymbol{\alpha} \in \mathbb{R}^{m+Rm^2}$ and the classical bound $\beta$. 
Equivalently, the facet can be encoded into a matrix $F$ as described in \Cref{eq:tensorF}.

According to \Cref{eq:stabilization,eq:TINNN_bound_trF}, the classical bound scales with the number of parties $N$ as follows
\begin{equation}\label{eq:local_bound_scaling}
    (N+\sigma)\beta_{N+\sigma} = \lambda^{\odot \sigma} + N\beta_N ,\quad \forall N\geq N_0,
\end{equation}
where $\beta_N$ denotes the classical bound for $N$ parties and $\sigma$ is the cyclicity of $F$.
Knowing $\beta_N$ for $N = N_0,\dots, N_0+\sigma-1$, the classical bound can be easily computed for any $N\geq N_0+\sigma$ using \Cref{eq:local_bound_scaling}.
This equation can be interpreted as a renormalization procedure: $\sigma$ parties are grouped together and the rest is (tropically) multiplied by a constant $\lambda^{\odot \sigma}$.

For the rest of this section, we will focus on the polytope $\mathcal{L}_{\TINNNproj}^{(*,2)}$.
For all $32372$ facets, we computed $N_0$ and the cyclicity $\sigma$. 
There are $12$ facets satisfying $F^{\odot 4}=\lambda(F) \odot F^{\odot 3}$ (i.e.\ $\sigma = 1, N_0 =3$), these $12$ Bell inequalities can be classified into two symmetry classes:
\begin{equation}
\begin{split}
& (\boldsymbol{\alpha}_1;\beta_1)=(2, 0, 1, 0, 0, 0, 0, 0, 0, 0; -1), \\
& (\boldsymbol{\alpha}_2;\beta_2)=( 1, 1, 0, 1, 0, 0, 0, 0, 0, 0; -1).
\end{split}
\end{equation}
In these symmetry classes, we consider permutations of the inputs, outputs and parties as equivalent. 
For further details, we refer to \Cref{app:symmetry_class}.
The maximal value of $N_0$ is $26$. There are $8$ facets for which $N_0=26$, they all have $\sigma=1$ and belong to the same symmetry class:
\begin{equation}
    (\boldsymbol{\alpha};\beta)=(4, 0, 2, 0, 0, -4, 4, 4, -4, 1; -9).
\end{equation}

To investigate the properties of this classification, we compute the quantum violation for different pairs of $(\sigma, N_0)$. 
We find the minimal quantum value by optimizing in a seesaw fashion, adapting both the measurements of the inequality and the quantum state.
In a first step, we optimize the state for a given set of measurements with density matrix renormalization group (DMRG)~\cite{white_density_1992,schollwock_density-matrix_2011}. 
Here, the Hamiltonian corresponds to the Bell operator and the ground state is the state with maximal violation given the measurements.
In a second step, the measurements are adapted.
Our parameterization of the measurements is inspired by~\cite{wang_entanglement_2017}, where the authors propose a family of TI measurements.
The optimal parameters for the measurements are found with the BFGS\cite{shanno_conditioning_1970, goldfarb_family_1970, fletcher_new_1970, broyden_convergence_1970} method.
More details can be found in \Cref{app:DMRG}. 
\Cref{fig:Trop_class_F} shows the number of violated inequalities associated to the topical power classification of matrix $F$ for a system with $N=12$. 
We also investigate the number of violated inequalities of the tropical power of matrix $G$ and $H$ constructed using Method 2 and 3 in \Cref{subsec:Tropical_TINNN}, respectively.
Additionally, we also classified the normalized matrices $F'$ according to the stabilization of the Kleene plus $(F'^{+N})_{N=1,2,\ldots}$ (see right panel of \Cref{fig:Trop_class_F}). 
We find that the number of steps before stabilization is always between $4$ and $8$. 

As shown in~\Cref{fig:Trop_class_F}, the stabilization property allows us to discover Bell inequalities with a non-trivial quantum violation in a structured fashion.
However, a large portion of the inequalities do not show a violation.
Since the optimization problem of finding the quantum violation is non-convex, neither of the optimization methods, DMRG or BFGS, give a convergence guarantee.
In order to understand the quantum violation of the given inequalities in more detail, it would be important to obtain a direct relation between the stabilization number and the existence of a quantum violation, ideally without using variational methods.
We leave this investigation for further research.

\begin{figure*}
\centering
\includegraphics[width=0.8\textwidth]{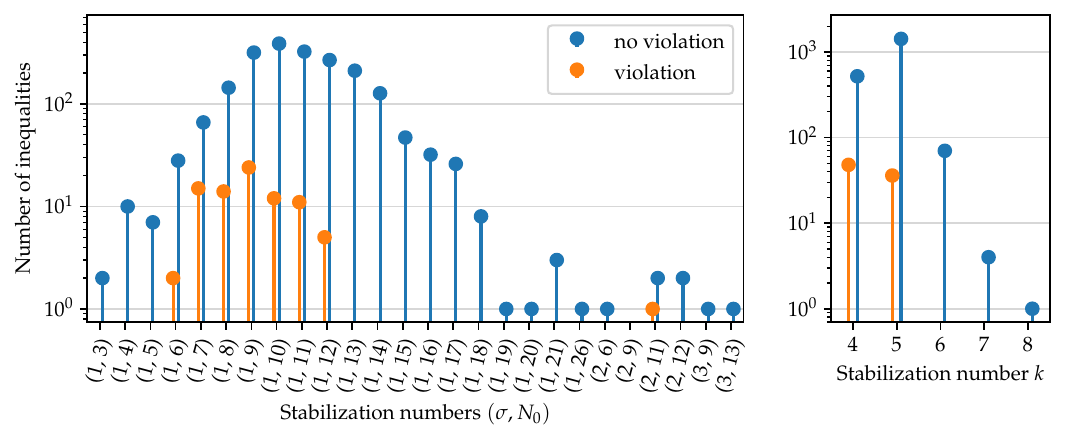}
\caption{Classification of the representative facets of the TI-$2$ local polytope $\localpoly{(*,2,2)}_{\TINNNproj}$ in \Cref{eq:TINNN_star_convex_hull}. On the left: according to the stabilization numbers $(\sigma, N_0)$ of the matrix $F$. On the right: according to the stabilization of the Kleene plus of the matrix $F$.}
\label{fig:Trop_class_F}
\end{figure*}

\section{Tropical eigenvectors and the critical graph}
\label{sec:tropical_eig+eigvec}

In \Cref{subsec:Tropical_TINNN}, we showed that the classical bound of TI-$R$ Bell inequalities can be found using tropical algebra, i.e.\ when the number of parties $N$ is divisible enough, the classical bound is precisely the tropical eigenvalue of the encoded matrix. 
Furthermore, this optimal classical bound can be found by minimizing functions $I_{\boldsymbol{\alpha}}(\boldsymbol{s})$ over all possible strategies of $N$ parties. 
In this section, we will present how to enumerate these optimal strategies by using tropical eigenvectors and the critical graph of the matrices associated to  $I_{\boldsymbol{\alpha}}(\boldsymbol{s})$. From now on, we assume $N$ is divisible enough. 
We also give an example of the TINN case to illustrate this procedure.
Finally in \Cref{subsec:dim_check}, we construct two methods to check the dimension of a given face of the TI-$R$ projected polytope. For concreteness, we explain these methods for 
$R=2$.

\subsection{Optimal strategies encoded in critical graph}
\label{subsec:TA_eigvec_F}

For a given coefficient vector $\boldsymbol{\alpha} \in \mathbb{R}^{m+Rm^2}$ of a Bell inequality in the TI-$R$ subspace, our goal is to list all the possible strategies for $N$ parties such that the classical bound is attained. 
Note that these lists of optimal strategies belong to the face described by this Bell inequality.
We first construct the matrix $F$ corresponding to the TI-$R$ function $I_{\boldsymbol{\alpha}}(\boldsymbol{s})$ using \Cref{eq:entry_F_general}. 
Next, as introduced in \Cref{subsec:TA_intro}, we can construct the critical graph $\Gamma_F^{\operatorname{crit}}$ of the matrix $F$.  
Then finding all the optimal strategy lists of $N$ parties corresponds to listing all the closed paths of length $N$ in the critical graph $\Gamma_F^{\operatorname{crit}}$.
A visual summary of the different expressions of faces of the projected local polytope can be found in \Cref{fig:flowchart}.

\begin{figure}[h!]
    \centering
\includegraphics[width=\columnwidth]{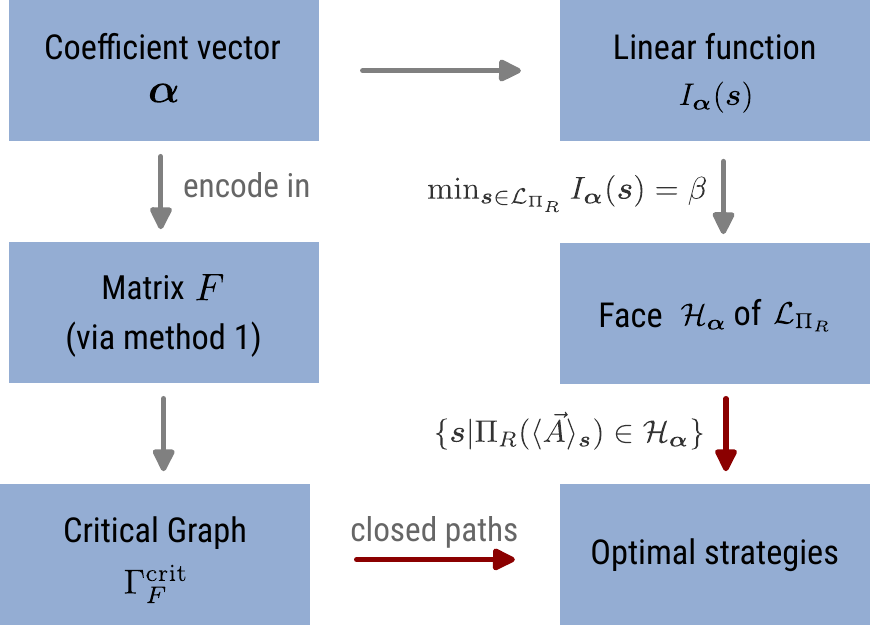}
    \caption{An illustration of the relation between the critical graph and faces of TI-$R$ polytope. The arrows in red color connect the critical graph and the face of the polytope $\mathcal{L}_{\TIRproj}$ together.
    The optimal strategies $\boldsymbol{s}^*$ such that $I_{\boldsymbol{\alpha}}(\boldsymbol{s}^*)=\beta$ are closed paths on $\Gamma_F^{\operatorname{crit}}$.  } 
    \label{fig:flowchart}
\end{figure}

Formally, the connection between optimal strategies and closed paths on the critical graph is given by:

\begin{theorem}
    \label{claim:simcycle_critical}
    The strategy vectors $\boldsymbol{s}^*$ at which the minimum $\min_{\boldsymbol{s}}I_{\boldsymbol{\alpha}} (\boldsymbol{s})=\beta$ is achieved, are precisely the closed paths of the critical graph $\Gamma_F^{\operatorname{crit}}$ of length $N$. 
\end{theorem}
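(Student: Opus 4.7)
The plan is to prove the set equality between optimal strategies and closed paths of length $N$ on $\Gamma_F^{\operatorname{crit}}$ via two inclusions, using the dictionary between LDS strategy vectors $\boldsymbol{s} \in \LDSset^N$ and closed paths of length $N$ in the graph $\Gamma_F$ associated to the matrix $F$ of \Cref{eq:entry_F_general}. Under this correspondence, the quantity $N\cdot I_{\boldsymbol{\alpha}}(\boldsymbol{s})$ is exactly the total weight of the associated closed path, and since we assume $N$ is divisible enough we have $\beta = \lambda(F)$ by the stabilization property \Cref{eq:stabilization}. Throughout I will use the graph-theoretic fact recalled in \Cref{subsec:TA_eigva+vec} that $\lambda(F)$ is the minimum normalized weight of any directed cycle in $\Gamma_F$.

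For the inclusion that every minimizer is a closed path on $\Gamma_F^{\operatorname{crit}}$, take $\boldsymbol{s}^*$ with $I_{\boldsymbol{\alpha}}(\boldsymbol{s}^*) = \beta$, so the total weight of the associated closed path equals $N\lambda(F)$. The key step is to invoke the standard fact that any closed walk in a directed (multi)graph admits an edge-disjoint decomposition (with multiplicities) into simple cycles $\boldsymbol{c}_1,\dots,\boldsymbol{c}_k$, obtained by treating the walk as an integer flow and applying an Eulerian-style decomposition. Let $W_j$ denote the weight of $\boldsymbol{c}_j$ and $\ell_j$ its length; then $\sum_j \ell_j = N$ and $\sum_j W_j = N\lambda(F)$, hence
\begin{equation}
\sum_{j=1}^{k}\bigl(W_j - \lambda(F)\,\ell_j\bigr)=0.
\end{equation}
Each summand is non-negative by the characterization of $\lambda(F)$, so each must vanish, i.e. every $\boldsymbol{c}_j$ is critical. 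By the definition of $\Gamma_F^{\operatorname{crit}}$ recalled in \Cref{subsec:pre:crit_graph}, all nodes and edges traversed by $\boldsymbol{s}^*$ therefore lie in $\Gamma_F^{\operatorname{crit}}$, so $\boldsymbol{s}^*$ is a closed path of length $N$ on the critical graph.

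For the reverse inclusion, let $\boldsymbol{s}^*$ be a closed path of length $N$ on $\Gamma_F^{\operatorname{crit}}$. Decompose it as above into simple cycles $\boldsymbol{c}_1,\dots,\boldsymbol{c}_k$. Each $\boldsymbol{c}_j$ lies in $\Gamma_F^{\operatorname{crit}}$, and by the second defining property of the critical graph (every closed path of $\Gamma_F^{\operatorname{crit}}$ is critical) each $\boldsymbol{c}_j$ has normalized weight $\lambda(F)$. Summing, the total weight of $\boldsymbol{s}^*$ equals $\lambda(F)\sum_j \ell_j = N\lambda(F)$, giving $I_{\boldsymbol{\alpha}}(\boldsymbol{s}^*)=\lambda(F)=\beta$, so $\boldsymbol{s}^*$ is a minimizer.

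The main technical point to handle carefully is the closed-walk-to-simple-cycles decomposition in a directed multigraph; this is standard but does require that repeated traversals of an edge be counted with the appropriate multiplicity, so that $\sum_j \ell_j = N$ really holds. A secondary subtlety is the role of the assumption that $N$ is divisible enough: without it one only has $\beta \geq \lambda(F)$ and the argument in the first inclusion can fail, because the decomposition of $\boldsymbol{s}^*$ may contain simple cycles of length incompatible with the cyclicity $\sigma(F)$ forcing a strictly sub-optimal total weight. Under the divisibility hypothesis both directions go through cleanly, and no further ingredients beyond the cycle-weight characterization of $\lambda(F)$ are required.
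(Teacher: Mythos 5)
Your proof is correct and takes essentially the same route as the paper's: both reduce the optimization to finding the minimum-normalized-weight closed paths of length $N$ in $\Gamma_F$ and identify these with the closed paths of length $N$ in the critical graph. The paper's proof simply cites this identification from \Cref{subsec:TA_intro}, whereas you make the underlying simple-cycle decomposition argument explicit; your observation about where the ``divisible enough'' hypothesis enters is also accurate.
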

\begin{proof}
Similar to \Cref{subsec:Tropical_TINNN}, minimizing the function $I_{\boldsymbol{\alpha}}$ corresponds to finding the tropical eigenvalue of the matrix $F$ defined in \Cref{eq:entry_F_general} for interaction range $R$. 
The associated graph $\Gamma_F$ is equal to the De Bruijn graph $\Gamma_{\db(2^m,R)}$.
The strategy vectors at which the minimum is achieved correspond to the closed paths of minimal normalized weight in $\Gamma_F$, and as explained in \Cref{subsec:TA_intro}, these are given by the closed paths in the critical graph $\Gamma^{\operatorname{crit}}_F$.
\end{proof}

Next, as an example, we take $m=2$, $R=1$, and encode the coefficient vector $(\boldsymbol{\alpha},\beta)=(0,0,2,-1,1,0;-2)$
from \Cref{eq:example_face_TINN} as
\begin{align}
F(\boldsymbol{\alpha})=
    \begin{pmatrix}
         2 & 4 &-4 &-2 \\
         0 &2 & -2 & 0\\
        0 & -2 & 2 & 0 \\
        -2 & -4 &  4 & 2
    \end{pmatrix},
\end{align}
which is the matrix $F$ in \Cref{eq:maF_TINN} in \Cref{subsec:TA_intro}. 
The critical graph $\Gamma_F^{\operatorname{crit}}$ is shown in \Cref{fig:example_TINN_min_graph}. Recall from \Cref{subsec:graphTINN} that we can associate a correlator vector to every closed path in $K_4$. Now \Cref{claim:simcycle_critical} says that this correlator vector achieves the classical bound if all edges of the path are in the critical graph. For instance for the closed path $\boldsymbol{s}=(0,0,1,3,1)$ the edge $(0,1)$ is not in the critical graph, and indeed the corresponding correlator vector (see \Cref{eq:egCorrVec}) does not achieve the classical bound $\beta=-2$. On the other hand, the closed path $(1,3)$ does lie in the critical graph, and its correlator vector $(0,-1,-1,0,0,1)^T$ does achieve the classical bound: 
\[
(0,0,2,-1,1,0)\cdot (0,-1,-1,0,0,1)^T = -2.
\] 

Taking the thermodynamic limit, we have seen that the polytope $\localpoly{(*,2)}_{\TINNproj}$ has $20$ vertices, corresponding to $20$ of the $24$ simple cycles in $K_4$. Such a vertex lies on the facet given by $\boldsymbol{\alpha}$ if and only if the corresponding simple cycle lies on $\Gamma_F^{\operatorname{crit}}$. 
We have seen that $\Gamma_F^{\operatorname{crit}}$ has $11$ simple cycles, listed in \Cref{eq:11simpleCycles}. However, two of them ($(0,3,1,2)$ and $(0,2,1,2)$) do not correspond to vertices of $\localpoly{(*,2)}_{\TINNproj}$. 
So we conclude that the facet under consideration has $9$ vertices.  

\subsection{Dimension analysis of faces with critical graph}
\label{subsec:dim_check}

Tight Bell inequalities give the unique minimal representation of the local polytope. 
We can attack the problem of finding tight inequalities from two angles: either finding all tight Bell inequalities directly or checking whether a given Bell inequality is tight.
In \Cref{subsec:DBgraphTINNN}, we already took the direct approach.
Here, we take the latter route which is computationally less intensive and check that the face corresponding to a TI-$R$ Bell inequality has dimension $\dim(\localpoly{(*,m)}_{\TIRproj}) - 1 = m+Rm^2-1$.
For sake of concreteness, we consider the TI-$2$ case ($R=2$).   
For example, as explained in \Cref{sec:MathTIRPolytope}, when $N$ is divisible enough, the dimension of TI-$2$ polytope with $m=2$ is $10$.
Thus a TI-$2$ Bell inequality is tight if its associated face has dimension $9$.

\Cref{claim:simcycle_critical} implies that the face defined by $F(\boldsymbol{\alpha})$ is spanned by the vectors $\frac{1}{\ell(\boldsymbol{c})}\sum_{i=0}^{\ell(\boldsymbol{c})-1} \corrvecNoOneNBody{_{[R=2]}}_{(c_i,c_{i+1})}$, where $\boldsymbol{c} = (c_0, c_1 ,\dots)$ runs over the simple cycles the critical graph $\Gamma_{F(\boldsymbol{\alpha})}^{\operatorname{crit}}$.
However, if we only want to compute the dimension of the face, we can do this without having to list all the cycles.
For any face $\mathcal{H}_{\boldsymbol{\alpha}}$, we can consider the vector space $\operatorname{span}(\mathcal{H}_{\boldsymbol{\alpha}} )$ spanned by the face. 
Its dimension is given by  
$\dim(\operatorname{span}( \mathcal{H}_{\boldsymbol{\alpha}} )) = \dim(\mathcal{H}_{\boldsymbol{\alpha}}) + 1$.
This is true as long as all zeroes vector do not lie on the affine span of the face (otherwise $\dim(\mathcal{H}_{\boldsymbol{\alpha}}) = \dim(\operatorname{span}(\mathcal{H}_{\boldsymbol{\alpha}} ))$.
For the polytope $\localpoly{(*,m)}_{\TINNNproj}$, symmetries (see \Cref{app:symmetry_class}) can be used to argue that the all zeroes vector lies in the interior, and therefore not on the affine span of any face.
%\footnote{For this to be true we need to ensure that the all zeroes vector do not lie on the affine span of the face, otherwise $\dim(\mathcal{H}_{\boldsymbol{\alpha}}) = \dim(\operatorname{span}(\mathcal{H}_{\boldsymbol{\alpha}} ))$. 
%For our polytope $\dim(\localpoly{(*,m)}_{\TINNNproj})$, one can use the symmetries from \Cref{app:symmetry_class} to argue that the all zeroes vector lies in the interior, and therefore not on the affine span of any face.}
In other words the Bell inequality given by $\boldsymbol{\alpha}$ is tight if and only if $\dim(\operatorname{span}( \mathcal{H}_{\boldsymbol{\alpha}} )) = m+Rm^2$.

As before, we write $n=2^{2m}$ for the number of nodes in the De Bruijn graph $\Gamma_{\db(2^m,2)}$. Let us write $\mathcal{W}_{\Gamma_{F(\boldsymbol{\alpha})}^{\operatorname{crit}}} \subset \mathbb{R}^{n \times n}$ for the linear space spanned by all normalized weight matrices of closed paths in $\Gamma_{F(\boldsymbol{\alpha})}^{\operatorname{crit}}$ (normalized weight matrices are defined in \Cref{subsec:graphTINN}). 
Then $\operatorname{span}(\mathcal{H}_{\boldsymbol{\alpha}} )$ is the image of $\mathcal{W}_{\Gamma_{F(\boldsymbol{\alpha})}^{\operatorname{crit}}}$
 under the map $\Phi$ defined in \Cref{eq:map_weight_matrix}.
This corresponds to
\begin{equation}
    \label{eq:dim_face_weight_matrix}
    \begin{split}
        \mathcal{H}_{\boldsymbol{\alpha}}=\{\sum_{\mu,\nu} M_{\mu \nu}\corrvecNoOneNBody{_{[R=2]}}_{(\mu,\nu)}\vert M=[M_{\mu \nu}] \in \mathcal{W}_{\Gamma_{F(\boldsymbol{\alpha})}^{\operatorname{crit}}} \},
    \end{split}
\end{equation}
where $\corrvecNoOneNBody{_{[R=2]}}_{(\mu,\nu)}$, as defined in \Cref{eq:def_corvec_edge}, represents the correlator vector assigned on edge $(\mu,\nu)$ of the critical graph $\Gamma_{F(\boldsymbol{\alpha})}^{\operatorname{crit}}$. 
The space $\mathcal{W}_{\Gamma_{F(\boldsymbol{\alpha})}^{\text{crit}}}$ consists of all matrices for which the only nonzero entries are in positions corresponding to the edges of $\Gamma_{F(\boldsymbol{\alpha})}^{\text{crit}}$, and moreover for each $i$ the sum of the entries in row $i$ equals the sum of the entries in column $i$.
These linear equations are formally given in \Cref{eq:def_W_gamma}.
By solving this system of equations, we obtain a matrix $\Lambda$ whose column-span is $\mathcal{W}_{\Gamma_{F(\boldsymbol{\alpha})}^{\operatorname{crit}}}$.

To compute the dimension of $\operatorname{span}(\mathcal{H}_{\boldsymbol{\alpha}} )$, we need to multiply this with the matrix $A$ representing the linear map $\Phi$ (the $n^2$ column vectors of $A$ are the correlator vectors $\corrvecNoOneNBody{_{[R=2]}}_{(\mu,\nu)}$).
Putting everything together, we find that 
\begin{equation}
    \label{eq:dim_face_rank}
    \dim(\mathcal{H}_{\boldsymbol{\alpha}})=\rank(A \cdot \Lambda)-1.
\end{equation}

In the following, we give an example to explain the above method to check the dimension of a face of $\localpoly{(*,2,2)}_{\TINNNproj}$. Let us consider the following specific function $I_{\boldsymbol{\alpha}}(\boldsymbol{s})= \frac{1}{N}\sum_{i=0}^{N-1}I_{\boldsymbol{\alpha}}(s_i, s_{i+1},s_{i+2})$, where
\begin{align}
    \boldsymbol{\alpha}=(-2,-4,-2,2,2,2,1,0,0,1).
\end{align}
The tropical eigenvalue of its corresponding matrix $F$ is $\lambda(F) = -4$, which is exactly the classical bound $\beta=\min_{\boldsymbol{s}}I_{\boldsymbol{\alpha}}(\boldsymbol{s}) = -4$~\cite{wang_entanglement_2017}. 
Moreover, one eigenvector of $F$ is
\begin{align}
     & \frac{1}{3}(0 , -2 , 2 , 0 , 2 , 0 , 4 ,14 , -2 , 8, 0 , -2 , 0 , 10 , 2, 12)^T.\notag 
\end{align}
The critical graph $\Gamma_F^{\operatorname{crit}}$ is as shown in 
\Cref{fig:Nav_IG_crit_graph}. 
There are $43$ simple cycles in \Cref{fig:Nav_IG_crit_graph}, and the cyclicity of $F$ is $\sigma=1$ and $N_0=6$. 
When $N\geq N_0=6$, the optimal strategies are the closed paths on $\Gamma_F^{\operatorname{crit}}$.

Next, we compute the matrix $\Lambda$ by solving the system in \Cref{eq:def_W_gamma}. 
Then for each edge $(\mu,\nu)$, we compute $\corrvecNoOneNBody{_{[R=2]}}_{(\mu,\nu)}$. 
For example, for the edge $(0,3,0)$, the correlator vector $\corrvecNoOneNBody{_{[R=2]}}_{(0,3,0)}$ is $(\frac{1}{3},\frac{1}{3},-1,-1,-1,-1,1,1,1, 1)^T$. 
We construct the matrix $A$ with column vectors $\corrvecNoOneNBody{_{[R=2]}}_{(\mu,\nu)}$.
By using \Cref{eq:dim_face_rank}, we know the dimension is $9$. 
So it is a facet of the TI-$2$ polytope. 

In the present case $R=m=2$, a complete list of tight Bell inequalities is available (\cite{wang_entanglement_2017}, see also \Cref{subsec:DBgraphTINNN}), so we have only confirmed what we already knew. 
If we increase the interaction range or number of inputs, getting a complete facet description quickly becomes intractable. 
However, we can still use our method to compute the dimension of the face corresponding to a given Bell inequality, for instance, one motivated by a renormalization procedure. 
We leave this for future work. 

\begin{figure}[h!]
  \centering
  \includegraphics[width=\columnwidth]{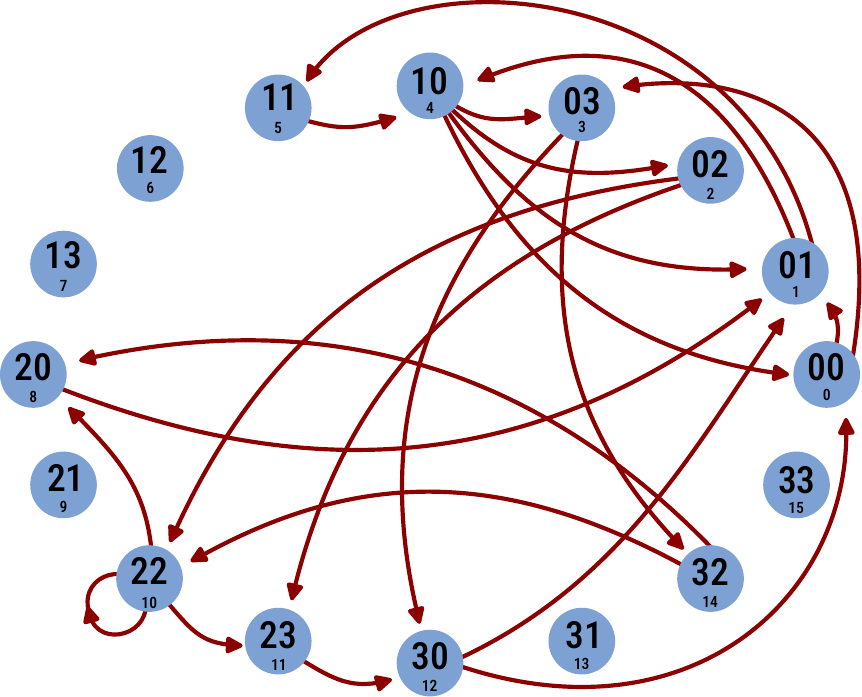}
    \caption{Critical graph $\Gamma_F^{\operatorname{crit}}$ of $F$, which is a subgraph of the De Bruijn graph $\Gamma_{\db(4,2)}$.
    The major numbers in the blue nodes are written in base four to make the identification with the corresponding LDS more natural. 
    The two digits in each node correspond to the two strategies that the two parties chose.
    Their corresponding base $10$ equivalent are written in smaller font below.  
    The only strongly connected component is $\{00,01,02,03,10,11,20,22,23,30,32\}$. 
    }   
    \label{fig:Nav_IG_crit_graph}
\end{figure}

\section{Conclusions and Outlook}
\label{sec:con}
This work establishes a deep connection between the study of nonlocality in many-body systems and combinatorial optimization tasks formulated via tropical algebra.
Even a seemingly simple task such as finding the classical bound for a multipartite Bell inequality composed of two-body correlators is NP-hard because it can be reinterpreted as finding the ground state of a classical spin system~\cite{liu_tropical_2021}.
The tropical algebra formalism combined with the machinery of tensor networks singles out cases in which this problem can be solved efficiently.
Whenever the tensor network contraction can be executed efficiently, we can compute the classical bound in practice.
Here, we have focused on many-body systems in one spatial dimension, which fulfill this property.

Combining tropical algebra with tensor networks is not only useful in the context of nonlocality~\cite{hu_tropical_2022,emonts_effects_2024}.
Also in the realm of many-body systems and the benchmarking of quantum devices, the combination of these tools has proven rather powerful~\cite{liu_computing_2023}.

Our work highlights interesting connections between one-dimensional systems and graph theory, allowing for the characterization of the TI-$R$ local polytope, with constant $R$.
Surprisingly, the number of vertices of this polytope is uniformly bounded. 
This contrasts with the case for a much larger symmetry group, namely permutational invariance, where the number of vertices grows with $N$~\cite{tura_nonlocality_2015}.

Our tensor network formulation motivated us to investigate how the renormalization procedure of tropical tensor networks translates to multipartite Bell nonlocality.
In the TI case, this has a direct interpretation, as the contribution per party to the classical bound.
To make a direct connection between the design of useful inequalities as fixed points of this renormalization procedure, further research is needed.

From a more practical perspective, this formalism does not only allow us to find the classical bound but also to efficiently compute and encode the optimal local deterministic strategies, thereby providing a tool to assess the tightness of a given Bell inequality.
This complements the results of previous investigations~\cite{tura_energy_2017}.

This work suggests several directions for further research.
While it focuses on one-dimensional geometries, the extension to two dimensions is far from trivial.
The main obstacle is the inefficiency of the contraction steps in the general case.
The methodology presented can be pushed to graphs of constant tree-width, e.g.\ hyperbolic or fractal geometries, which can arise in the context of symmetry-protected topological order~\cite{stephen_topological_2021,stephen_subsystem_2019}.

When extending the approach beyond fractal lattices, a stronger focus on numerical methods is required.
In higher dimensions, methods like tensor renormalization group are extremely successful to contract large systems of tensors.
However, these methods rely on truncation (via singular value decomposition) to make the algorithm efficient.
To the best of our knowledge, there is no comparable truncation method for tropical algebra.
Without access to efficient truncations, methods like index slicing or heuristic tensor network contractions could improve overall efficiency~\cite{huang_efficient_2021} and open access to quasi-one dimensional and small two-dimensional systems.

Finally, tropical tensor networks also offer means towards lower-bounding global minima, for instance via the use of epsilon nets~\cite{schuch_matrix_2010,aharonov_efficient_2010}. 
This constitutes not only an interesting alternative to variational methods when computing ground state energies, but also a complementary toolset to convex relaxation techniques which are typically employed~\cite{navascues_bounding_2007, requena_certificates_2023, kull_lower_2024}. 

\begin{acknowledgments}
We thank Albert Aloy and Zizhu Wang for sharing their insights into the Bell polytope, and Jinfu Chen for feedback on an early version of the manuscript. 
T.S. would like to thank Stijn Cambie for pointing out relevant references in the graph theory literature. 
F.M. would like to thank Marianne Johnson and Henry Wynn for helpful discussions on tropical matrix algebra. 
F.M. was partially supported by the Research Foundation – Flanders (FWO), Grant Numbers G0F5921N (Odysseus, Type I) and G023721N, as well as the grant iBOF/23/064 from KU Leuven. 
These grants also supported the research visits of M.H. from Leiden to Leuven, which contributed to this work.
P.E. and J.T. acknowledge the support received by the Dutch National Growth Fund
(NGF), as part of the Quantum Delta NL programme. 
P.E. additionally acknowledges the support received through the NWO-Quantum Technology programme (Grant No. NGF.1623.23.006).
J.T. and E.V. acknowledge the support received from the European Union’s Horizon Europe research and innovation programme through the ERC StG FINE-TEA-SQUAD (Grant No. 101040729). 
T.S. is supported by Research foundation – Flanders
(FWO) - Grant Number 1219723N. 
This publication is part of the 'Quantum Inspire - the Dutch Quantum Computer in the Cloud' project (with project number [NWA.1292.19.194]) of the NWA research program 'Research on Routes by Consortia (ORC)', which is funded by the Netherlands Organization for Scientific Research (NWO).

The views and opinions expressed here are solely
those of the authors and do not necessarily reflect those of the funding institutions. Neither
of the funding institutions can be held responsible for them.
\end{acknowledgments}

\section*{Authors contribution}
\textbf{Conceptualization:} T.S., F.M., J.T. 
\textbf{Methodology:} M.H., E.V., T.S., P.E., F.M., J.T.
\textbf{Software:} M.H., E.V., T.S.
\textbf{Validation:} T.S., P.E., J.T.
\textbf{Formal analysis:} M.H., E.V., T.S., F.M.
\textbf{Data curation:} M.H., E.V., P.E.
\textbf{Writing - Original Draft:} M.H., E.V., T.S., F.M.
\textbf{Writing - Review \& Editing:} P.E., J.T.
\textbf{Visualization:} M.H., E.V., P.E.
\textbf{Supervision:} T.S., P.E., J.T.
\textbf{Project administration:} J.T.
\textbf{Funding acquisition:} T.S., P.E., F.M., J.T.

\appendix

\section{Definition of TI-\texorpdfstring{$R$}{R} polytope for any \texorpdfstring{$R$}{R}} \label{app:generalTIpolytope}

In the main text, the definition of the projected polytope was given for interaction range $R = 1$ and $R=2$.
In this section, we will define the polytope $\localpoly{(N,m)}_{\TIRproj}$ for general interaction range $R\geq1$.

The $N$-partite TI-$R$ polytope $\localpoly{(N,m)}_{\TIRproj}$ is given by
\begin{equation}
    \label{eq:TIR_polytope}
    \Conv\left\{ \TIRproj \left( \corrvecNoOneNBody{}_{\boldsymbol{s}} \right) \mid \boldsymbol{s}\in\LDSset^N\right\},
\end{equation} 
where the TI-$R$ projection reads
\begin{align}
\label{eq:TIR_proj}
\TIRproj\left(\corrvecNoOneNBody{}\right) = 
\frac{1}{N}
\begin{pmatrix}
\sum_{i=0}^{N-1}\corrvecNoOne{^{(i)}} \\
\sum_{i=0}^{N-1}\expval*{\vec{A}^{(i)}\otimes \vec{A}^{(i+1)}} \\
\sum_{i=0}^{N-1}\expval*{\vec{A}^{(i)}\otimes \vec{A}^{(i+2)}} \\
\vdots \\
\sum_{i=0}^{N-1}\expval*{\vec{A}^{(i)}\otimes \vec{A}^{(i+R)}}
\end{pmatrix}.
\end{align}
When all the parties follow a LDS, \Cref{eq:TIR_proj} can be written as a sum over $N$ terms that depend only on the strategies of $R$ neighboring parties:
\begin{equation}
\TIRproj(\corrvecNoOneNBody{_{[R]}}_{\boldsymbol{s}}) = \frac{1}{N}\sum_{i=0}^{N-1} \corrvecNoOneNBody{_{[R]}}_{(s_i,\dots,s_{i+R})},
\end{equation}
where $\boldsymbol{s} = (s_0,\ldots,s_{N-1})$ is the $N$-partite strategy vector and $\corrvecNoOneNBody{_{[R]}}_{(s_i,\dots,s_{i+R})}$ is the correlator vector with interaction range $R$ defined as
\begin{align}
    \label{eq:LDS_TIR_correvec}
\begin{pmatrix}
\frac{1}{R+1}\sum_{j=0}^{R}\corrvecNoOne{}_{s_{i+j}} \\
\frac{1}{R}\sum_{j=0}^{R-1} \corrvecNoOne{}_{s_{i+j}} \otimes \corrvecNoOne{}_{s_{i+j+1}} \\
\frac{1}{R-1}\sum_{j=0}^{R-2}\corrvecNoOne{}_{s_{i+j}} \otimes \corrvecNoOne{}_{s_{i+j+2}} \\
\vdots \\
\corrvecNoOne{}_{s_{i}} \otimes \corrvecNoOne{}_{s_{i+R}}
\end{pmatrix}.
\end{align}
As long as $N>2R$, all correlators occurring in \Cref{eq:TIR_proj} are distinct. This means that the projection map $\TIRproj$ is surjective. Since we know that the non-TI local polytope $\localpoly{(N,m)}$ is full-dimensional, we can conclude that the polytope $\localpoly{(N,m)}_{\TIRproj}$ is of full dimension $m+Rm^2$ for $N>2R$. 

Our framework can easily be extended to incorporate higher-order correlators (with bounded interaction range). For this one just needs to add entries to \Cref{eq:TIR_proj} that encode the desired correlators and choose a way to split is as a sum of $N$ terms, each of which only depends on $R+1$ neighboring parties.

\section{Polytopes associated to graphs}
\label{app:graph_theory}

In this section, we fix a finite-oriented graph $\Gamma$ with $n$ vertices. We allow self-loops, but not parallel edges. A closed path in $\Gamma$ is a sequence $c=(i_1,i_2,\ldots,i_\ell)$ of vertices such that all pairs $(i_1,i_2), \ldots, (i_\ell,i_1)$ are edges in $\Gamma$. The length of the closed path is denoted $\ell(c)$. If no vertices are repeated, the path is called a simple cycle. The set of all simple cycles in $\Gamma$ will be denoted $\simpcycleSet{\Gamma}$. Note that this is a finite set. 
To each closed path $c$ we assign a weight matrix 
\[
W(c) = e_{i_1,i_2} + \cdots + e_{i_{\ell},i_1}, 
\]
where $e_{i,j}$ is the matrix with only one nonzero entry, on position $(i,j)$. Explicitly, the $(i,j)$-th entry of $W(c)$ counts how many times $c$ passes through the oriented edge $(i,j)$.

Let us write $\mathcal{W}_{\Gamma} \subset \mathbb{R}^{n \times n}$ for the linear span of all weight matrices of closed paths in $\Gamma$. Explicitly, $\mathcal{W}_{\Gamma}$ consist of all matrices for which the only nonzero entries are in positions corresponding to the edges of $\Gamma$, and moreover for each $i$ the sum of the entries in row $i$ equals the sum of the entries in column $i$. Namely,
\begin{equation}
\label{eq:def_W_gamma}
    \begin{split}
        \mathcal{W}_{\Gamma}=\{M \in \mathbb{R}^{n \times n} \vert &M_{ij}=0~\text{if}~(i,j)\notin \Gamma, \\
    &\sum_j M_{ij}=\sum_l M_{li}~\forall~i \in [n] \}.
    \end{split}
\end{equation} 
The subset of $\mathcal{W}_{\Gamma}$ of matrices whose entries sum to $N$ is denoted $\mathcal{W}_{\Gamma}(N)$. This is an affine-linear space. In particular we have that $W(c) \in \mathcal{W}_{\Gamma}(\ell(c))$. 
 We write $w(c)$ for the normalized weight matrix $W(c)/\ell(c)$. This is an element of $\mathcal{W}_{\Gamma}(1)$.
Then we can define the normalized cycle polytope as in \Cref{eq:definition_pstar}:
\[
p_*(\Gamma) := \operatorname{Conv}\{ w(c) \mid c \text{ simple cycle}\} \subset \mathcal{W}_{\Gamma}(1).
\]
In addition, we can for every $N$ define the normalized closed path polytope $p_N(\Gamma)$ as in \Cref{eq:definition_pN}:
\[
\operatorname{Conv}\{ w(c) \mid c \text{ closed path of length } N\} \subset \mathcal{W}_{\Gamma}(1).
\]
The unnormalized versions of these polytopes will be denoted by $P_N(\Gamma) = N \cdot p_N(\Gamma) \subset \mathcal{W}_{\Gamma}(N)$.
Note a similar construction without normalizing the weight matrices has been made in \cite{balas_cycle_2000}.
As shown in the main text, every TI Bell polytope is obtained as the image of $p_N(\Gamma)$ under a suitable projection map in \Cref{eq:map_weight_matrix}.

\subsection{Relation between the \texorpdfstring{$p_N(\Gamma)$}{P\_N(G)}}
\label{app:relation_periodicityPolytope}

\begin{theorem}
    \label{thm:periodicityPolytope}
    For any $N$, we have an inclusion $p_N(\Gamma) \subseteq p_*(\Gamma)$. If $N$ is divisible by the length of every cycle in $\Gamma$, then $p_N(\Gamma) = p_*(\Gamma)$.
\end{theorem}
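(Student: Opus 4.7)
The plan is to prove each inclusion separately, reducing both to the classical fact that any closed walk in a directed graph can be decomposed into a (multi-)union of simple cycles.

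For the inclusion $p_N(\Gamma) \subseteq p_*(\Gamma)$, it suffices to show every vertex $w(\boldsymbol{s})$ of $p_N(\Gamma)$ lies in $p_*(\Gamma)$. The key claim is that for any closed walk $\boldsymbol{s}$ of length $N$, the weight matrix decomposes as $W(\boldsymbol{s}) = \sum_{\boldsymbol{c} \in \simpcycleSet{\Gamma}} a_{\boldsymbol{c}} W(\boldsymbol{c})$ for some non-negative integers $a_{\boldsymbol{c}}$, with $\sum_{\boldsymbol{c}} a_{\boldsymbol{c}} \ell(\boldsymbol{c}) = N$. I would prove this by induction on $N$: reading the vertices of $\boldsymbol{s}$ in order from an arbitrary starting point, let $j$ be the smallest index at which some vertex $v_j$ coincides with an earlier $v_i$; then $(v_i, v_{i+1}, \ldots, v_{j-1})$ is a simple cycle by minimality of $j$, and excising this sub-walk from $\boldsymbol{s}$ produces a strictly shorter closed walk to which the induction hypothesis applies. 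Dividing the decomposition by $N$ gives
\begin{equation*}
w(\boldsymbol{s}) = \sum_{\boldsymbol{c} \in \simpcycleSet{\Gamma}} \frac{a_{\boldsymbol{c}} \ell(\boldsymbol{c})}{N}\, w(\boldsymbol{c}),
\end{equation*}
which is a convex combination of vertices of $p_*(\Gamma)$ since the coefficients are non-negative and sum to $1$.

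For the reverse inclusion under the divisibility hypothesis, I would directly exhibit each vertex of $p_*(\Gamma)$ as a point of $p_N(\Gamma)$. For any simple cycle $\boldsymbol{c}$, the divisibility assumption ensures $k := N/\ell(\boldsymbol{c})$ is a positive integer, and traversing $\boldsymbol{c}$ exactly $k$ times yields a valid closed walk $\boldsymbol{s}$ of length $N$ whose weight matrix is $k \cdot W(\boldsymbol{c})$, hence whose normalized weight matrix is $w(\boldsymbol{s}) = W(\boldsymbol{c})/\ell(\boldsymbol{c}) = w(\boldsymbol{c})$. Therefore $w(\boldsymbol{c}) \in p_N(\Gamma)$ for every $\boldsymbol{c} \in \simpcycleSet{\Gamma}$, and by convexity of $p_N(\Gamma)$ we conclude $p_*(\Gamma) = \operatorname{Conv}\{w(\boldsymbol{c}) : \boldsymbol{c} \in \simpcycleSet{\Gamma}\} \subseteq p_N(\Gamma)$.

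The argument is essentially combinatorial and requires no heavy machinery. The only step demanding some care is the simple cycle decomposition, which is a folklore result equivalent to Eulerian decomposition of balanced directed multigraphs; the main obstacle is bookkeeping the inductive excision cleanly so that the coefficients $a_{\boldsymbol{c}}$ and the total length add up correctly. Everything else is a direct consequence of convexity of the polytopes involved.
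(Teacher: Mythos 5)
Your proof is correct and follows essentially the same route as the paper's: the forward inclusion via decomposition of a closed walk of length $N$ into simple cycles (yielding $w(\boldsymbol{s})$ as a convex combination of the $w(\boldsymbol{c})$), and the reverse inclusion under the divisibility hypothesis by repeating each simple cycle $N/\ell(\boldsymbol{c})$ times. The only difference is that you spell out the inductive excision argument for the cycle decomposition, which the paper simply invokes as a known fact.
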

Note that from the theorem it immediately follows that
    \begin{align*}
        p_*(\Gamma) &= \bigcup_{N \in \mathbb{N}}{p_N(\Gamma)} \\
        &= \operatorname{Conv}\{ w(c) \mid c \text{ closed path of any length}\},
    \end{align*}
    justifying the name $p_*(\Gamma)$.
\begin{proof}
    To show the first inclusion, we use that any closed path in a graph can be decomposed into simple cycles.
    If a given closed path $c$ (of length $N$) decomposes into (not necessarily distinct) cycles $C_1$, \dots, $C_s$ of lengths $\ell_1$, \dots, $\ell_s$, then we find
    \[
    N w(c) = \ell_1 w(C_1) + \dots + \ell_k w(C_s),
    \]
    from which is follows that $w(c)$ is a convex linear combination of the vectors $w(C_i)$. This shows $p_N(\Gamma) \subseteq p_*(\Gamma)$.
    
    Now assume that $N$ is divisible by the length of every cycle in $\Gamma$. Then if we fix a cycle $C$ of length $\ell$,
    we have $w(c)=w(\frac{N}{\ell}\cdot C) \in p_N(\Gamma)$, where $\frac{N}{\ell}\cdot C$ is the (length $N$) path obtained by repeating $\frac{N}{\ell}$ times the cycle $C$. This shows that $p_N(\Gamma) = p_*(\Gamma)$. 
\end{proof}

In addition, the sequence of polytopes $p_N(\Gamma)$ ``converges" to the polytope $p_*(\Gamma)$. There are several ways of formalizing the notion of convergence. For instance: the vertices of $p_N(\Gamma)$ will lie closer and closer to the vertices of $p_*(\Gamma)$ when $N \to \infty$. Or: for every inequality, the corresponding bound on $p_N(\Gamma)$  converges to the one on $p_*(\Gamma)$. But note that the number of vertices of $p_N(\Gamma)$ is not a convergent sequence; we will soon see that it is in fact eventually periodic. 

\subsection{Computing the vertices of \texorpdfstring{$p_N(\Gamma)$}{P\_N(G)}}
The vertices of $p_*(\Gamma)$ are easy to describe: they are the weight matrices $w(c)$, where $c$ runs over all cycles in $\Gamma$. This is true because the weight matrix of a simple cycle cannot be written as a convex combination of weight matrices of other cycles.

For a finite $N$ which is not divisible by the length of every cycle in $\Gamma$, $p_N(\Gamma)$ might have many more vertices. Intuitively, going from $p_*(\Gamma)$ to $p_N(\Gamma)$ means that we replace vertex $w(c)$ with a cluster of vertices lying close to $w(c)$. However, it turns out that the behavior near $w(c)$ only depends on the value of $N$ modulo $\ell(c)$. 

\begin{theorem} \label{thm:exactVertices}
    Fix a graph $\Gamma$. Then there exists for every cycle $c$ and every $r=0,\ldots,\ell(c)-1$ a finite list $\err{\Gamma}{c}{r}$ of matrices in $\mathcal{W}_{\Gamma}(0)$, such that for any sufficiently large $N$ the vertices of $p_N(\Gamma)$ are of the form
    \begin{equation}\label{eq:exactVertices1}
    w(c)+\frac{e}{N} \text{ with } e\in \err{\Gamma}{c}{N \operatorname{mod} \ell(c)}.
    \end{equation}
\end{theorem}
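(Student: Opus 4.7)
The plan is to rescale by $N$, consider the polytope $P_N(\Gamma) = N \cdot p_N(\Gamma)$, fix a simple cycle $c \in \simpcycleSet{\Gamma}$, and analyze the cluster of vertices of $p_N(\Gamma)$ lying close to $w(c)$ via the tangent cone of $p_*(\Gamma)$ at that vertex. First, by \Cref{claim:rough_bound}, every vertex of $p_N(\Gamma)$ is of the form $w(c) + v/N$ with
\[
v = \sum_{c' \in \simpcycleSet{\Gamma} \setminus \{c\}} a_{c'}\,\ell(c')\bigl(w(c') - w(c)\bigr),
\]
where $a_{c'} \in \{0,1,\ldots,n\}$, the set $\{c\}\cup\{c':a_{c'}>0\}$ forms a connected subgraph of $\Gamma$, and $a_c := (N - \sum_{c'\neq c}a_{c'}\ell(c'))/\ell(c)$ is a non-negative integer. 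The upper bound $a_{c'} \leq n$ forces $\sum_{c'\neq c} a_{c'}\ell(c')$ to be bounded by a constant depending only on $\Gamma$; hence for $N$ above an explicit threshold the non-negativity of $a_c$ is automatic. The integrality of $a_c$ is equivalent to the congruence $\sum_{c'\neq c} a_{c'}\ell(c') \equiv N \pmod{\ell(c)}$, which depends on $N$ only through $r = N \bmod \ell(c)$. The resulting finite set of correction vectors $E(c,r) \subset \mathcal{W}_\Gamma(0)$ therefore depends only on $(c,r)$.

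Next, to extract which elements of $E(c,r)$ correspond to actual vertices of $p_N(\Gamma)$, I would introduce the unbounded polyhedron
\[
Q(c,r) := \Conv\bigl(E(c,r)\bigr) + T_c \subset \mathcal{W}_\Gamma(0),
\]
where $T_c$ is the tangent cone of $p_*(\Gamma)$ at $w(c)$, translated so that its apex is the origin, and define $\err{\Gamma}{c}{r}$ as the (finite) vertex set of $Q(c,r)$. The verification then proceeds by duality: a point $v \in E(c,r)$ is a vertex of $Q(c,r)$ iff some linear functional $\phi$ uniquely maximizes on $Q(c,r)$ at $v$. Such a $\phi$ necessarily lies in the strict interior of the dual cone of $T_c$, so $\phi$ also uniquely maximizes on $p_*(\Gamma)$ at $w(c)$; since $p_N(\Gamma) \subseteq p_*(\Gamma)$, this forces the maximum of $\phi$ on $p_N(\Gamma)$ to be attained at a point in the cluster near $w(c)$, and by uniqueness on $E(c,r)$ this point must be $w(c) + v/N$. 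Conversely, any vertex of $p_N(\Gamma)$ lying near $w(c)$ is the unique maximum of some functional, which restricts to the strict interior of the dual cone of $T_c$ and hence singles out a vertex of $Q(c,r)$. Combining this equivalence with \Cref{claim:rough_bound} over all cycles $c$ yields the description \eqref{eq:exactVertices1}.

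The main obstacle is making the duality argument uniform in $N$: one needs an explicit threshold $N_0$ depending only on $\Gamma$ such that, for $N \geq N_0$, the maximum of any functional in the strict interior of the dual cone of $T_c$ on $p_N(\Gamma)$ really is attained in the cluster at $w(c)$, not ``accidentally'' at a point of a different cluster. This requires quantifying how much $\phi$ must drop between $w(c)$ and neighboring vertices $w(c')$ of $p_*(\Gamma)$, and comparing this drop to the worst-case perturbation $\|v/N\|$. An additional subtlety is handling degenerate situations where $w(c)$ lies on a lower-dimensional face of $p_*(\Gamma)$: one must then work with the relative interior of the appropriate face of the dual cone and keep track of the minimal face of $p_N(\Gamma)$ containing the cluster. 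These technical points are the bulk of the proof; the combinatorial structure leading to periodicity in $r$ is, by contrast, essentially immediate from the parametrization above.
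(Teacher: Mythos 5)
Your plan is essentially the paper's own proof: rescale to $P_N(\Gamma)=N\cdot p_N(\Gamma)$, use \Cref{claim:rough_bound} to reduce to correction vectors with bounded coefficients (whence the periodicity in $N \bmod \ell(c)$), and realize the error set as the vertex set of the unbounded polyhedron given by the convex hull of the admissible (connected/$c$-irreducible) lattice points inside the tangent cone at $w(c)$, with membership verified via separating linear functionals. The two technical points you flag are resolved in the paper much as you sketch — the uniform threshold in $N$ comes from comparing the fixed gaps $\phi(w(c))-\phi(w(c'))$ against the $O(1/N)$ perturbations over the finitely many relevant functionals — and the degeneracy worry is moot because every $w(c)$ is itself a vertex of $p_*(\Gamma)$.
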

We will later be more precise about what ``sufficiently large" means, see \Cref{rmk:sufficientlyLarge}. 
This implies that the number of vertices of $p_N(\Gamma)$ is at most
\[
\sum_{c \in \simpcycleSet{\Gamma}}{\max_{r=0,\ldots,\ell(c)-1}}{|\err{\Gamma}{c}{r}|},
\]
which is independent of $N$.

In the rest of this section, we explain why this theorem is true, and how to compute these ``lists of error terms" $\err{\Gamma}{c}{r}$. We are going to need some more definitions from polyhedral geometry. We hope they will all be clear from the context, but the reader who desires more background can for instance consult \cite{ziegler_lectures_1995}. 
In the next section, we show the results for the complete graph on 4 vertices.

\subsubsection{Construction of \texorpdfstring{$\err{\Gamma}{c}{r}$}{err\_\{G\}(c,r)}}

Before we can define the error terms $\err{\Gamma}{c}{r}$, we need to introduce some properties that our weight matrices can satisfy. To a matrix $M \in \mathcal{W}_{\Gamma}$ we associate a subgraph $\Gamma_M \subset \Gamma$, whose edges correspond to the nonzero entries in $M$. The matrix $M$ is irreducible if and only if $\Gamma_M$ is strongly connected. We will need two technical variants of this definition. 
\begin{definition} \label{def:irreducibleMatrix}
    A matrix $M \in \mathcal{W}_{\Gamma}$ is \emph{weakly irreducible} if $\Gamma_M$ is connected after removing isolated vertices.
\end{definition}
If the entries of a matrix $M \in \mathcal{W}_{\Gamma}$ are nonnegative integers, being weakly irreducible is equivalent to being the weight matrix of a closed path. On the other hand, if we sum the weight matrices of 2 simple cycles that don't share a vertex, we obtain a matrix that is not weakly irreducible.
\begin{definition} \label{def:c-irreducibleMatrix}
    Let $M \in \mathcal{W}_{\Gamma}$ and $c \in \simpcycleSet{\Gamma}$ a simple cycle.     
    Then $M$ is \emph{$c$-irreducible} if the graph obtained by adding the edges of $c$ to $\Gamma_M$, is connected. The vertices of this graph form the \emph{$c$-support} of $M$.   
\end{definition}
This will be a relevant condition on our error terms: if $e$ is $c$-irreducible, then the matrix from \Cref{eq:exactVertices1} will be weakly irreducible, i.e.\ it is the weight matrix of a closed path. 
Now we are ready to mathematically define the error terms $\err{\Gamma}{c}{r}$. 
\begin{definition}\label{def:errorTerms}
    For every simple cycle $c$ in $\Gamma$ and every $r=0,1,\dots,\ell(c)-1$ we make the following construction:
    \begin{itemize}
        \item Consider the cone $C_{c,r}$ with vertex $r\cdot w(c)$ and rays in the directions $w(c')-w(c)$, where $c'$ runs over all other simple cycles of $\Gamma$. This cone lives in $\mathcal{W}_\Gamma(r)$. 
        \item Inside $C_{c,r}$, we take all matrices $M$ that satisfy the two conditions:
        \begin{itemize}
            \item The entries of $M$ are integers, i.e.\ $M$ is a lattice points.
            \item $M$ is $c$-irreducible.
        \end{itemize}
        The convex hull of all such matrices is called $\tilde{C}_{c,r}$. 
        \item Define $\errtilde{\Gamma}{c}{r} \subset \mathcal{W}_\Gamma(r)$ to be the set of all vertices in $\tilde{C}_{c,r}$, (i.e.\ all points in $\tilde{C}_{c,r}$ that are not a convex combination of other points). 
        \item Finally, define 
        \begin{equation} \label{eq:errorterms}
        \err{\Gamma}{c}{r} := \{\tilde{e} - r \cdot w(c) \mid \tilde{e} \in \errtilde{\Gamma}{c}{r}\}).
        \end{equation}
    \end{itemize}
\end{definition}
Now that we have defined $\err{\Gamma}{c}{r}$, there are three things to do:
\begin{enumerate}
    \item Show that is the set we wanted, in other words, prove \Cref{thm:exactVertices}.
    \item Show that $\err{\Gamma}{c}{r}$ is actually a finite set.
    \item Find an algorithm to compute it in practice.
\end{enumerate}
In the remainder of this subsection we deal with the first point; the other two will be handled in the next subsections. 
\begin{proof}[{Proof of \Cref{thm:exactVertices}}]
    We need to show that $P_N(\Gamma)$ is the convex hull of all lattice points 
    \begin{equation} \label{eq:correctionTerm}
    a \cdot W(c) + v \text{ with } v \in \errtilde{\Gamma}{c}{r}
    \end{equation}
    where $N=a\cdot \ell(c) + r$.

    Note that the matrices \Cref{eq:correctionTerm} have nonnegative entries: by construction the only negative entries $v \in \tilde{L}_{c,r}$ are in the positions corresponding to $c$, so if we choose $N$ (and hence $a$) large enough they become nonnegative. We can even assume that the entries on the positions corresponding to $c$ are positive. Then (by definition of $c$-irreducibility) all matrices \Cref{eq:correctionTerm} are weakly irreducible and have nonnegative integer entries. Hence they are all weight matrices of closed paths of length $N$, i.e.\ they lie in $P_N(\Gamma)$. 

    We now show that every vertex of $P_N(\Gamma)$ is actually of the form \Cref{eq:correctionTerm}. 
    So let $p$ be a vertex of $P_N(\Gamma)$. Let $F$ denote the smallest face of $N\cdot p_*(\Gamma)$ that contains $p$. Let $C_F$ denote the cycles corresponding to the vertices of $F$. Note that $p$ 
    is weakly $c$-irreducible for $c \in C_F$. This is true since $p$ itself is weakly irreducible, and the entries corresponding to $c$ are already nonnegative.

    We can find a linear function $\xi$ such that the unique minimum of $\xi$ on  $P_N(\Gamma)$ is achieved at $p$, and the unique minimum of $\xi$ on $N\cdot p_*(\Gamma)$ is achieved at $w(c)$ for some $c$ in $C_F$. Indeed: start with any $\xi$ that has its unique minimum on  $P_N(\Gamma)$ at $p$ (this exists because $p$ is a vertex), perturb the coefficients such that all numbers $\xi(w(c))$ are distinct, then add to it a linear function that is $0$ on $F$ and takes large positive values on the $w(c)$ with $c \notin C_F$. 

    Let $c_0$ be the cycle for which $\xi(w(c)))$ is minimal. Then $\xi$ is bounded from below on the cone $C_{c_0,N_{c_0}}$. So we can take the $c_0$-stably irreducible lattice point in $C_{c_0,N_{c_0}}$ for which $\xi$ is minimal. Call this lattice point $v$. Note that $v$ is a vertex of $\tilde{C}_{c,N_{c_0}}$. 

    Now we can translate the cone $C_{c_0,N_{c_0}}$ to place its vertex at $Nw(c_0)$: the resulting cone is $ \lfloor\frac{N}{\ell(c_0)}\rfloor W(c_0)+C_{c_0,N_{c_0}}$, and it contains the polytope $N\cdot p_*(\Gamma)$. The point in this translated cone at which $\xi$ is minimal is $\lfloor\frac{N}{\ell(c)}\rfloor\cdot W(c) + v$, but this point lies is $N\cdot p_*(\Gamma)$ and hence it must be equal to the point $p$ we started with. This shows that $p$ has the desired form \Cref{eq:correctionTerm}. 
\end{proof}

\begin{remark} \label{rmk:sufficientlyLarge}
Once we have computed $\err{\Gamma}{c}{r}$, we can compute how large $N$ needs to be for Theorem \ref{thm:exactVertices} to be true: if all matrices \Cref{eq:correctionTerm} have positive entries on the positions corresponding to $c$.
\end{remark}

\subsubsection{Proof of finiteness and inefficient computation}
\label{proof_of_finiteness_and_inefficient_computation}
In what follows, we will work with the set $\errtilde{\Gamma}{c}{r}$ instead of $\err{\Gamma}{c}{r}$, since the former consists of matrices with integer entries. 
We will now show that $\errtilde{\Gamma}{c}{r}$ is a finite set; which will also give us an inefficient algorithm to compute it. Recalling \Cref{def:errorTerms}, we start by considering the cone $C_{c,r}$. 
The cone $C_{c,r}$ consists of all points
\begin{equation}\label{eq:pointInCone}
\lambda_{c} W(c) + \sum_{c' \in \simpcycleSet{\Gamma} \setminus c}{\lambda_{c'}W(c')} \text{ with } \sum_{c' \in \simpcycleSet{\Gamma}}{\lambda_{c'}\ell(c')}=r
\end{equation}
where $\lambda_{c}$ can be be negative but the other $\lambda_{c'}$ have to be nonnegative. Equivalently: $C_{c,r}$ is the set of all points in $\mathcal{W}_{\Gamma}(r)$ for which the coordinates not corresponding to $c$ are nonnegative.

Now we need to take the set of all $c$-irreducible lattice points in $C_{c,r}$. Clearly whenever all $\lambda_{c'}$ are integers, \Cref{eq:pointInCone} is a lattice point. With some more work, one sees that actually every lattice point in $C_{c,r}$ can be written as \Cref{eq:pointInCone} with all coefficients integers (the idea is to add a large multiple of $W(C_0)$ so all entries become positive, and then use decomposition of a cycle into simple cycles).
% \footnote{the idea is to add a large multiple of $W(C_0)$ so all entries become positive, and then use decomposition of a cycle into simple cycles}.
The $c$-irreducibility only depends on which of the $\lambda_{c'}$ (with $c' \neq c$) are zero and which are not.

\begin{definition}
We make the following constructions
\begin{itemize}
    \item $\mathcal{E}_{c,r}$ is the set of $c$-irreducible lattice points in $C_{c,r}$. 
    This set is infinite.
    \item $\mathcal{E}'_{c,r}$ is the subset of points $p$ in $\mathcal{E}_{c,r}$ for which $p+\lcm{\ell(c),\ell(c')}(w(c)-w(c'))$ is no longer in $\mathcal{E}_{c,r}$ for all $c' \neq c$.
    This is a finite set.
\end{itemize}
\end{definition}
\begin{remark} \label{rem:finite_set_statement}
If one of the coefficients $\lambda_{c'}$ (with $c' \neq c$) is larger than $\lcm{\ell(c),\ell(c')}/\ell(c')+1$, the sum in \Cref{eq:pointInCone} will not lie in $\mathcal{E}'_{c,r}$.
\end{remark}

Furthermore, one sees that the set $\tilde{C}_{c,r} = \Conv(\mathcal{E}_{c,r})$ equals the polyhedron generated by the points $\mathcal{E}'_{c,r}$ and the ray directions $w(c')-w(c)$. So we have reduced it to a finite computation:
\begin{itemize}
    \item Compute the finite set $\mathcal{E}'_{c,r}$.
    \item Compute the polyhedron generated by the points $\mathcal{E}'_{c,r}$ and the ray directions $w(c')-w(c)$.
    \item Compute the vertices of this polyhedron. This is $\errtilde{\Gamma}{c}{r}$.
\end{itemize}
In particular, this shows that $\errtilde{\Gamma}{c}{r}$ is actually finite.

The final two points can be done using standard polyhedral software such as \texttt{normaliz}~\cite{bruns_normaliz_2017}. The first one can in principle be done by running over all vectors \Cref{eq:pointInCone} where $\lambda_{c'} \leq \lcm{\ell(c),\ell(c')}/\ell(c')+1$ for all $c' \neq c$, and $\lambda_{c}$ is uniquely determined by the equality on the right. But this is very slow in practice.

\subsubsection{Making the computation more efficient}

A lattice point in $C_{c,r}$ can be decomposed as \Cref{eq:pointInCone} in many different ways. 
To get rid of this redundancy, we triangulate the cone $C_{c,r}$. This means we compute a set $\Delta$ of subsets of $\simpcycleSet{\Gamma}\setminus {c}$ (referred to as the \emph{simplices} of the triangulation), such that 
\begin{itemize}
    \item $\Delta$ is closed under taking subsets,
    \item for every $S \in \Delta$, the vectors $\{w(c')-w(c)\mid c' \in S\}$ are linearly independent,
    \item every point in $C_{c,r}$ can be uniquely written in the form
\begin{align}
& r w(c) + \sum_{c' \in S}{\mu_{c'}(w(c')-w(c))} \\
=& \lambda_{c} W(c) + \sum_{c' \in S}{\lambda_{c'}W(c')} \label{eq:triangulatedCone}
\end{align} for some (uniquely defined) $S \in \Delta$. Here $\lambda_{c}$ is arbitrary and $\lambda_{c'}$ are strictly positive, and $\mu_{c'}=\ell(c)\lambda_{c'}$. Such a point is a lattice point if and only if all the $\lambda$'s are integers. 
\end{itemize}
To put it geometrically: every $S \in \Delta$ defines a simplicial cone $\operatorname{Cone}_S$, defined by the rays $\{w(c')-w(c)\mid c' \in S\}$, and every point in $C_{c,r}$ lies is in the interior of exactly one of these cones.

This triangulation can also be computed using \texttt{normaliz}. 
Now the idea is to go through all simplices in $\Delta$, and see if they could possibly contain vertices of $\tilde{C}_{c,r}$. 
For every simplex $S$, we compute the corresponding matrix 
\[
M_S = \sum_{c' \in S} W(c').
\]
The following three claims allow us to discard many simplices:

\begin{claim} \label{claim:connected_simplex}
    If $M_S$ is not a $c$-irreducible matrix, the points in the interior of $\operatorname{Cone}_S$ will also not be $c$-irreducible. In particular, they cannot be vertices of $\tilde{C}_{c,r}$. \qed
\end{claim}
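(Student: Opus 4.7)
The plan is to show that every interior point $p$ of $\operatorname{Cone}_S$ has the same $c$-irreducibility status as $M_S$. Since $\tilde{C}_{c,r}$ is, by construction in \Cref{def:errorTerms}, the convex hull of the $c$-irreducible lattice points in $C_{c,r}$, this equivalence at once forbids interior points of $\operatorname{Cone}_S$ from lying in $\tilde{C}_{c,r}$ (let alone from being vertices of it) as soon as $M_S$ fails to be $c$-irreducible.

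First I would write a generic interior point in the form given by \Cref{eq:triangulatedCone},
\[
p \;=\; \lambda_c\, W(c) + \sum_{c' \in S} \lambda_{c'}\, W(c'),
\]
with $\lambda_{c'}>0$ for all $c' \in S$ and $\lambda_c \in \mathbb{R}$ arbitrary. The key observation is that for any edge $e$ of $\Gamma$ that does not lie on the simple cycle $c$, one has $W(c)_e=0$, so the $e$-entry of $p$ is a strictly positive linear combination of the nonnegative integers $W(c')_e$ for $c'\in S$. Hence this entry is nonzero precisely when some $c' \in S$ traverses $e$, which is exactly the condition for $e$ to be an edge of $\Gamma_{M_S}$. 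In particular, the edge sets of $\Gamma_p$ and $\Gamma_{M_S}$ agree outside of $E(c)$.

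Next, I would invoke \Cref{def:c-irreducibleMatrix}, which tells us that $c$-irreducibility of a matrix $M$ depends only on the edge set $E(\Gamma_M)\cup E(c)$: one simply adds the edges of $c$ back and asks for connectedness. Combining this with the previous step gives $E(\Gamma_p)\cup E(c)=E(\Gamma_{M_S})\cup E(c)$, regardless of any sign-induced cancellations that $\lambda_c$ might produce on edges of $c$, since those edges are present in both unions anyway. Therefore $p$ is $c$-irreducible iff $M_S$ is, and the claim follows. I do not expect a genuine obstacle here; the only subtlety is that $\lambda_c$ can make entries of $p$ on edges of $c$ vanish, but this is neutralized precisely because $c$-irreducibility restores those very edges, so the argument reduces to comparing supports on the edges outside $c$ where no cancellation is possible.
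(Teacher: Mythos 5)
Your argument is correct and is exactly the (omitted) reasoning behind the paper's \qed: on every edge outside $c$ an interior point of $\operatorname{Cone}_S$ is a strictly positive combination of the nonnegative entries $W(c')_e$, $c'\in S$, so its support there coincides with that of $M_S$, and $c$-irreducibility depends only on this support together with the edges of $c$. One small overstatement: the equivalence does not ``forbid interior points of $\operatorname{Cone}_S$ from lying in $\tilde{C}_{c,r}$'' --- a non-$c$-irreducible point can perfectly well lie in the convex hull of $c$-irreducible points --- but it does forbid them from being \emph{vertices} of $\tilde{C}_{c,r}$, since vertices of a convex hull must belong to the generating set, and that is all the claim asserts.
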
 

\begin{claim} \label{claim:non_extremal_simplex_1}
    Suppose $M_S':=M_S-\lcm{\ell(c),\ell(c')}w(c')$ has nonnegative entries outside of $c$, and is $c$-irreducible with the same $c$-support as $M_S$, for some $c' \in S$. Then for every lattice point $p$ in the interior of $\operatorname{Cone}_S$,
    $p$ is not a vertex of $\tilde{C}_{c,r}$.

    If moreover this $c$-support is all of $[n]$, then for every $S' \in \Delta$ that contains $S$, none of the points in the interior of $\operatorname{Cone}_{S'}$ are vertices of $\tilde{C}_{c,r}$. 
\end{claim}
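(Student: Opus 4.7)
The plan is to show that any lattice point $p$ in the interior of $\operatorname{Cone}_S$ can be expressed as the midpoint $p = \tfrac{1}{2}(p_+ + p_-)$ of two other lattice points in $\mathcal{E}_{c,r}$, which immediately forbids $p$ from being a vertex of $\tilde{C}_{c,r}$. The natural choice, mirroring the definition of $\mathcal{E}'_{c,r}$, is to shift $p$ along the direction $w(c)-w(c')$ associated to the ray of $c'$:
\[
p_{\pm} := p \pm \operatorname{lcm}(\ell(c),\ell(c'))\bigl(w(c) - w(c')\bigr).
\]
Both $\operatorname{lcm}(\ell(c),\ell(c')) \cdot w(c)$ and $\operatorname{lcm}(\ell(c),\ell(c')) \cdot w(c')$ are integer matrices, and their difference has zero row and column sums, so $p_\pm$ are lattice points in $\mathcal{W}_\Gamma(r)$ whenever $p$ is. It remains to check that $p_\pm$ have nonnegative entries outside of $c$ and are $c$-irreducible, i.e., that they belong to $\mathcal{E}_{c,r}$.

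To verify these, I would exploit the decomposition $p = M_S + \bigl(\sum_{c''\in S}(\lambda_{c''}-1)W(c'') + \lambda_c W(c)\bigr)$, whose parenthesized remainder has nonnegative entries outside of $c$ since $\lambda_{c''}\geq 1$ for all $c''\in S$. For the harder direction $p_+$, rewriting it as $(M_S - \operatorname{lcm}\cdot w(c')) + (\text{nonneg outside of }c) + \operatorname{lcm}\cdot w(c)$ shows that its entries outside of $c$ are nonnegative by the hypothesis on $M_S$. For $p_-$, one simply adds the nonnegative matrix $\operatorname{lcm}\cdot w(c')$ to the entries of $p$ outside of $c$, so nonnegativity is immediate. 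Similarly, the graph $\Gamma_{p_+}\cup c$ contains $\Gamma_{M_S - \operatorname{lcm}\cdot w(c')}\cup c$ as a subgraph with the same vertex set (which equals the common $c$-support), while $\Gamma_{p_-}\cup c$ contains $\Gamma_p\cup c$, which is itself connected because $M_S$ inherits $c$-irreducibility from the hypothesis. Consequently both $p_\pm$ lie in $\mathcal{E}_{c,r}$, and since $w(c)\neq w(c')$ the two shifts are distinct from $p$, so $p$ is not a vertex of $\tilde{C}_{c,r}$.

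The main obstacle I anticipate is precisely this bookkeeping of the $c$-support under the shifts, since one must rule out the appearance of new isolated connected components when the entries of $c'$ are cancelled in passing from $p$ to $p_+$. The hypothesis that $M_S - \operatorname{lcm}\cdot w(c')$ shares its $c$-support with $M_S$ is engineered exactly to close this gap: all edges added to $\Gamma_{M_S - \operatorname{lcm}\cdot w(c')}$ in forming $\Gamma_{p_+}$ use vertices already in the connected component. For the second part, the same midpoint construction is applied to an arbitrary lattice point $p' \in \operatorname{int}(\operatorname{Cone}_{S'})$ with $S'\supseteq S$. Nonnegativity outside of $c$ follows by decomposing $p' = M_{S'} + (\text{nonneg})$ and applying the hypothesis on $M_S$, since $M_{S'}-M_S$ is itself nonneg outside of $c$. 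The only new issue is that cycles $c''\in S'\setminus S$ may contribute vertices lying outside the $c$-support of $M_S - \operatorname{lcm}\cdot w(c')$; strengthening that $c$-support to equal $[n]$ eliminates this possibility entirely, because every vertex of $\Gamma$ is then already present in the connected subgraph and adjoining edges of $c''$ cannot disconnect anything.
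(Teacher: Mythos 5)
Your proof is correct and follows essentially the same route as the paper: the key step in both is to shift $p$ by $\lcm{\ell(c),\ell(c')}\,(w(c)-w(c'))$ and use the hypothesis on $M_S'$ (together with $p\geq M_S$ outside of $c$ and the equality of $c$-supports) to verify that the shifted point is still a $c$-irreducible lattice point of the cone. The only cosmetic difference is that you make the non-vertex conclusion self-contained via the midpoint $p=\tfrac{1}{2}(p_{+}+p_{-})$, whereas the paper checks only $p_{+}\in\mathcal{E}_{c,r}$ and then appeals to the fact that vertices of $\tilde{C}_{c,r}$ must lie in $\mathcal{E}'_{c,r}$.
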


\begin{proof}
It suffices to show that $p':=p+\lcm{\ell(c),\ell(c')}(w(c)-w(c'))$ is still in $\mathcal{E}_{c,r}$. 
By construction, $p'$ is still a lattice point in our space $\mathcal{W}_\Gamma(r)$. Moreover, since outside of $c$ the entries of $p$ are larger than those of $M_S$, we see that $p'$ has nonnegative entries outside of $c$. So $p'$ is a lattice point inside $C_{c,r}$, we still need to show it is $c$-irreducible. This follows from the fact that $M_S'$ is already $c$-irreducible and that $p'$ can be obtained from $M_S'$ by adding multiples of $W(c')$, $c' \in S$.

The proof of the ``moreover" part is analogous and left to the reader.
\end{proof}

\begin{claim} \label{claim:non_extremal_simplex_2}
    Let $c_1,c_2 \in S$ of the same length, and
    suppose both $M_S+w(c_1)-w(c_2)$ and $M_S+w(c_2)-w(c_1)$ have nonnegative entries outside of $c$, and are $c$-irreducible with the same $c$-support as $M_S$. Then for every lattice point $p$ in the interior of $\operatorname{Cone}_S$, 
    $p$ is not a vertex of $\tilde{C}_{c,r}$.

   If moreover this $c$-support is all of $[n]$, then for every $S' \in \Delta$ that contains $S$, none of the points in the interior of $\operatorname{Cone}_{S'}$ are vertices of $\tilde{C}_{c,r}$.
\end{claim}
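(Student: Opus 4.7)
The plan is to mirror the proof of Claim \ref{claim:non_extremal_simplex_1} verbatim, replacing its ``cycle-against-base'' shift by a ``horizontal swap'' that exchanges one copy of $c_2$ for one copy of $c_1$ at the common length scale $\ell=\ell(c_1)=\ell(c_2)$. The goal is to exhibit $p$ as the midpoint of two distinct elements of $\mathcal{E}_{c,r}$, thereby ruling out $p$ as a vertex of $\tilde{C}_{c,r}=\Conv(\mathcal{E}_{c,r})$.

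First I would define $p^{+}:=p+(W(c_1)-W(c_2))$ and $p^{-}:=p-(W(c_1)-W(c_2))$. Because $c_1$ and $c_2$ have the same length, $W(c_1)-W(c_2)$ is an integer matrix of total weight zero, so $p^{\pm}$ are integer lattice points of $\mathcal{W}_{\Gamma}(r)$ satisfying $p=\tfrac{1}{2}(p^{+}+p^{-})$. Writing the unique simplex expansion $p=\lambda_{c}W(c)+\sum_{c'\in S}\lambda_{c'}W(c')$, the interior hypothesis forces $\lambda_{c'}\geq 1$ for every $c'\in S$, so in $p^{\pm}$ the only coefficients that move are $\lambda_{c_1},\lambda_{c_2}\mapsto\lambda_{c_1}\pm 1,\lambda_{c_2}\mp 1$, which remain $\geq 0$. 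Hence every entry of $p^{\pm}$ outside $c$ stays nonnegative.

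Next I would verify $c$-irreducibility, which is where the two hypotheses enter. The support of $p^{+}$ outside $c$ contains the support of $M_S+w(c_1)-w(c_2)$ outside $c$ when $\lambda_{c_2}\geq 2$, and equals it when $\lambda_{c_2}=1$ (the only edges that can drop out are those used exclusively by $c_2$). Either way, the hypothesis that $M_S+w(c_1)-w(c_2)$ is $c$-irreducible with the same $c$-support as $M_S$ transfers directly to $p^{+}$; the symmetric hypothesis on $M_S+w(c_2)-w(c_1)$ yields the analogous conclusion for $p^{-}$. Thus $p^{\pm}\in\mathcal{E}_{c,r}$ and they are distinct because $W(c_1)\neq W(c_2)$, so $p=\tfrac{1}{2}(p^{+}+p^{-})$ realizes $p$ as the midpoint of two distinct points of $\mathcal{E}_{c,r}\subset\tilde{C}_{c,r}$, concluding that $p$ is not a vertex.

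For the ``moreover'' statement I would note that if the $c$-support of $M_S$ is already all of $[n]$, then so is the $c$-support of $M_{S'}$ for any $S'\supset S$; the same horizontal swap $q^{\pm}:=q\pm(W(c_1)-W(c_2))$ applied to a lattice point $q$ in the interior of $\operatorname{Cone}_{S'}$ then preserves both nonnegativity (by the same interior argument, applied to the $S'$-expansion) and the full $[n]$ $c$-support, hence $c$-irreducibility. The midpoint argument again forbids $q$ from being a vertex. The main obstacle I anticipate is the $c$-irreducibility bookkeeping: one must carefully align the integer shift $W(c_1)-W(c_2)$ on $p$ with the fractional hypothesis on $w(c_1)-w(c_2)$, and in particular verify in the boundary subcase $\lambda_{c_1}=1$ or $\lambda_{c_2}=1$ that the preserved $c$-support of $M_S\pm(w(c_1)-w(c_2))$ is precisely what prevents essential vertices from being lost when passing to $p^{\pm}$.
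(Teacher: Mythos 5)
Your proof is correct and takes essentially the same route as the paper, whose entire argument is the one-line observation that $p \in \Conv\{p+w(c_1)-w(c_2),\, p-w(c_1)+w(c_2)\}$ with the verification that both points are $c$-irreducible lattice points of $C_{c,r}$ left to the reader. You supply exactly those left-out details — nonnegativity outside $c$ from the interior/integrality of the simplex coordinates, and $c$-irreducibility from the support hypotheses — and sensibly use the integral shift $W(c_1)-W(c_2)$ where the paper writes the normalized $w(c_1)-w(c_2)$, so nothing further is needed.
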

\begin{proof}
    We have $p \in \Conv\{p+w(c_1)-w(c_2),p-w(c_1)+w(c_2)\}$, the only thing to verify is that both these matrices are $c$-irreducible lattice points in $C_{c,r}$, which we leave to the reader. 
\end{proof}

Finally, for all simplices $S$ we didn't discard, we compute all vectors
\begin{equation}\label{eq:pointInCone2}
\lambda_{c} W(c) + \sum_{c' \in S}{\lambda_{c'}W(c)} \text{ with } \sum_{c' \in S}{\lambda_{c'}\ell(c)}=r,
\end{equation}
where $\lambda_{c'} \leq \lcm{\ell(c),\ell(c')}/\ell(c')+1$ for all $c' \in S$, and $\lambda_{c}$ is uniquely determined by the equality on the right. For each of them, we manually check if they are in $\mathcal{E}'_{c,r}$. 

The algorithm to find the error terms is summarized in \Cref{algo}.  
For ease of notation, we denote $\ell_c$ for $\ell(c)$, $w_c$ for $w(c)$ and $W_c$ for $W(c)$. 
The input of the algorithm is an oriented graph $\Gamma$, a simple cycle $c_0$ and an integer $r \in \{0, \dots, \ell(c_0)-1\}$. 
The algorithm computes the list of error terms $\errtilde{\Gamma}{c_0}{r}$.

\begin{figure*}
\hrule
\textbf{Algorithm}
\hrule 
\begin{algorithmic}
\Require $\bullet$ a finite oriented graph $\Gamma$ with self-loops, but no parallel edges \\
         $\bullet$ a simple cycle $c_0 \in \simpcycleSet{\Gamma}$ \\
         $\bullet$ an interger $r \in \{0,\dots,\ell_{c_0}-1\}$
\Statex
\LComment{Construct the cone $C_{c_0,r}$ with vertex $r\cdot w_{c_0}$ and rays $\{(w_c-w_{c_0}) \mid c \in \simpcycleSet{\Gamma}\}$}
\State $C_{c_0,r} \gets \Call{cone}{r\cdot w_{c_0},\{w_c-w_{c_0}\}_{c \in \simpcycleSet{\Gamma}}}$
\Statex
\LComment{Compute the triangulation of the cone}
\State $\Delta \gets \Call{triangulate}{C_{c_0,r}}$
\Statex
\LComment{Discard simplices according to claims \ref{claim:connected_simplex},\ref{claim:non_extremal_simplex_1} and \ref{claim:non_extremal_simplex_2}}
\State $\Delta \gets \Call{apply\_claims}{c_0,\Delta}$
\Statex
\LComment{For all the remaining simplices, find the points that are in $\mathcal{E}'_{c,r}$. Start with empty set}
\State $\mathcal{E}'_{c_0,r} \gets \emptyset$
\For{$S \in \Delta$}
    \LComment{According to the remark \ref{rem:finite_set_statement}, $\lambda_c=1,\dots,\lcm{\ell_{c_0},\ell_c}/\ell_c$ }
    \For{$(\lambda_c)_{c \in S}$ with $\lambda_c=1,\dots,\frac{\lcm{\ell_{c_0},\ell_c}}{\ell_c}+1$}
        \LComment{Since we want points in the cone, we need $\lambda_{c_0}\ell_{c_0} + \sum_{c \in S} \lambda_c \ell_c = r$}
        \If{$\sum_{c \in S} \lambda_c \ell_c \mod \ell_{c_0} = r$} 
            \LComment{Since we want lattice points, $\lambda_{c_0}$ is the integer satisfying the cone constraint}
            \State $\lambda_{c_0} \gets - \lfloor \frac{1}{\ell_{c_0}}\sum_{c \in S} \lambda_c \ell_c \rfloor$
            \LComment{Lattice point in the cone $C_{c_0,r}$}
            \State $\epsilon \gets \lambda_{c_0} W_{c_0} + \sum_{c \in S} \lambda_{c} W_c$
            \If{$\epsilon$ is $c_0$-irreducible}
                \LComment{$\epsilon \in \mathcal{E}_{c_0,r}$ because it is a $c_0$-irreducible lattice point in the cone. Now we check if $\epsilon$ is also in $\mathcal{E}_{c_0,r}'$}
                \State $\epsilon$\_in\_$\mathcal{E}_{c_0,r}'$ $\gets$ True 
                \For{$c \in S$}
                    \State $\epsilon' \gets \epsilon + \lcm{\ell_{c_0},\ell_{c}}(w_{c_0}-w_{c})$
                    \If{$\epsilon'$ is $c_0$-irreducible and has no negative entries outside $c_0$}
                    \LComment{$\epsilon' \in \mathcal{E}_{c_0,r}$ because it is a $c_0$-irreducible lattice point in the cone.
                    Thus $\epsilon \notin \mathcal{E}_{c_0,r}'$.}
                        \State $\epsilon$\_in\_$\mathcal{E}_{c_0,r}'$ $\gets$ False
                        \Break
                    \EndIf
                \EndFor
                \If{$\epsilon$\_in\_$\mathcal{E}_{c_0,r}'$}
                    \State $\mathcal{E}'_{c_0,r} \gets \mathcal{E}'_{c_0,r} \cup \{\epsilon\}$
                \EndIf
            \EndIf
        \EndIf
    \EndFor
\EndFor
\Statex
\LComment{Construct the polyhedron $\tilde{C}_{c_0,r}$ with the points $\mathcal{E}'_{c_0,r}$ and rays $\{(w_c-w_{c_0}) \mid c \in \simpcycleSet{\Gamma}\}$}
\State $\tilde{C}_{c_0,r} \gets \Call{cone}{\mathcal{E}'_{c_0,r} \;,\{w_c-w_{c_0}\}_{c \in \simpcycleSet{\Gamma}}}$
\Statex
\LComment{The error list correspond to the vertices of $\tilde{C}_{c_0,r}$}
\State $\errtilde{\Gamma}{c_0}{r} \gets \Call{vertices}{\tilde{C}_{c_0,r}}$
\end{algorithmic}
\hrule
\caption{Algorithm to compute the list of error term $\errtilde{\Gamma}{c}{r}$ as defined in \Cref{def:errorTerms}.
These error terms allow us to compute the vertices of $P_N(\Gamma)$ according to \Cref{eq:correctionTerm}. Note that this algorithm is independent of $N$.}
\label{algo}
\end{figure*}

\subsection{The case of \texorpdfstring{$K_4$}{K\_4}: results and example}

We used the algorithm from the previous section to compute the error lists $\errtilde{K_4}{c}{r}$ in the case of the complete graph of 4 nodes (with self-loops). 
\Cref{tab:errorTerms1} records the length of these error lists, i.e.\ the number of vertices of $p_N(K_4)$ that lie close to $w(c)$. 
For symmetry reasons, these numbers depend on the length of the cycle $c$ (and on $r$).  
For instance, for $N=19$, we have remainder $3$ after division by $4$, so the polytope $p_{19}(K_4)$ has $24$ vertices close to each of the six points $w(c)$ with $\ell(c)=4$. 
Repeating for the other cycle lengths, we find that $p_{19}(K_4)$ has in total 
\[
4\cdot1 + 6\cdot22 + 8\cdot18 + 6\cdot24 =424
\]
vertices. 
We can similarly compute the number of vertices of $p_{N}(K_4)$ for any $N$, the results only depend on the value $N$ modulo $12$ and are presented in the second column \Cref{tab:errorTerms3}. 
%This is also the maximum number of vertices that any of the polytopes $p_{N}(K_4)$ can have. 
As pointed out earlier, these results only hold for $N$ sufficiently large. 
Explicit computations indicate that in this context \enquote{sufficiently large} means $N \geq 18$.
\begin{table}[h]
    \centering
    \begin{tabular}{c | c || c | c | c | c }
    cycle length & $\#$ cycles & $r=0$ & $r=1$ & $r=2$ & $r=3$ \\
    \hline 
    1 & 4 & 1 & & & \\
    2 & 6 & 1 & 22 & & \\
    3 & 8 & 1 & 18 & 18 & \\
    4 & 6 & 1 & 24 & 12 & 24 \\
    \end{tabular}
    \caption{The sizes of $\errtilde{K_4}{c}{r}$.}
    \label{tab:errorTerms1}
\end{table}

The local polytope $\localpoly{(N,2)}_{\TINNproj}$ is the projection of $p_{N}(K_4)$ under the map $\Phi$ from \Cref{eq:weight matrix_to_corrvec}.
Recall from \Cref{subsec:graphTINN} that $\localpoly{(*,2)}_{\TINNproj}$ has $20$ vertices, corresponding to all of the cycles of length $1,2,3$ in $K_4$ and $2$ of the $6$ cycles of length $4$. 
Similar to before, each vertex of $\localpoly{(N,2)}_{\TINNproj}$ lies close to one of these 20 vertices; the ``error terms" can be computed by applying the projection $\Phi$ to the cones $\tilde{C}_{c,r}$ we computed earlier, and taking the vertices of the resulting polyhedra. 
It turns out these numbers again only depend on the cycle length (and on $r$), they can be found in \Cref{tab:errorTerms2}. 
As before, we can use this to compute the number of vertices of $\localpoly{(N,2)}_{\TINNproj}$ for any $N>18$, see the third column of \Cref{tab:errorTerms3}. 
We see that $\localpoly{(N,2)}_{\TINNproj}$ can have at most $200$ vertices:  

\begin{table}[h]
    \centering
    \begin{tabular}{c | c || c | c | c | c }
    cycle length & $\#$ cycles & r=0 & r=1 & r=2 & r=3 \\
    \hline 
    1 & 4 & 1 & & & \\
    2 & 6 & 1 & 14 & & \\
    3 & 8 & 1 & 8 & 8 & \\
    4 & 2 & 1 & 24 & 12 & 24 \\
    \end{tabular}
    \caption{Number of vertices of $\localpoly{(N,2)}_{\TINNproj}$ close to each vertex of $\localpoly{(*,2)}_{\TINNproj}$.}
    \label{tab:errorTerms2}
\end{table}

\begin{table}[h]
    \centering
    \begin{tabular}{c || c | c }
    $(N \operatorname{mod} 12)$ & $p_N(K_4)$  & $\localpoly{(N,2)}_{\TINNproj}$ \\
    \hline
    0 & 24 & 20 \\
    1 & 424 & 200 \\
    2 & 226 & 98 \\
    3 & 288 & 144 \\
    4 & 160 & 76 \\
    5 & 424 & 200 \\
    6 & 90 & 42 \\
    7 & 424 & 200 \\
    8 & 160 & 76 \\
    9 & 288 & 144 \\
    10 & 226 & 98 \\
    11 & 424 & 200 \\
    \end{tabular}
    \caption{Number of vertices of the closed path polytope $p_N(K_4)$ and the TINN Bell polytope $\localpoly{(N,2)}_{\TINNproj}$ for each value of $N>18$.}
    \label{tab:errorTerms3}
\end{table}

\section{Symmetry classes for TI-Bell inequalities} \label{app:symmetry_class}

In general, we consider that two inequalities are equivalent if they are equal under the permutation of parties, input and outcome labels.
For one party, there are $2m-1$ generators of permutations.
The first $m-1$ generators are the input labels permutations $\alpha_{x} \mapsto \alpha_{\sigma(x)}$, where $\sigma$ is a generator of the symmetric group $\text{Sym}(m)$. The $m$ remaining generators are given by the outcome labels permutations for input $x'$: $\alpha_{x} \mapsto -\alpha_x$ if $x=x'$.
By composing these $2m-1$ generators, one gets all possible input and outcome label permutations for a single party.
Since we are working with TI inequalities, the same input and outcome permutations must be applied to all the parties. 
Finally, in TI inequalities, there is also one possible party permutation.
It corresponds to the flipped permutation: $ABCD \rightarrow DCBA$. 
Under this permutation, the neighbors of a given party remain unchanged.
This holds for any interaction range $R$.
We have thus in total $2m$ generators for TI-$R$ inequalities.

The classical bound $\beta$ remains unchanged under these permutations. The symmetry group under consideration has $2^{m+1} \cdot m!$ elements: $m!$ input permutations, $2^m$ outcome permutations and $2$ party permutation. 
There are thus at most $2^{m+1} \cdot m!$ inequalities per symmetry class. 

For example in the case $R=m=2$, there are four permutation generators: parties are flipped $\alpha_{x,y}^{i,j} \rightarrow \alpha_{y,x}^{i,j}$, inputs are flipped, outcomes are flipped when input is $0$ and outcomes are flipped when input is $1$:
\begin{equation}
\begin{aligned} 
&\text{input flip} & \quad & \text{outcomes flip for input $x' \in \{0,1\}$} \\ 
&\alpha^{i}_{x} \mapsto \alpha^{i}_{x+1} & \quad &  \alpha^{i}_{x} \mapsto (-1)^{\delta_{x,x'}}\alpha^{i}_{x} \\
&\alpha^{i,i+1}_{x,y} \mapsto \alpha^{i,i+1}_{x+1,y+1} & \quad & \alpha^{i,i+1}_{x,y} \mapsto (-1)^{\delta_{x,x'}+\delta_{y,x'}}\alpha^{i,i+1}_{x,y}\\
&\alpha^{i,i+2}_{x,z} \mapsto \alpha^{i,i+2}_{x+1,z+1} & \quad & \alpha^{i,i+2}_{x,z} \mapsto (-1)^{\delta_{x,x'}+\delta_{z,x'}}\alpha^{i,i+2}_{x,z} \\
&\beta \mapsto \beta & \quad & \beta \mapsto \beta
\end{aligned}
\end{equation}
where all the input indices are taken modulo $m$. 
In the $(N,2)$ scenario with $R=2$ and $N$ divisible enough, the projected polytope has $32372$ facets and there are $2102$ different symmetric classes ~\cite{wang_entanglement_2017}.

\section{Renormalization computation} \label{Appendix:renormComp}
Suppose we have a tropical $n \times n$ matrix $F'(\boldsymbol{a})$ whose entries are linear functions of the variables $\boldsymbol{a}$. We can assume that this parametrization is injective, i.e.\ $F'(\boldsymbol{a})=0 \implies \boldsymbol{a}=0$. We want to find the solutions to 
\begin{equation} \label{eq:renormalizationAppendix}
    F'(\boldsymbol{a})^{\odot 2} = F'(\boldsymbol{a}),
\end{equation}
i.e.\ we want to find the values of $\boldsymbol{a}$ such that for all $i,j \in [n]$:
\begin{equation}
    \min_k{F'_{i,k}(\boldsymbol{a})+F'_{k,j}(\boldsymbol{a})}=F'_{i,j}(\boldsymbol{a}).
\end{equation}
To this end, we consider the polyhedron $\mathcal{P}_{F'}$ defined by the $n^3$ inequalities

\begin{equation} \label{eq:coneInequalities}
    L_{i,j,k}(\boldsymbol{a}) := F'_{i,k}(\boldsymbol{a})+F'_{k,j}(\boldsymbol{a}) - F'_{i,j}(\boldsymbol{a}) \geq 0. 
\end{equation}
Since $\mathcal{P}_{F'}$ is defined by \emph{homogeneous} linear inequalities it is a so-called polyhedral cone. A version of the Minkowski-Weyl theorem (to be precise, we need the additional requirement that the cone is \emph{pointed}, i.e.\ that it doesn't contain a linear space, this follows from the injectivity assumption) states that such a cone is the conical hull of finitely many vectors $\boldsymbol{a}_1,\dots,\boldsymbol{a}_c$:
\begin{equation}
    \mathcal{P}_{F'} =\{\sum_\ell{\lambda_\ell\boldsymbol{a}_\ell} \mid \lambda_\ell \in \mathbb{R}_{\geq 0}\}.
\end{equation}
The half-lines $\mathbb{R}_{\geq 0}\boldsymbol{a}_{\ell}$ are known as the \emph{rays} of the cone, the $\boldsymbol{a}_{\ell}$ themselves are the \emph{ray generators}. Our polyhedral cone can be decomposed into faces, where every face is the conical hull of a subset of the rays. Now the key observations to make are
\begin{enumerate}
    \item The solutions to \Cref{eq:renormalizationAppendix} are exactly the points in $\mathcal{P}_{F'}$ for which sufficiently many of the inequalities in \Cref{eq:coneInequalities} become equalities: if we group the inequalities in $n^2$ groups, we need that in each group one of them is an equality.
    \item The collection of inequalities $L_{i,j,k}(\boldsymbol{a})$ that are achieved only depend on which face lies in; and the faces where the most inequalities are achieved are the rays.
\end{enumerate}
This means that we can find solve \Cref{eq:renormalizationAppendix} by performing the following three steps:
\begin{enumerate}
    \item Compute the ray generators of the polyhedron 
    $\mathcal{P}_{F'}$ defined by the inequalities in \Cref{eq:coneInequalities}.
    \item For each ray generator, make a list of which inequalities are achieved.
    \item Now it is easy to list the faces of $\mathcal{P}_{F'}$ where enough of the inequalities are achieved; the solution set to \Cref{eq:renormalizationAppendix} is the union of these faces.
\end{enumerate}
The computational bottleneck appears to be the first step. We wonder if it is possible to speed up the algorithm by avoiding having to compute all the rays of $\mathcal{P}_{F'}$.

\section{Computing the violation of Bell inequalities numerically} \label{app:DMRG}
In this section, the numerical methods used to compute the violation of the TI-$2$ inequalities in the $(N,2,2)$ scenario are presented.

Finding the violation can be stated as a minimization problem:
\begin{equation}
\min \operatorname{Tr}[W\rho],
\end{equation}
where $\rho$ is a quantum state and $W$ is the quantum operator associated to the TI Bell inequality:
\begin{equation}
W = \frac{1}{N}\sum_{i=0}^{N-1} \left(\sum_{x=0}^{1} \alpha_x A^{(i)}_x + \sum_{x,y=0}^1 \alpha_{x,y} A^{(i)}_x A_y^{(i+1)}\right) , 
\end{equation}
with $\{A_{x}^{(i)}\}_{i,x}$ being the measurement operators (POVM).
Following~\cite{wang_entanglement_2017}, the measurement operators are assumed to have the form: $A^{(i)}_0 =  \dots \mathbb{I} \otimes M(0,0) \otimes \mathbb{I} \dots$ and $A^{(i)}_1 = \dots \mathbb{I} \otimes M(\theta_1,\theta_2) \otimes  \mathbb{I} \dots$ where the operator
\begin{equation}
M(\theta_1,\theta_2) := 
\begin{pmatrix}
\cos \theta_1 & \sin \theta_1 & 0 & 0 \\
\sin \theta_1 & -\cos \theta_1 & 0 & 0 \\
0 & 0 & \cos \theta_2 & \sin \theta_2 \\
0 & 0 & \sin \theta_2 & -\cos \theta_2 \\
\end{pmatrix}
\end{equation}
acts on the $i$-th site.

The method used here is a seesaw minimization. 
First, the optimization is performed over the state $\rho$ using the DMRG method~\cite{white_density_1992,schollwock_density-matrix_2005,schollwock_density-matrix_2011}. 
Next, the optimization is performed over the measurements by minimizing with respect to the parameters $\theta_1$ and $\theta_2$ using the BFGS method~\cite{broyden_convergence_1970,fletcher_new_1970,goldfarb_family_1970,shanno_conditioning_1970}.
This procedure is repeated until convergence.
Note that this method does not give convergence guarantees, however, our empirical tests show that convergence is reached reliably.

The implementation of the DMRG method uses the python library TenPy~\cite{hauschild_efficient_2018} with the following parameters: bond dimension is $50$, maximal number of sweeps is $100$, minimum cut-off of the SVD is $10^{-10}$, convergence is assumed when the energy variation is smaller than $10^{-5}$ and for the remaining parameters the default values are used.
The implementation of the BFGS method uses the python library SciPy~\cite{virtanen_scipy_2020} with the \texttt{l-bfgs-b} minimization method with the default values.
Since the local-minimum to which the BFGS algorithm converges is sensitive to the initial guess of the parameters $(\theta_1,\theta_2)$, the optimizations were repeated five times, each time with different initial guesses randomly chosen.
When the difference between the local bound and the minimization result is at least $10^{-2}$, it is assumed that the inequality is violated.
Details on implementation are given \href{https://gitlab.com/elo_val/ti-bell-inequalities}{online}~\footnotemark[0].

% Bibliography and references
\footnotetext[0]{The code supporting this work is available at \href{https://gitlab.com/elo_val/ti-bell-inequalities}{\url{gitlab.com/elo_val/ti-bell-inequalities}}}.

\bibliography{references}% Produces the bibliography via BibTeX.

%apsrev4-2.bst 2019-01-14 (MD) hand-edited version of apsrev4-1.bst
%Control: key (0)
%Control: author (8) initials jnrlst
%Control: editor formatted (1) identically to author
%Control: production of article title (0) allowed
%Control: page (0) single
%Control: year (1) truncated
%Control: production of eprint (0) enabled
\begin{thebibliography}{75}%
\makeatletter
\providecommand \@ifxundefined [1]{%
 \@ifx{#1\undefined}
}%
\providecommand \@ifnum [1]{%
 \ifnum #1\expandafter \@firstoftwo
 \else \expandafter \@secondoftwo
 \fi
}%
\providecommand \@ifx [1]{%
 \ifx #1\expandafter \@firstoftwo
 \else \expandafter \@secondoftwo
 \fi
}%
\providecommand \natexlab [1]{#1}%
\providecommand \enquote  [1]{``#1''}%
\providecommand \bibnamefont  [1]{#1}%
\providecommand \bibfnamefont [1]{#1}%
\providecommand \citenamefont [1]{#1}%
\providecommand \href@noop [0]{\@secondoftwo}%
\providecommand \href [0]{\begingroup \@sanitize@url \@href}%
\providecommand \@href[1]{\@@startlink{#1}\@@href}%
\providecommand \@@href[1]{\endgroup#1\@@endlink}%
\providecommand \@sanitize@url [0]{\catcode `\\12\catcode `\$12\catcode
  `\&12\catcode `\#12\catcode `\^12\catcode `\_12\catcode `\%12\relax}%
\providecommand \@@startlink[1]{}%
\providecommand \@@endlink[0]{}%
\providecommand \url  [0]{\begingroup\@sanitize@url \@url }%
\providecommand \@url [1]{\endgroup\@href {#1}{\urlprefix }}%
\providecommand \urlprefix  [0]{URL }%
\providecommand \Eprint [0]{\href }%
\providecommand \doibase [0]{https://doi.org/}%
\providecommand \selectlanguage [0]{\@gobble}%
\providecommand \bibinfo  [0]{\@secondoftwo}%
\providecommand \bibfield  [0]{\@secondoftwo}%
\providecommand \translation [1]{[#1]}%
\providecommand \BibitemOpen [0]{}%
\providecommand \bibitemStop [0]{}%
\providecommand \bibitemNoStop [0]{.\EOS\space}%
\providecommand \EOS [0]{\spacefactor3000\relax}%
\providecommand \BibitemShut  [1]{\csname bibitem#1\endcsname}%
\let\auto@bib@innerbib\@empty
%</preamble>
\bibitem [{\citenamefont {Brunner}\ \emph {et~al.}(2014)\citenamefont
  {Brunner}, \citenamefont {Cavalcanti}, \citenamefont {Pironio}, \citenamefont
  {Scarani},\ and\ \citenamefont {Wehner}}]{brunner_bell_2014}%
  \BibitemOpen
  \bibfield  {author} {\bibinfo {author} {\bibfnamefont {N.}~\bibnamefont
  {Brunner}}, \bibinfo {author} {\bibfnamefont {D.}~\bibnamefont {Cavalcanti}},
  \bibinfo {author} {\bibfnamefont {S.}~\bibnamefont {Pironio}}, \bibinfo
  {author} {\bibfnamefont {V.}~\bibnamefont {Scarani}},\ and\ \bibinfo {author}
  {\bibfnamefont {S.}~\bibnamefont {Wehner}},\ }\bibfield  {title} {\bibinfo
  {title} {Bell nonlocality},\ }\href
  {https://doi.org/10.1103/RevModPhys.86.419} {\bibfield  {journal} {\bibinfo
  {journal} {Reviews of Modern Physics}\ }\textbf {\bibinfo {volume} {86}},\
  \bibinfo {pages} {419} (\bibinfo {year} {2014})}\BibitemShut {NoStop}%
\bibitem [{\citenamefont {Bell}(1964)}]{bell_einstein_1964}%
  \BibitemOpen
  \bibfield  {author} {\bibinfo {author} {\bibfnamefont {J.~S.}\ \bibnamefont
  {Bell}},\ }\bibfield  {title} {\bibinfo {title} {On the {{Einstein Podolsky
  Rosen}} paradox},\ }\href
  {https://doi.org/10.1103/PhysicsPhysiqueFizika.1.195} {\bibfield  {journal}
  {\bibinfo  {journal} {Physics Physique Fizika}\ }\textbf {\bibinfo {volume}
  {1}},\ \bibinfo {pages} {195} (\bibinfo {year} {1964})}\BibitemShut {NoStop}%
\bibitem [{\citenamefont {Clauser}\ \emph {et~al.}(1969)\citenamefont
  {Clauser}, \citenamefont {Horne}, \citenamefont {Shimony},\ and\
  \citenamefont {Holt}}]{clauser_proposed_1969}%
  \BibitemOpen
  \bibfield  {author} {\bibinfo {author} {\bibfnamefont {J.~F.}\ \bibnamefont
  {Clauser}}, \bibinfo {author} {\bibfnamefont {M.~A.}\ \bibnamefont {Horne}},
  \bibinfo {author} {\bibfnamefont {A.}~\bibnamefont {Shimony}},\ and\ \bibinfo
  {author} {\bibfnamefont {R.~A.}\ \bibnamefont {Holt}},\ }\bibfield  {title}
  {\bibinfo {title} {Proposed {{Experiment}} to {{Test Local Hidden-Variable
  Theories}}},\ }\href {https://doi.org/10.1103/PhysRevLett.23.880} {\bibfield
  {journal} {\bibinfo  {journal} {Physical Review Letters}\ }\textbf {\bibinfo
  {volume} {23}},\ \bibinfo {pages} {880} (\bibinfo {year} {1969})}\BibitemShut
  {NoStop}%
\bibitem [{\citenamefont {Pironio}\ \emph {et~al.}(2009)\citenamefont
  {Pironio}, \citenamefont {Ac{\'i}n}, \citenamefont {Brunner}, \citenamefont
  {Gisin}, \citenamefont {Massar},\ and\ \citenamefont
  {Scarani}}]{pironio_device-independent_2009}%
  \BibitemOpen
  \bibfield  {author} {\bibinfo {author} {\bibfnamefont {S.}~\bibnamefont
  {Pironio}}, \bibinfo {author} {\bibfnamefont {A.}~\bibnamefont {Ac{\'i}n}},
  \bibinfo {author} {\bibfnamefont {N.}~\bibnamefont {Brunner}}, \bibinfo
  {author} {\bibfnamefont {N.}~\bibnamefont {Gisin}}, \bibinfo {author}
  {\bibfnamefont {S.}~\bibnamefont {Massar}},\ and\ \bibinfo {author}
  {\bibfnamefont {V.}~\bibnamefont {Scarani}},\ }\bibfield  {title} {\bibinfo
  {title} {Device-independent quantum key distribution secure against
  collective attacks},\ }\href {https://doi.org/10.1088/1367-2630/11/4/045021}
  {\bibfield  {journal} {\bibinfo  {journal} {New Journal of Physics}\ }\textbf
  {\bibinfo {volume} {11}},\ \bibinfo {pages} {045021} (\bibinfo {year}
  {2009})}\BibitemShut {NoStop}%
\bibitem [{\citenamefont {Pironio}\ \emph {et~al.}(2010)\citenamefont
  {Pironio}, \citenamefont {Ac{\'i}n}, \citenamefont {Massar}, \citenamefont
  {{de la Giroday}}, \citenamefont {Matsukevich}, \citenamefont {Maunz},
  \citenamefont {Olmschenk}, \citenamefont {Hayes}, \citenamefont {Luo},
  \citenamefont {Manning},\ and\ \citenamefont {Monroe}}]{pironio_random_2010}%
  \BibitemOpen
  \bibfield  {author} {\bibinfo {author} {\bibfnamefont {S.}~\bibnamefont
  {Pironio}}, \bibinfo {author} {\bibfnamefont {A.}~\bibnamefont {Ac{\'i}n}},
  \bibinfo {author} {\bibfnamefont {S.}~\bibnamefont {Massar}}, \bibinfo
  {author} {\bibfnamefont {A.~B.}\ \bibnamefont {{de la Giroday}}}, \bibinfo
  {author} {\bibfnamefont {D.~N.}\ \bibnamefont {Matsukevich}}, \bibinfo
  {author} {\bibfnamefont {P.}~\bibnamefont {Maunz}}, \bibinfo {author}
  {\bibfnamefont {S.}~\bibnamefont {Olmschenk}}, \bibinfo {author}
  {\bibfnamefont {D.}~\bibnamefont {Hayes}}, \bibinfo {author} {\bibfnamefont
  {L.}~\bibnamefont {Luo}}, \bibinfo {author} {\bibfnamefont {T.~A.}\
  \bibnamefont {Manning}},\ and\ \bibinfo {author} {\bibfnamefont
  {C.}~\bibnamefont {Monroe}},\ }\bibfield  {title} {\bibinfo {title} {Random
  numbers certified by {{Bell}}'s theorem},\ }\href
  {https://doi.org/10.1038/nature09008} {\bibfield  {journal} {\bibinfo
  {journal} {Nature}\ }\textbf {\bibinfo {volume} {464}},\ \bibinfo {pages}
  {1021} (\bibinfo {year} {2010})}\BibitemShut {NoStop}%
\bibitem [{\citenamefont {{\v S}upi{\'c}}\ and\ \citenamefont
  {Bowles}(2020)}]{supic_self-testing_2020}%
  \BibitemOpen
  \bibfield  {author} {\bibinfo {author} {\bibfnamefont {I.}~\bibnamefont {{\v
  S}upi{\'c}}}\ and\ \bibinfo {author} {\bibfnamefont {J.}~\bibnamefont
  {Bowles}},\ }\bibfield  {title} {\bibinfo {title} {Self-testing of quantum
  systems: A review},\ }\href {https://doi.org/10.22331/q-2020-09-30-337}
  {\bibfield  {journal} {\bibinfo  {journal} {Quantum}\ }\textbf {\bibinfo
  {volume} {4}},\ \bibinfo {pages} {337} (\bibinfo {year} {2020})}\BibitemShut
  {NoStop}%
\bibitem [{\citenamefont {Hensen}\ \emph {et~al.}(2015)\citenamefont {Hensen},
  \citenamefont {Bernien}, \citenamefont {Dr{\'e}au}, \citenamefont {Reiserer},
  \citenamefont {Kalb}, \citenamefont {Blok}, \citenamefont {Ruitenberg},
  \citenamefont {Vermeulen}, \citenamefont {Schouten}, \citenamefont
  {Abell{\'a}n}, \citenamefont {Amaya}, \citenamefont {Pruneri}, \citenamefont
  {Mitchell}, \citenamefont {Markham}, \citenamefont {Twitchen}, \citenamefont
  {Elkouss}, \citenamefont {Wehner}, \citenamefont {Taminiau},\ and\
  \citenamefont {Hanson}}]{hensen_loophole-free_2015}%
  \BibitemOpen
  \bibfield  {author} {\bibinfo {author} {\bibfnamefont {B.}~\bibnamefont
  {Hensen}}, \bibinfo {author} {\bibfnamefont {H.}~\bibnamefont {Bernien}},
  \bibinfo {author} {\bibfnamefont {A.~E.}\ \bibnamefont {Dr{\'e}au}}, \bibinfo
  {author} {\bibfnamefont {A.}~\bibnamefont {Reiserer}}, \bibinfo {author}
  {\bibfnamefont {N.}~\bibnamefont {Kalb}}, \bibinfo {author} {\bibfnamefont
  {M.~S.}\ \bibnamefont {Blok}}, \bibinfo {author} {\bibfnamefont
  {J.}~\bibnamefont {Ruitenberg}}, \bibinfo {author} {\bibfnamefont {R.~F.~L.}\
  \bibnamefont {Vermeulen}}, \bibinfo {author} {\bibfnamefont {R.~N.}\
  \bibnamefont {Schouten}}, \bibinfo {author} {\bibfnamefont {C.}~\bibnamefont
  {Abell{\'a}n}}, \bibinfo {author} {\bibfnamefont {W.}~\bibnamefont {Amaya}},
  \bibinfo {author} {\bibfnamefont {V.}~\bibnamefont {Pruneri}}, \bibinfo
  {author} {\bibfnamefont {M.~W.}\ \bibnamefont {Mitchell}}, \bibinfo {author}
  {\bibfnamefont {M.}~\bibnamefont {Markham}}, \bibinfo {author} {\bibfnamefont
  {D.~J.}\ \bibnamefont {Twitchen}}, \bibinfo {author} {\bibfnamefont
  {D.}~\bibnamefont {Elkouss}}, \bibinfo {author} {\bibfnamefont
  {S.}~\bibnamefont {Wehner}}, \bibinfo {author} {\bibfnamefont {T.~H.}\
  \bibnamefont {Taminiau}},\ and\ \bibinfo {author} {\bibfnamefont
  {R.}~\bibnamefont {Hanson}},\ }\bibfield  {title} {\bibinfo {title}
  {Loophole-free {{Bell}} inequality violation using electron spins separated
  by 1.3 kilometres},\ }\href {https://doi.org/10.1038/nature15759} {\bibfield
  {journal} {\bibinfo  {journal} {Nature}\ }\textbf {\bibinfo {volume} {526}},\
  \bibinfo {pages} {682} (\bibinfo {year} {2015})}\BibitemShut {NoStop}%
\bibitem [{\citenamefont {Giustina}\ \emph {et~al.}(2015)\citenamefont
  {Giustina}, \citenamefont {Versteegh}, \citenamefont {Wengerowsky},
  \citenamefont {Handsteiner}, \citenamefont {Hochrainer}, \citenamefont
  {Phelan}, \citenamefont {Steinlechner}, \citenamefont {Kofler}, \citenamefont
  {Larsson}, \citenamefont {Abell{\'a}n}, \citenamefont {Amaya}, \citenamefont
  {Pruneri}, \citenamefont {Mitchell}, \citenamefont {Beyer}, \citenamefont
  {Gerrits}, \citenamefont {Lita}, \citenamefont {Shalm}, \citenamefont {Nam},
  \citenamefont {Scheidl}, \citenamefont {Ursin}, \citenamefont {Wittmann},\
  and\ \citenamefont {Zeilinger}}]{giustina_significant-loophole-free_2015}%
  \BibitemOpen
  \bibfield  {author} {\bibinfo {author} {\bibfnamefont {M.}~\bibnamefont
  {Giustina}}, \bibinfo {author} {\bibfnamefont {M.~A.~M.}\ \bibnamefont
  {Versteegh}}, \bibinfo {author} {\bibfnamefont {S.}~\bibnamefont
  {Wengerowsky}}, \bibinfo {author} {\bibfnamefont {J.}~\bibnamefont
  {Handsteiner}}, \bibinfo {author} {\bibfnamefont {A.}~\bibnamefont
  {Hochrainer}}, \bibinfo {author} {\bibfnamefont {K.}~\bibnamefont {Phelan}},
  \bibinfo {author} {\bibfnamefont {F.}~\bibnamefont {Steinlechner}}, \bibinfo
  {author} {\bibfnamefont {J.}~\bibnamefont {Kofler}}, \bibinfo {author}
  {\bibfnamefont {J.-{\AA}.}\ \bibnamefont {Larsson}}, \bibinfo {author}
  {\bibfnamefont {C.}~\bibnamefont {Abell{\'a}n}}, \bibinfo {author}
  {\bibfnamefont {W.}~\bibnamefont {Amaya}}, \bibinfo {author} {\bibfnamefont
  {V.}~\bibnamefont {Pruneri}}, \bibinfo {author} {\bibfnamefont {M.~W.}\
  \bibnamefont {Mitchell}}, \bibinfo {author} {\bibfnamefont {J.}~\bibnamefont
  {Beyer}}, \bibinfo {author} {\bibfnamefont {T.}~\bibnamefont {Gerrits}},
  \bibinfo {author} {\bibfnamefont {A.~E.}\ \bibnamefont {Lita}}, \bibinfo
  {author} {\bibfnamefont {L.~K.}\ \bibnamefont {Shalm}}, \bibinfo {author}
  {\bibfnamefont {S.~W.}\ \bibnamefont {Nam}}, \bibinfo {author} {\bibfnamefont
  {T.}~\bibnamefont {Scheidl}}, \bibinfo {author} {\bibfnamefont
  {R.}~\bibnamefont {Ursin}}, \bibinfo {author} {\bibfnamefont
  {B.}~\bibnamefont {Wittmann}},\ and\ \bibinfo {author} {\bibfnamefont
  {A.}~\bibnamefont {Zeilinger}},\ }\bibfield  {title} {\bibinfo {title}
  {Significant-{{Loophole-Free Test}} of {{Bell}}'s {{Theorem}} with
  {{Entangled Photons}}},\ }\href
  {https://doi.org/10.1103/PhysRevLett.115.250401} {\bibfield  {journal}
  {\bibinfo  {journal} {Physical Review Letters}\ }\textbf {\bibinfo {volume}
  {115}},\ \bibinfo {pages} {250401} (\bibinfo {year} {2015})}\BibitemShut
  {NoStop}%
\bibitem [{\citenamefont {Shalm}\ \emph {et~al.}(2015)\citenamefont {Shalm},
  \citenamefont {{Meyer-Scott}}, \citenamefont {Christensen}, \citenamefont
  {Bierhorst}, \citenamefont {Wayne}, \citenamefont {Stevens}, \citenamefont
  {Gerrits}, \citenamefont {Glancy}, \citenamefont {Hamel}, \citenamefont
  {Allman}, \citenamefont {Coakley}, \citenamefont {Dyer}, \citenamefont
  {Hodge}, \citenamefont {Lita}, \citenamefont {Verma}, \citenamefont
  {Lambrocco}, \citenamefont {Tortorici}, \citenamefont {Migdall},
  \citenamefont {Zhang}, \citenamefont {Kumor}, \citenamefont {Farr},
  \citenamefont {Marsili}, \citenamefont {Shaw}, \citenamefont {Stern},
  \citenamefont {Abell{\'a}n}, \citenamefont {Amaya}, \citenamefont {Pruneri},
  \citenamefont {Jennewein}, \citenamefont {Mitchell}, \citenamefont {Kwiat},
  \citenamefont {Bienfang}, \citenamefont {Mirin}, \citenamefont {Knill},\ and\
  \citenamefont {Nam}}]{shalm_strong_2015}%
  \BibitemOpen
  \bibfield  {author} {\bibinfo {author} {\bibfnamefont {L.~K.}\ \bibnamefont
  {Shalm}}, \bibinfo {author} {\bibfnamefont {E.}~\bibnamefont
  {{Meyer-Scott}}}, \bibinfo {author} {\bibfnamefont {B.~G.}\ \bibnamefont
  {Christensen}}, \bibinfo {author} {\bibfnamefont {P.}~\bibnamefont
  {Bierhorst}}, \bibinfo {author} {\bibfnamefont {M.~A.}\ \bibnamefont
  {Wayne}}, \bibinfo {author} {\bibfnamefont {M.~J.}\ \bibnamefont {Stevens}},
  \bibinfo {author} {\bibfnamefont {T.}~\bibnamefont {Gerrits}}, \bibinfo
  {author} {\bibfnamefont {S.}~\bibnamefont {Glancy}}, \bibinfo {author}
  {\bibfnamefont {D.~R.}\ \bibnamefont {Hamel}}, \bibinfo {author}
  {\bibfnamefont {M.~S.}\ \bibnamefont {Allman}}, \bibinfo {author}
  {\bibfnamefont {K.~J.}\ \bibnamefont {Coakley}}, \bibinfo {author}
  {\bibfnamefont {S.~D.}\ \bibnamefont {Dyer}}, \bibinfo {author}
  {\bibfnamefont {C.}~\bibnamefont {Hodge}}, \bibinfo {author} {\bibfnamefont
  {A.~E.}\ \bibnamefont {Lita}}, \bibinfo {author} {\bibfnamefont {V.~B.}\
  \bibnamefont {Verma}}, \bibinfo {author} {\bibfnamefont {C.}~\bibnamefont
  {Lambrocco}}, \bibinfo {author} {\bibfnamefont {E.}~\bibnamefont
  {Tortorici}}, \bibinfo {author} {\bibfnamefont {A.~L.}\ \bibnamefont
  {Migdall}}, \bibinfo {author} {\bibfnamefont {Y.}~\bibnamefont {Zhang}},
  \bibinfo {author} {\bibfnamefont {D.~R.}\ \bibnamefont {Kumor}}, \bibinfo
  {author} {\bibfnamefont {W.~H.}\ \bibnamefont {Farr}}, \bibinfo {author}
  {\bibfnamefont {F.}~\bibnamefont {Marsili}}, \bibinfo {author} {\bibfnamefont
  {M.~D.}\ \bibnamefont {Shaw}}, \bibinfo {author} {\bibfnamefont {J.~A.}\
  \bibnamefont {Stern}}, \bibinfo {author} {\bibfnamefont {C.}~\bibnamefont
  {Abell{\'a}n}}, \bibinfo {author} {\bibfnamefont {W.}~\bibnamefont {Amaya}},
  \bibinfo {author} {\bibfnamefont {V.}~\bibnamefont {Pruneri}}, \bibinfo
  {author} {\bibfnamefont {T.}~\bibnamefont {Jennewein}}, \bibinfo {author}
  {\bibfnamefont {M.~W.}\ \bibnamefont {Mitchell}}, \bibinfo {author}
  {\bibfnamefont {P.~G.}\ \bibnamefont {Kwiat}}, \bibinfo {author}
  {\bibfnamefont {J.~C.}\ \bibnamefont {Bienfang}}, \bibinfo {author}
  {\bibfnamefont {R.~P.}\ \bibnamefont {Mirin}}, \bibinfo {author}
  {\bibfnamefont {E.}~\bibnamefont {Knill}},\ and\ \bibinfo {author}
  {\bibfnamefont {S.~W.}\ \bibnamefont {Nam}},\ }\bibfield  {title} {\bibinfo
  {title} {Strong {{Loophole-Free Test}} of {{Local Realism}}},\ }\href
  {https://doi.org/10.1103/PhysRevLett.115.250402} {\bibfield  {journal}
  {\bibinfo  {journal} {Physical Review Letters}\ }\textbf {\bibinfo {volume}
  {115}},\ \bibinfo {pages} {250402} (\bibinfo {year} {2015})}\BibitemShut
  {NoStop}%
\bibitem [{\citenamefont {Rosenfeld}\ \emph {et~al.}(2017)\citenamefont
  {Rosenfeld}, \citenamefont {Burchardt}, \citenamefont {Garthoff},
  \citenamefont {Redeker}, \citenamefont {Ortegel}, \citenamefont {Rau},\ and\
  \citenamefont {Weinfurter}}]{rosenfeld_event-ready_2017}%
  \BibitemOpen
  \bibfield  {author} {\bibinfo {author} {\bibfnamefont {W.}~\bibnamefont
  {Rosenfeld}}, \bibinfo {author} {\bibfnamefont {D.}~\bibnamefont
  {Burchardt}}, \bibinfo {author} {\bibfnamefont {R.}~\bibnamefont {Garthoff}},
  \bibinfo {author} {\bibfnamefont {K.}~\bibnamefont {Redeker}}, \bibinfo
  {author} {\bibfnamefont {N.}~\bibnamefont {Ortegel}}, \bibinfo {author}
  {\bibfnamefont {M.}~\bibnamefont {Rau}},\ and\ \bibinfo {author}
  {\bibfnamefont {H.}~\bibnamefont {Weinfurter}},\ }\bibfield  {title}
  {\bibinfo {title} {Event-{{Ready Bell Test Using Entangled Atoms
  Simultaneously Closing Detection}} and {{Locality Loopholes}}},\ }\href
  {https://doi.org/10.1103/PhysRevLett.119.010402} {\bibfield  {journal}
  {\bibinfo  {journal} {Physical Review Letters}\ }\textbf {\bibinfo {volume}
  {119}},\ \bibinfo {pages} {010402} (\bibinfo {year} {2017})}\BibitemShut
  {NoStop}%
\bibitem [{\citenamefont {Storz}\ \emph {et~al.}(2023)\citenamefont {Storz},
  \citenamefont {Sch{\"a}r}, \citenamefont {Kulikov}, \citenamefont {Magnard},
  \citenamefont {Kurpiers}, \citenamefont {L{\"u}tolf}, \citenamefont {Walter},
  \citenamefont {Copetudo}, \citenamefont {Reuer}, \citenamefont {Akin},
  \citenamefont {Besse}, \citenamefont {Gabureac}, \citenamefont {Norris},
  \citenamefont {Rosario}, \citenamefont {Martin}, \citenamefont {Martinez},
  \citenamefont {Amaya}, \citenamefont {Mitchell}, \citenamefont {Abellan},
  \citenamefont {Bancal}, \citenamefont {Sangouard}, \citenamefont {Royer},
  \citenamefont {Blais},\ and\ \citenamefont
  {Wallraff}}]{storz_loophole-free_2023}%
  \BibitemOpen
  \bibfield  {author} {\bibinfo {author} {\bibfnamefont {S.}~\bibnamefont
  {Storz}}, \bibinfo {author} {\bibfnamefont {J.}~\bibnamefont {Sch{\"a}r}},
  \bibinfo {author} {\bibfnamefont {A.}~\bibnamefont {Kulikov}}, \bibinfo
  {author} {\bibfnamefont {P.}~\bibnamefont {Magnard}}, \bibinfo {author}
  {\bibfnamefont {P.}~\bibnamefont {Kurpiers}}, \bibinfo {author}
  {\bibfnamefont {J.}~\bibnamefont {L{\"u}tolf}}, \bibinfo {author}
  {\bibfnamefont {T.}~\bibnamefont {Walter}}, \bibinfo {author} {\bibfnamefont
  {A.}~\bibnamefont {Copetudo}}, \bibinfo {author} {\bibfnamefont
  {K.}~\bibnamefont {Reuer}}, \bibinfo {author} {\bibfnamefont
  {A.}~\bibnamefont {Akin}}, \bibinfo {author} {\bibfnamefont {J.-C.}\
  \bibnamefont {Besse}}, \bibinfo {author} {\bibfnamefont {M.}~\bibnamefont
  {Gabureac}}, \bibinfo {author} {\bibfnamefont {G.~J.}\ \bibnamefont
  {Norris}}, \bibinfo {author} {\bibfnamefont {A.}~\bibnamefont {Rosario}},
  \bibinfo {author} {\bibfnamefont {F.}~\bibnamefont {Martin}}, \bibinfo
  {author} {\bibfnamefont {J.}~\bibnamefont {Martinez}}, \bibinfo {author}
  {\bibfnamefont {W.}~\bibnamefont {Amaya}}, \bibinfo {author} {\bibfnamefont
  {M.~W.}\ \bibnamefont {Mitchell}}, \bibinfo {author} {\bibfnamefont
  {C.}~\bibnamefont {Abellan}}, \bibinfo {author} {\bibfnamefont {J.-D.}\
  \bibnamefont {Bancal}}, \bibinfo {author} {\bibfnamefont {N.}~\bibnamefont
  {Sangouard}}, \bibinfo {author} {\bibfnamefont {B.}~\bibnamefont {Royer}},
  \bibinfo {author} {\bibfnamefont {A.}~\bibnamefont {Blais}},\ and\ \bibinfo
  {author} {\bibfnamefont {A.}~\bibnamefont {Wallraff}},\ }\bibfield  {title}
  {\bibinfo {title} {Loophole-free {{Bell}} inequality violation with
  superconducting circuits},\ }\href
  {https://doi.org/10.1038/s41586-023-05885-0} {\bibfield  {journal} {\bibinfo
  {journal} {Nature}\ }\textbf {\bibinfo {volume} {617}},\ \bibinfo {pages}
  {265} (\bibinfo {year} {2023})}\BibitemShut {NoStop}%
\bibitem [{\citenamefont {Tura}\ \emph
  {et~al.}(2014{\natexlab{a}})\citenamefont {Tura}, \citenamefont {Sainz},
  \citenamefont {V{\'e}rtesi}, \citenamefont {Ac{\'i}n}, \citenamefont
  {Lewenstein},\ and\ \citenamefont {Augusiak}}]{tura_translationally_2014}%
  \BibitemOpen
  \bibfield  {author} {\bibinfo {author} {\bibfnamefont {J.}~\bibnamefont
  {Tura}}, \bibinfo {author} {\bibfnamefont {A.~B.}\ \bibnamefont {Sainz}},
  \bibinfo {author} {\bibfnamefont {T.}~\bibnamefont {V{\'e}rtesi}}, \bibinfo
  {author} {\bibfnamefont {A.}~\bibnamefont {Ac{\'i}n}}, \bibinfo {author}
  {\bibfnamefont {M.}~\bibnamefont {Lewenstein}},\ and\ \bibinfo {author}
  {\bibfnamefont {R.}~\bibnamefont {Augusiak}},\ }\bibfield  {title} {\bibinfo
  {title} {Translationally invariant multipartite {{Bell}} inequalities
  involving only two-body correlators},\ }\href
  {https://doi.org/10.1088/1751-8113/47/42/424024} {\bibfield  {journal}
  {\bibinfo  {journal} {Journal of Physics A: Mathematical and Theoretical}\
  }\textbf {\bibinfo {volume} {47}},\ \bibinfo {pages} {424024} (\bibinfo
  {year} {2014}{\natexlab{a}})}\BibitemShut {NoStop}%
\bibitem [{\citenamefont {Wang}\ \emph {et~al.}(2017)\citenamefont {Wang},
  \citenamefont {Singh},\ and\ \citenamefont
  {Navascu{\'e}s}}]{wang_entanglement_2017}%
  \BibitemOpen
  \bibfield  {author} {\bibinfo {author} {\bibfnamefont {Z.}~\bibnamefont
  {Wang}}, \bibinfo {author} {\bibfnamefont {S.}~\bibnamefont {Singh}},\ and\
  \bibinfo {author} {\bibfnamefont {M.}~\bibnamefont {Navascu{\'e}s}},\
  }\bibfield  {title} {\bibinfo {title} {Entanglement and {{Nonlocality}} in
  {{Infinite 1D Systems}}},\ }\href
  {https://doi.org/10.1103/PhysRevLett.118.230401} {\bibfield  {journal}
  {\bibinfo  {journal} {Physical Review Letters}\ }\textbf {\bibinfo {volume}
  {118}},\ \bibinfo {pages} {230401} (\bibinfo {year} {2017})}\BibitemShut
  {NoStop}%
\bibitem [{\citenamefont {Tura}\ \emph
  {et~al.}(2014{\natexlab{b}})\citenamefont {Tura}, \citenamefont {Augusiak},
  \citenamefont {Sainz}, \citenamefont {V{\'e}rtesi}, \citenamefont
  {Lewenstein},\ and\ \citenamefont {Ac{\'i}n}}]{tura_detecting_2014}%
  \BibitemOpen
  \bibfield  {author} {\bibinfo {author} {\bibfnamefont {J.}~\bibnamefont
  {Tura}}, \bibinfo {author} {\bibfnamefont {R.}~\bibnamefont {Augusiak}},
  \bibinfo {author} {\bibfnamefont {A.~B.}\ \bibnamefont {Sainz}}, \bibinfo
  {author} {\bibfnamefont {T.}~\bibnamefont {V{\'e}rtesi}}, \bibinfo {author}
  {\bibfnamefont {M.}~\bibnamefont {Lewenstein}},\ and\ \bibinfo {author}
  {\bibfnamefont {A.}~\bibnamefont {Ac{\'i}n}},\ }\bibfield  {title} {\bibinfo
  {title} {Detecting nonlocality in many-body quantum states},\ }\href
  {https://doi.org/10.1126/science.1247715} {\bibfield  {journal} {\bibinfo
  {journal} {Science}\ }\textbf {\bibinfo {volume} {344}},\ \bibinfo {pages}
  {1256} (\bibinfo {year} {2014}{\natexlab{b}})}\BibitemShut {NoStop}%
\bibitem [{\citenamefont {Tura}\ \emph {et~al.}(2015)\citenamefont {Tura},
  \citenamefont {Augusiak}, \citenamefont {Sainz}, \citenamefont {L{\"u}cke},
  \citenamefont {Klempt}, \citenamefont {Lewenstein},\ and\ \citenamefont
  {Ac{\'i}n}}]{tura_nonlocality_2015}%
  \BibitemOpen
  \bibfield  {author} {\bibinfo {author} {\bibfnamefont {J.}~\bibnamefont
  {Tura}}, \bibinfo {author} {\bibfnamefont {R.}~\bibnamefont {Augusiak}},
  \bibinfo {author} {\bibfnamefont {A.~B.}\ \bibnamefont {Sainz}}, \bibinfo
  {author} {\bibfnamefont {B.}~\bibnamefont {L{\"u}cke}}, \bibinfo {author}
  {\bibfnamefont {C.}~\bibnamefont {Klempt}}, \bibinfo {author} {\bibfnamefont
  {M.}~\bibnamefont {Lewenstein}},\ and\ \bibinfo {author} {\bibfnamefont
  {A.}~\bibnamefont {Ac{\'i}n}},\ }\bibfield  {title} {\bibinfo {title}
  {Nonlocality in many-body quantum systems detected with two-body
  correlators},\ }\href {https://doi.org/10.1016/j.aop.2015.07.021} {\bibfield
  {journal} {\bibinfo  {journal} {Annals of Physics}\ }\textbf {\bibinfo
  {volume} {362}},\ \bibinfo {pages} {370} (\bibinfo {year}
  {2015})}\BibitemShut {NoStop}%
\bibitem [{\citenamefont {Collins}\ \emph {et~al.}(2002)\citenamefont
  {Collins}, \citenamefont {Gisin}, \citenamefont {Linden}, \citenamefont
  {Massar},\ and\ \citenamefont {Popescu}}]{collins_bell_2002}%
  \BibitemOpen
  \bibfield  {author} {\bibinfo {author} {\bibfnamefont {D.}~\bibnamefont
  {Collins}}, \bibinfo {author} {\bibfnamefont {N.}~\bibnamefont {Gisin}},
  \bibinfo {author} {\bibfnamefont {N.}~\bibnamefont {Linden}}, \bibinfo
  {author} {\bibfnamefont {S.}~\bibnamefont {Massar}},\ and\ \bibinfo {author}
  {\bibfnamefont {S.}~\bibnamefont {Popescu}},\ }\bibfield  {title} {\bibinfo
  {title} {Bell {{Inequalities}} for {{Arbitrarily High-Dimensional
  Systems}}},\ }\href {https://doi.org/10.1103/PhysRevLett.88.040404}
  {\bibfield  {journal} {\bibinfo  {journal} {Physical Review Letters}\
  }\textbf {\bibinfo {volume} {88}},\ \bibinfo {pages} {040404} (\bibinfo
  {year} {2002})}\BibitemShut {NoStop}%
\bibitem [{\citenamefont {Salavrakos}\ \emph {et~al.}(2017)\citenamefont
  {Salavrakos}, \citenamefont {Augusiak}, \citenamefont {Tura}, \citenamefont
  {Wittek}, \citenamefont {Ac{\'i}n},\ and\ \citenamefont
  {Pironio}}]{salavrakos_bell_2017}%
  \BibitemOpen
  \bibfield  {author} {\bibinfo {author} {\bibfnamefont {A.}~\bibnamefont
  {Salavrakos}}, \bibinfo {author} {\bibfnamefont {R.}~\bibnamefont
  {Augusiak}}, \bibinfo {author} {\bibfnamefont {J.}~\bibnamefont {Tura}},
  \bibinfo {author} {\bibfnamefont {P.}~\bibnamefont {Wittek}}, \bibinfo
  {author} {\bibfnamefont {A.}~\bibnamefont {Ac{\'i}n}},\ and\ \bibinfo
  {author} {\bibfnamefont {S.}~\bibnamefont {Pironio}},\ }\bibfield  {title}
  {\bibinfo {title} {Bell {{Inequalities Tailored}} to {{Maximally Entangled
  States}}},\ }\href {https://doi.org/10.1103/PhysRevLett.119.040402}
  {\bibfield  {journal} {\bibinfo  {journal} {Physical Review Letters}\
  }\textbf {\bibinfo {volume} {119}},\ \bibinfo {pages} {040402} (\bibinfo
  {year} {2017})}\BibitemShut {NoStop}%
\bibitem [{\citenamefont {Tura}\ \emph {et~al.}(2017)\citenamefont {Tura},
  \citenamefont {{De las Cuevas}}, \citenamefont {Augusiak}, \citenamefont
  {Lewenstein}, \citenamefont {Ac{\'i}n},\ and\ \citenamefont
  {Cirac}}]{tura_energy_2017}%
  \BibitemOpen
  \bibfield  {author} {\bibinfo {author} {\bibfnamefont {J.}~\bibnamefont
  {Tura}}, \bibinfo {author} {\bibfnamefont {G.}~\bibnamefont {{De las
  Cuevas}}}, \bibinfo {author} {\bibfnamefont {R.}~\bibnamefont {Augusiak}},
  \bibinfo {author} {\bibfnamefont {M.}~\bibnamefont {Lewenstein}}, \bibinfo
  {author} {\bibfnamefont {A.}~\bibnamefont {Ac{\'i}n}},\ and\ \bibinfo
  {author} {\bibfnamefont {J.~I.}\ \bibnamefont {Cirac}},\ }\bibfield  {title}
  {\bibinfo {title} {Energy as a {{Detector}} of {{Nonlocality}} of {{Many-Body
  Spin Systems}}},\ }\href {https://doi.org/10.1103/PhysRevX.7.021005}
  {\bibfield  {journal} {\bibinfo  {journal} {Physical Review X}\ }\textbf
  {\bibinfo {volume} {7}},\ \bibinfo {pages} {021005} (\bibinfo {year}
  {2017})}\BibitemShut {NoStop}%
\bibitem [{\citenamefont {Fadel}\ and\ \citenamefont
  {Tura}(2018)}]{fadel_bell_2018}%
  \BibitemOpen
  \bibfield  {author} {\bibinfo {author} {\bibfnamefont {M.}~\bibnamefont
  {Fadel}}\ and\ \bibinfo {author} {\bibfnamefont {J.}~\bibnamefont {Tura}},\
  }\bibfield  {title} {\bibinfo {title} {Bell correlations at finite
  temperature},\ }\href {https://doi.org/10.22331/q-2018-11-19-107} {\bibfield
  {journal} {\bibinfo  {journal} {Quantum}\ }\textbf {\bibinfo {volume} {2}},\
  \bibinfo {pages} {107} (\bibinfo {year} {2018})}\BibitemShut {NoStop}%
\bibitem [{\citenamefont {Wang}\ and\ \citenamefont
  {Navascu{\'e}s}(2018)}]{wang_two-dimensional_2018}%
  \BibitemOpen
  \bibfield  {author} {\bibinfo {author} {\bibfnamefont {Z.}~\bibnamefont
  {Wang}}\ and\ \bibinfo {author} {\bibfnamefont {M.}~\bibnamefont
  {Navascu{\'e}s}},\ }\bibfield  {title} {\bibinfo {title} {Two-dimensional
  translation-invariant probability distributions: {{Approximations}},
  characterizations and no-go theorems},\ }\href
  {https://doi.org/10.1098/rspa.2017.0822} {\bibfield  {journal} {\bibinfo
  {journal} {Proceedings of the Royal Society A: Mathematical, Physical and
  Engineering Sciences}\ }\textbf {\bibinfo {volume} {474}},\ \bibinfo {pages}
  {20170822} (\bibinfo {year} {2018})}\BibitemShut {NoStop}%
\bibitem [{\citenamefont {Emonts}\ \emph {et~al.}(2024)\citenamefont {Emonts},
  \citenamefont {Hu}, \citenamefont {Aloy},\ and\ \citenamefont
  {Tura}}]{emonts_effects_2024}%
  \BibitemOpen
  \bibfield  {author} {\bibinfo {author} {\bibfnamefont {P.}~\bibnamefont
  {Emonts}}, \bibinfo {author} {\bibfnamefont {M.}~\bibnamefont {Hu}}, \bibinfo
  {author} {\bibfnamefont {A.}~\bibnamefont {Aloy}},\ and\ \bibinfo {author}
  {\bibfnamefont {J.}~\bibnamefont {Tura}},\ }\href
  {https://doi.org/10.48550/arXiv.2405.14587} {\bibinfo {title} {Effects of
  {{Topological Boundary Conditions}} on {{Bell Nonlocality}}}} (\bibinfo
  {year} {2024}),\ \Eprint {https://arxiv.org/abs/2405.14587} {arXiv:2405.14587
  [quant-ph]} \BibitemShut {NoStop}%
\bibitem [{\citenamefont {Froissart}(1981)}]{froissart_constructive_1981}%
  \BibitemOpen
  \bibfield  {author} {\bibinfo {author} {\bibfnamefont {M.}~\bibnamefont
  {Froissart}},\ }\bibfield  {title} {\bibinfo {title} {Constructive
  generalization of {{Bell}}'s inequalities},\ }\href@noop {} {\bibfield
  {journal} {\bibinfo  {journal} {Nuovo Cimento B;(Italy)}\ }\textbf {\bibinfo
  {volume} {64}} (\bibinfo {year} {1981})}\BibitemShut {NoStop}%
\bibitem [{\citenamefont {Chazelle}(1993)}]{chazelle_optimal_1993}%
  \BibitemOpen
  \bibfield  {author} {\bibinfo {author} {\bibfnamefont {B.}~\bibnamefont
  {Chazelle}},\ }\bibfield  {title} {\bibinfo {title} {An optimal convex hull
  algorithm in any fixed dimension},\ }\href
  {https://doi.org/10.1007/BF02573985} {\bibfield  {journal} {\bibinfo
  {journal} {Discrete \& Computational Geometry}\ }\textbf {\bibinfo {volume}
  {10}},\ \bibinfo {pages} {377} (\bibinfo {year} {1993})}\BibitemShut
  {NoStop}%
\bibitem [{\citenamefont {Pitowsky}\ and\ \citenamefont
  {Svozil}(2001)}]{pitowsky_optimal_2001}%
  \BibitemOpen
  \bibfield  {author} {\bibinfo {author} {\bibfnamefont {I.}~\bibnamefont
  {Pitowsky}}\ and\ \bibinfo {author} {\bibfnamefont {K.}~\bibnamefont
  {Svozil}},\ }\bibfield  {title} {\bibinfo {title} {Optimal tests of quantum
  nonlocality},\ }\href {https://doi.org/10.1103/PhysRevA.64.014102} {\bibfield
   {journal} {\bibinfo  {journal} {Physical Review A}\ }\textbf {\bibinfo
  {volume} {64}},\ \bibinfo {pages} {014102} (\bibinfo {year}
  {2001})}\BibitemShut {NoStop}%
\bibitem [{\citenamefont {Maclagan}\ and\ \citenamefont
  {Sturmfels}(2021)}]{maclagan_introduction_2021}%
  \BibitemOpen
  \bibfield  {author} {\bibinfo {author} {\bibfnamefont {D.}~\bibnamefont
  {Maclagan}}\ and\ \bibinfo {author} {\bibfnamefont {B.}~\bibnamefont
  {Sturmfels}},\ }\href@noop {} {\emph {\bibinfo {title} {Introduction to
  Tropical Geometry}}},\ Vol.\ \bibinfo {volume} {161}\ (\bibinfo  {publisher}
  {American Mathematical Society},\ \bibinfo {year} {2021})\BibitemShut
  {NoStop}%
\bibitem [{\citenamefont {Liu}\ \emph {et~al.}(2021)\citenamefont {Liu},
  \citenamefont {Wang},\ and\ \citenamefont {Zhang}}]{liu_tropical_2021}%
  \BibitemOpen
  \bibfield  {author} {\bibinfo {author} {\bibfnamefont {J.-G.}\ \bibnamefont
  {Liu}}, \bibinfo {author} {\bibfnamefont {L.}~\bibnamefont {Wang}},\ and\
  \bibinfo {author} {\bibfnamefont {P.}~\bibnamefont {Zhang}},\ }\bibfield
  {title} {\bibinfo {title} {Tropical {{Tensor Network}} for {{Ground States}}
  of {{Spin Glasses}}},\ }\href
  {https://doi.org/10.1103/PhysRevLett.126.090506} {\bibfield  {journal}
  {\bibinfo  {journal} {Physical Review Letters}\ }\textbf {\bibinfo {volume}
  {126}},\ \bibinfo {pages} {090506} (\bibinfo {year} {2021})}\BibitemShut
  {NoStop}%
\bibitem [{\citenamefont {Nowak}(2014)}]{nowak_tropical_2014}%
  \BibitemOpen
  \bibfield  {author} {\bibinfo {author} {\bibfnamefont {A.}~\bibnamefont
  {Nowak}},\ }\bibfield  {title} {\bibinfo {title} {The {{Tropical
  Eigenvalue-Vector Problem}} from {{Algebraic}}, {{Graphical}}, and
  {{Computational Perspectives}}},\ }\href@noop {} {\bibfield  {journal}
  {\bibinfo  {journal} {Honors Theses}\ } (\bibinfo {year} {2014})}\BibitemShut
  {NoStop}%
\bibitem [{Note0()}]{Note0}%
  \BibitemOpen
  \bibinfo {note} {The code supporting this work is available at \protect \href
  {https://gitlab.com/elo_val/ti-bell-inequalities}{\protect \url
  {gitlab.com/elo_val/ti-bell-inequalities}}}\BibitemShut {NoStop}%
\bibitem [{\citenamefont {Pironio}(2005)}]{pironio_lifting_2005}%
  \BibitemOpen
  \bibfield  {author} {\bibinfo {author} {\bibfnamefont {S.}~\bibnamefont
  {Pironio}},\ }\bibfield  {title} {\bibinfo {title} {Lifting {{Bell}}
  inequalities},\ }\href {https://doi.org/10.1063/1.1928727} {\bibfield
  {journal} {\bibinfo  {journal} {Journal of Mathematical Physics}\ }\textbf
  {\bibinfo {volume} {46}},\ \bibinfo {pages} {062112} (\bibinfo {year}
  {2005})}\BibitemShut {NoStop}%
\bibitem [{\citenamefont {Born}(1926)}]{born_quantenmechanik_1926}%
  \BibitemOpen
  \bibfield  {author} {\bibinfo {author} {\bibfnamefont {M.}~\bibnamefont
  {Born}},\ }\bibfield  {title} {\bibinfo {title} {{Quantenmechanik der
  Sto{\ss}vorg{\"a}nge}},\ }\href {https://doi.org/10.1007/BF01397184}
  {\bibfield  {journal} {\bibinfo  {journal} {Zeitschrift f{\"u}r Physik}\
  }\textbf {\bibinfo {volume} {38}},\ \bibinfo {pages} {803} (\bibinfo {year}
  {1926})}\BibitemShut {NoStop}%
\bibitem [{\citenamefont {Popescu}\ and\ \citenamefont
  {Rohrlich}(1994)}]{popescu_quantum_1994}%
  \BibitemOpen
  \bibfield  {author} {\bibinfo {author} {\bibfnamefont {S.}~\bibnamefont
  {Popescu}}\ and\ \bibinfo {author} {\bibfnamefont {D.}~\bibnamefont
  {Rohrlich}},\ }\bibfield  {title} {\bibinfo {title} {Quantum nonlocality as
  an axiom},\ }\href {https://doi.org/10.1007/BF02058098} {\bibfield  {journal}
  {\bibinfo  {journal} {Foundations of Physics}\ }\textbf {\bibinfo {volume}
  {24}},\ \bibinfo {pages} {379} (\bibinfo {year} {1994})}\BibitemShut
  {NoStop}%
\bibitem [{\citenamefont {Pitowsky}(1986)}]{pitowsky_range_1986}%
  \BibitemOpen
  \bibfield  {author} {\bibinfo {author} {\bibfnamefont {I.}~\bibnamefont
  {Pitowsky}},\ }\bibfield  {title} {\bibinfo {title} {The range of quantum
  probability},\ }\href {https://doi.org/10.1063/1.527066} {\bibfield
  {journal} {\bibinfo  {journal} {Journal of Mathematical Physics}\ }\textbf
  {\bibinfo {volume} {27}},\ \bibinfo {pages} {1556} (\bibinfo {year}
  {1986})}\BibitemShut {NoStop}%
\bibitem [{\citenamefont {Fine}(1982)}]{fine_hidden_1982}%
  \BibitemOpen
  \bibfield  {author} {\bibinfo {author} {\bibfnamefont {A.}~\bibnamefont
  {Fine}},\ }\bibfield  {title} {\bibinfo {title} {Hidden {{Variables}},
  {{Joint Probability}}, and the {{Bell Inequalities}}},\ }\href
  {https://doi.org/10.1103/PhysRevLett.48.291} {\bibfield  {journal} {\bibinfo
  {journal} {Physical Review Letters}\ }\textbf {\bibinfo {volume} {48}},\
  \bibinfo {pages} {291} (\bibinfo {year} {1982})}\BibitemShut {NoStop}%
\bibitem [{\citenamefont {Collins}\ and\ \citenamefont
  {Gisin}(2004)}]{collins_relevant_2004}%
  \BibitemOpen
  \bibfield  {author} {\bibinfo {author} {\bibfnamefont {D.}~\bibnamefont
  {Collins}}\ and\ \bibinfo {author} {\bibfnamefont {N.}~\bibnamefont
  {Gisin}},\ }\bibfield  {title} {\bibinfo {title} {A relevant two qubit
  {{Bell}} inequality inequivalent to the {{CHSH}} inequality},\ }\href
  {https://doi.org/10.1088/0305-4470/37/5/021} {\bibfield  {journal} {\bibinfo
  {journal} {Journal of Physics A: Mathematical and General}\ }\textbf
  {\bibinfo {volume} {37}},\ \bibinfo {pages} {1775} (\bibinfo {year}
  {2004})}\BibitemShut {NoStop}%
\bibitem [{\citenamefont {Minkowski}(2018)}]{minkowski_geometrie_2018}%
  \BibitemOpen
  \bibfield  {author} {\bibinfo {author} {\bibfnamefont {H.~H. .-.}\
  \bibnamefont {Minkowski}},\ }\href@noop {} {\emph {\bibinfo {title}
  {{Geometrie Der Zahlen}}}}\ (\bibinfo  {publisher} {Creative Media Partners,
  LLC},\ \bibinfo {year} {2018})\BibitemShut {NoStop}%
\bibitem [{\citenamefont {Farkas}(1902)}]{farkas_theorie_1902}%
  \BibitemOpen
  \bibfield  {author} {\bibinfo {author} {\bibfnamefont {J.}~\bibnamefont
  {Farkas}},\ }\bibfield  {title} {\bibinfo {title} {{Theorie der einfachen
  Ungleichungen.}},\ }\href {https://doi.org/10.1515/crll.1902.124.1}
  {\bibfield  {journal} {\bibinfo  {journal} {Journal f{\"u}r die reine und
  angewandte Mathematik (Crelles Journal)}\ }\textbf {\bibinfo {volume}
  {1902}},\ \bibinfo {pages} {1} (\bibinfo {year} {1902})}\BibitemShut
  {NoStop}%
\bibitem [{\citenamefont {Weyl}(1934)}]{weyl_elementare_1934}%
  \BibitemOpen
  \bibfield  {author} {\bibinfo {author} {\bibfnamefont {H.}~\bibnamefont
  {Weyl}},\ }\bibfield  {title} {\bibinfo {title} {{Elementare Theorie der
  konvexen Polyeder}},\ }\href {https://doi.org/10.1007/BF01292722} {\bibfield
  {journal} {\bibinfo  {journal} {Commentarii Mathematici Helvetici}\ }\textbf
  {\bibinfo {volume} {7}},\ \bibinfo {pages} {290} (\bibinfo {year}
  {1934})}\BibitemShut {NoStop}%
\bibitem [{\citenamefont {Charnes}\ and\ \citenamefont
  {Cooper}(1958)}]{charnes_strong_1958}%
  \BibitemOpen
  \bibfield  {author} {\bibinfo {author} {\bibfnamefont {A.}~\bibnamefont
  {Charnes}}\ and\ \bibinfo {author} {\bibfnamefont {W.~W.}\ \bibnamefont
  {Cooper}},\ }\bibfield  {title} {\bibinfo {title} {The strong {{Minkowski
  Farkas-Weyl}} theorem for vector spaces over ordered fields},\ }\href
  {https://doi.org/10.1073/pnas.44.9.914} {\bibfield  {journal} {\bibinfo
  {journal} {Proceedings of the National Academy of Sciences}\ }\textbf
  {\bibinfo {volume} {44}},\ \bibinfo {pages} {914} (\bibinfo {year}
  {1958})}\BibitemShut {NoStop}%
\bibitem [{\citenamefont {Fukuda}(2003)}]{fukuda_cddlib_2003}%
  \BibitemOpen
  \bibfield  {author} {\bibinfo {author} {\bibfnamefont {K.}~\bibnamefont
  {Fukuda}},\ }\bibfield  {title} {\bibinfo {title} {Cddlib reference manual},\
  }\href@noop {} {\bibfield  {journal} {\bibinfo  {journal} {Report version
  093a, McGill University, Montr{\'e}al, Quebec, Canada}\ } (\bibinfo {year}
  {2003})}\BibitemShut {NoStop}%
\bibitem [{\citenamefont {Fukuda}\ and\ \citenamefont
  {Prodon}(1996)}]{fukuda_double_1996}%
  \BibitemOpen
  \bibfield  {author} {\bibinfo {author} {\bibfnamefont {K.}~\bibnamefont
  {Fukuda}}\ and\ \bibinfo {author} {\bibfnamefont {A.}~\bibnamefont
  {Prodon}},\ }\bibfield  {title} {\bibinfo {title} {Double description method
  revisited},\ }in\ \href {https://doi.org/10.1007/3-540-61576-8_77} {\emph
  {\bibinfo {booktitle} {Combinatorics and {{Computer Science}}}}},\ \bibinfo
  {editor} {edited by\ \bibinfo {editor} {\bibfnamefont {M.}~\bibnamefont
  {Deza}}, \bibinfo {editor} {\bibfnamefont {R.}~\bibnamefont {Euler}},\ and\
  \bibinfo {editor} {\bibfnamefont {I.}~\bibnamefont {Manoussakis}}}\ (\bibinfo
   {publisher} {Springer},\ \bibinfo {address} {Berlin, Heidelberg},\ \bibinfo
  {year} {1996})\ pp.\ \bibinfo {pages} {91--111}\BibitemShut {NoStop}%
\bibitem [{\citenamefont {Brunner}\ and\ \citenamefont
  {Gisin}(2008)}]{brunner_partial_2008}%
  \BibitemOpen
  \bibfield  {author} {\bibinfo {author} {\bibfnamefont {N.}~\bibnamefont
  {Brunner}}\ and\ \bibinfo {author} {\bibfnamefont {N.}~\bibnamefont
  {Gisin}},\ }\bibfield  {title} {\bibinfo {title} {Partial list of bipartite
  {{Bell}} inequalities with four binary settings},\ }\href
  {https://doi.org/10.1016/j.physleta.2008.01.052} {\bibfield  {journal}
  {\bibinfo  {journal} {Physics Letters A}\ }\textbf {\bibinfo {volume}
  {372}},\ \bibinfo {pages} {3162} (\bibinfo {year} {2008})}\BibitemShut
  {NoStop}%
\bibitem [{\citenamefont {Joswig}(2021)}]{joswig_essentials_2021}%
  \BibitemOpen
  \bibfield  {author} {\bibinfo {author} {\bibfnamefont {M.}~\bibnamefont
  {Joswig}},\ }\href@noop {} {\emph {\bibinfo {title} {Essentials of Tropical
  Combinatorics}}},\ Vol.\ \bibinfo {volume} {219}\ (\bibinfo  {publisher}
  {American Mathematical Society},\ \bibinfo {year} {2021})\BibitemShut
  {NoStop}%
\bibitem [{\citenamefont {Akian}\ \emph {et~al.}(2006)\citenamefont {Akian},
  \citenamefont {Bapat},\ and\ \citenamefont {Gaubert}}]{akian_max-plus_2006}%
  \BibitemOpen
  \bibfield  {author} {\bibinfo {author} {\bibfnamefont {M.}~\bibnamefont
  {Akian}}, \bibinfo {author} {\bibfnamefont {R.}~\bibnamefont {Bapat}},\ and\
  \bibinfo {author} {\bibfnamefont {S.}~\bibnamefont {Gaubert}},\ }\bibfield
  {title} {\bibinfo {title} {Max-plus algebra},\ }\href@noop {} {\bibfield
  {journal} {\bibinfo  {journal} {Handbook of linear algebra}\ }\textbf
  {\bibinfo {volume} {39}},\ \bibinfo {pages} {10} (\bibinfo {year}
  {2006})}\BibitemShut {NoStop}%
\bibitem [{\citenamefont {Karp}(1978)}]{karp_characterization_1978}%
  \BibitemOpen
  \bibfield  {author} {\bibinfo {author} {\bibfnamefont {R.~M.}\ \bibnamefont
  {Karp}},\ }\bibfield  {title} {\bibinfo {title} {A characterization of the
  minimum cycle mean in a digraph},\ }\href
  {https://doi.org/10.1016/0012-365X(78)90011-0} {\bibfield  {journal}
  {\bibinfo  {journal} {Discrete Mathematics}\ }\textbf {\bibinfo {volume}
  {23}},\ \bibinfo {pages} {309} (\bibinfo {year} {1978})}\BibitemShut
  {NoStop}%
\bibitem [{\citenamefont {Hogben}(2006)}]{hogben_handbook_2006}%
  \BibitemOpen
  \bibfield  {author} {\bibinfo {author} {\bibfnamefont {L.}~\bibnamefont
  {Hogben}},\ }\href@noop {} {\emph {\bibinfo {title} {Handbook of Linear
  Algebra}}}\ (\bibinfo  {publisher} {CRC press},\ \bibinfo {year}
  {2006})\BibitemShut {NoStop}%
\bibitem [{\citenamefont {Hopcroft}\ \emph {et~al.}(1983)\citenamefont
  {Hopcroft}, \citenamefont {Ullman},\ and\ \citenamefont
  {Aho}}]{hopcroft_data_1983}%
  \BibitemOpen
  \bibfield  {author} {\bibinfo {author} {\bibfnamefont {J.~E.}\ \bibnamefont
  {Hopcroft}}, \bibinfo {author} {\bibfnamefont {J.~D.}\ \bibnamefont
  {Ullman}},\ and\ \bibinfo {author} {\bibfnamefont {A.~V.}\ \bibnamefont
  {Aho}},\ }\href@noop {} {\emph {\bibinfo {title} {Data Structures and
  Algorithms}}},\ Vol.\ \bibinfo {volume} {175}\ (\bibinfo  {publisher}
  {Addison-wesley Boston, MA, USA:},\ \bibinfo {year} {1983})\BibitemShut
  {NoStop}%
\bibitem [{\citenamefont {Spalding}(1998)}]{spalding_min-plus_1998}%
  \BibitemOpen
  \bibfield  {author} {\bibinfo {author} {\bibfnamefont {A.~E.}\ \bibnamefont
  {Spalding}},\ }\href@noop {} {\emph {\bibinfo {title} {Min-plus Algebra and
  Graph Domination}}}\ (\bibinfo  {publisher} {University of Colorado at
  Denver},\ \bibinfo {year} {1998})\BibitemShut {NoStop}%
\bibitem [{\citenamefont {Bancal}\ \emph {et~al.}(2010)\citenamefont {Bancal},
  \citenamefont {Gisin},\ and\ \citenamefont {Pironio}}]{bancal_looking_2010}%
  \BibitemOpen
  \bibfield  {author} {\bibinfo {author} {\bibfnamefont {J.-D.}\ \bibnamefont
  {Bancal}}, \bibinfo {author} {\bibfnamefont {N.}~\bibnamefont {Gisin}},\ and\
  \bibinfo {author} {\bibfnamefont {S.}~\bibnamefont {Pironio}},\ }\bibfield
  {title} {\bibinfo {title} {Looking for symmetric {{Bell}} inequalities},\
  }\href {https://doi.org/10.1088/1751-8113/43/38/385303} {\bibfield  {journal}
  {\bibinfo  {journal} {Journal of Physics A: Mathematical and Theoretical}\
  }\textbf {\bibinfo {volume} {43}},\ \bibinfo {pages} {385303} (\bibinfo
  {year} {2010})}\BibitemShut {NoStop}%
\bibitem [{\citenamefont {Barahona}(1982)}]{barahona_computational_1982}%
  \BibitemOpen
  \bibfield  {author} {\bibinfo {author} {\bibfnamefont {F.}~\bibnamefont
  {Barahona}},\ }\bibfield  {title} {\bibinfo {title} {On the computational
  complexity of {{Ising}} spin glass models},\ }\href
  {https://doi.org/10.1088/0305-4470/15/10/028} {\bibfield  {journal} {\bibinfo
   {journal} {Journal of Physics A: Mathematical and General}\ }\textbf
  {\bibinfo {volume} {15}},\ \bibinfo {pages} {3241} (\bibinfo {year}
  {1982})}\BibitemShut {NoStop}%
\bibitem [{\citenamefont {Hu}\ and\ \citenamefont
  {Tura}(2022)}]{hu_tropical_2022}%
  \BibitemOpen
  \bibfield  {author} {\bibinfo {author} {\bibfnamefont {M.}~\bibnamefont
  {Hu}}\ and\ \bibinfo {author} {\bibfnamefont {J.}~\bibnamefont {Tura}},\
  }\href@noop {} {\bibinfo {title} {Tropical contraction of tensor networks as
  a {{Bell}} inequality optimization toolset}} (\bibinfo {year} {2022}),\
  \Eprint {https://arxiv.org/abs/2208.02798} {arXiv:2208.02798 [quant-ph]}
  \BibitemShut {NoStop}%
\bibitem [{\citenamefont {Bondy}\ and\ \citenamefont
  {Murty}(2008)}]{bondy_graph_2008}%
  \BibitemOpen
  \bibfield  {author} {\bibinfo {author} {\bibfnamefont {J.~A.}\ \bibnamefont
  {Bondy}}\ and\ \bibinfo {author} {\bibfnamefont {U.~S.~R.}\ \bibnamefont
  {Murty}},\ }\href {https://doi.org/10.1007/978-1-84628-970-5} {\emph
  {\bibinfo {title} {Graph Theory}}},\ \bibinfo {edition} {second [corrected]
  printing}\ ed.,\ \bibinfo {series} {Graduate Texts in Mathematics}\ No.\
  \bibinfo {number} {244}\ (\bibinfo  {publisher} {Springer},\ \bibinfo
  {address} {New York},\ \bibinfo {year} {2008})\BibitemShut {NoStop}%
\bibitem [{Note1()}]{Note1}%
  \BibitemOpen
  \bibinfo {note} {J. H. Conway, Logarithmic Numbers, Entry A002104 in The
  On-Line Encyclopedia of Integer Sequences,
  https://oeis.org/A002104}\BibitemShut {NoStop}%
\bibitem [{\citenamefont {L{\"o}rwald}\ and\ \citenamefont
  {Reinelt}(2015)}]{lorwald_panda_2015}%
  \BibitemOpen
  \bibfield  {author} {\bibinfo {author} {\bibfnamefont {S.}~\bibnamefont
  {L{\"o}rwald}}\ and\ \bibinfo {author} {\bibfnamefont {G.}~\bibnamefont
  {Reinelt}},\ }\bibfield  {title} {\bibinfo {title} {{{PANDA}}: A software for
  polyhedral transformations},\ }\href
  {https://doi.org/10.1007/s13675-015-0040-0} {\bibfield  {journal} {\bibinfo
  {journal} {EURO Journal on Computational Optimization}\ }\textbf {\bibinfo
  {volume} {3}},\ \bibinfo {pages} {297} (\bibinfo {year} {2015})}\BibitemShut
  {NoStop}%
\bibitem [{\citenamefont {Wilson}\ and\ \citenamefont
  {Kogut}(1974)}]{wilson_renormalization_1974}%
  \BibitemOpen
  \bibfield  {author} {\bibinfo {author} {\bibfnamefont {K.~G.}\ \bibnamefont
  {Wilson}}\ and\ \bibinfo {author} {\bibfnamefont {J.}~\bibnamefont {Kogut}},\
  }\bibfield  {title} {\bibinfo {title} {The renormalization group and the
  {{$\epsilon$}} expansion},\ }\href
  {https://doi.org/10.1016/0370-1573(74)90023-4} {\bibfield  {journal}
  {\bibinfo  {journal} {Physics Reports}\ }\textbf {\bibinfo {volume} {12}},\
  \bibinfo {pages} {75} (\bibinfo {year} {1974})}\BibitemShut {NoStop}%
\bibitem [{\citenamefont {White}(1992)}]{white_density_1992}%
  \BibitemOpen
  \bibfield  {author} {\bibinfo {author} {\bibfnamefont {S.~R.}\ \bibnamefont
  {White}},\ }\bibfield  {title} {\bibinfo {title} {Density matrix formulation
  for quantum renormalization groups},\ }\href@noop {} {\bibfield  {journal}
  {\bibinfo  {journal} {Physical review letters}\ }\textbf {\bibinfo {volume}
  {69}},\ \bibinfo {pages} {2863} (\bibinfo {year} {1992})}\BibitemShut
  {NoStop}%
\bibitem [{\citenamefont
  {Schollw{\"o}ck}(2011)}]{schollwock_density-matrix_2011}%
  \BibitemOpen
  \bibfield  {author} {\bibinfo {author} {\bibfnamefont {U.}~\bibnamefont
  {Schollw{\"o}ck}},\ }\bibfield  {title} {\bibinfo {title} {The density-matrix
  renormalization group in the age of matrix product states},\ }\href
  {https://doi.org/10.1016/j.aop.2010.09.012} {\bibfield  {journal} {\bibinfo
  {journal} {Annals of Physics}\ }\bibinfo {series} {January 2011 {{Special
  Issue}}},\ \textbf {\bibinfo {volume} {326}},\ \bibinfo {pages} {96}
  (\bibinfo {year} {2011})}\BibitemShut {NoStop}%
\bibitem [{\citenamefont {Shanno}(1970)}]{shanno_conditioning_1970}%
  \BibitemOpen
  \bibfield  {author} {\bibinfo {author} {\bibfnamefont {D.~F.}\ \bibnamefont
  {Shanno}},\ }\bibfield  {title} {\bibinfo {title} {Conditioning of
  quasi-{{Newton}} methods for function minimization},\ }\href@noop {}
  {\bibfield  {journal} {\bibinfo  {journal} {Mathematics of computation}\
  }\textbf {\bibinfo {volume} {24}},\ \bibinfo {pages} {647} (\bibinfo {year}
  {1970})}\BibitemShut {NoStop}%
\bibitem [{\citenamefont {Goldfarb}(1970)}]{goldfarb_family_1970}%
  \BibitemOpen
  \bibfield  {author} {\bibinfo {author} {\bibfnamefont {D.}~\bibnamefont
  {Goldfarb}},\ }\bibfield  {title} {\bibinfo {title} {A family of
  variable-metric methods derived by variational means},\ }\href@noop {}
  {\bibfield  {journal} {\bibinfo  {journal} {Mathematics of computation}\
  }\textbf {\bibinfo {volume} {24}},\ \bibinfo {pages} {23} (\bibinfo {year}
  {1970})}\BibitemShut {NoStop}%
\bibitem [{\citenamefont {Fletcher}(1970)}]{fletcher_new_1970}%
  \BibitemOpen
  \bibfield  {author} {\bibinfo {author} {\bibfnamefont {R.}~\bibnamefont
  {Fletcher}},\ }\bibfield  {title} {\bibinfo {title} {A new approach to
  variable metric algorithms},\ }\href@noop {} {\bibfield  {journal} {\bibinfo
  {journal} {The computer journal}\ }\textbf {\bibinfo {volume} {13}},\
  \bibinfo {pages} {317} (\bibinfo {year} {1970})}\BibitemShut {NoStop}%
\bibitem [{\citenamefont {Broyden}(1970)}]{broyden_convergence_1970}%
  \BibitemOpen
  \bibfield  {author} {\bibinfo {author} {\bibfnamefont {C.~G.}\ \bibnamefont
  {Broyden}},\ }\bibfield  {title} {\bibinfo {title} {The convergence of a
  class of double-rank minimization algorithms},\ }\href@noop {} {\bibfield
  {journal} {\bibinfo  {journal} {IMA Journal of Applied Mathematics}\ }\textbf
  {\bibinfo {volume} {6}},\ \bibinfo {pages} {76} (\bibinfo {year}
  {1970})}\BibitemShut {NoStop}%
\bibitem [{\citenamefont {Liu}\ \emph {et~al.}(2023)\citenamefont {Liu},
  \citenamefont {Gao}, \citenamefont {Cain}, \citenamefont {Lukin},\ and\
  \citenamefont {Wang}}]{liu_computing_2023}%
  \BibitemOpen
  \bibfield  {author} {\bibinfo {author} {\bibfnamefont {J.-G.}\ \bibnamefont
  {Liu}}, \bibinfo {author} {\bibfnamefont {X.}~\bibnamefont {Gao}}, \bibinfo
  {author} {\bibfnamefont {M.}~\bibnamefont {Cain}}, \bibinfo {author}
  {\bibfnamefont {M.~D.}\ \bibnamefont {Lukin}},\ and\ \bibinfo {author}
  {\bibfnamefont {S.-T.}\ \bibnamefont {Wang}},\ }\bibfield  {title} {\bibinfo
  {title} {Computing solution space properties of combinatorial optimization
  problems via generic tensor networks},\ }\href@noop {} {\bibfield  {journal}
  {\bibinfo  {journal} {SIAM Journal on Scientific Computing}\ }\textbf
  {\bibinfo {volume} {45}},\ \bibinfo {pages} {A1239} (\bibinfo {year}
  {2023})}\BibitemShut {NoStop}%
\bibitem [{\citenamefont {Stephen}(2021)}]{stephen_topological_2021}%
  \BibitemOpen
  \bibfield  {author} {\bibinfo {author} {\bibfnamefont {D.~T.}\ \bibnamefont
  {Stephen}},\ }\emph {\bibinfo {title} {Topological Phases of Matter with
  Subsystem Symmetries}},\ \href@noop {} {Ph.D. thesis},\ \bibinfo  {school}
  {Technische Universit{\"a}t M{\"u}nchen} (\bibinfo {year} {2021})\BibitemShut
  {NoStop}%
\bibitem [{\citenamefont {Stephen}\ \emph {et~al.}(2019)\citenamefont
  {Stephen}, \citenamefont {Nautrup}, \citenamefont {{Bermejo-Vega}},
  \citenamefont {Eisert},\ and\ \citenamefont
  {Raussendorf}}]{stephen_subsystem_2019}%
  \BibitemOpen
  \bibfield  {author} {\bibinfo {author} {\bibfnamefont {D.~T.}\ \bibnamefont
  {Stephen}}, \bibinfo {author} {\bibfnamefont {H.~P.}\ \bibnamefont
  {Nautrup}}, \bibinfo {author} {\bibfnamefont {J.}~\bibnamefont
  {{Bermejo-Vega}}}, \bibinfo {author} {\bibfnamefont {J.}~\bibnamefont
  {Eisert}},\ and\ \bibinfo {author} {\bibfnamefont {R.}~\bibnamefont
  {Raussendorf}},\ }\bibfield  {title} {\bibinfo {title} {Subsystem symmetries,
  quantum cellular automata, and computational phases of quantum matter},\
  }\href {https://doi.org/10.22331/q-2019-05-20-142} {\bibfield  {journal}
  {\bibinfo  {journal} {Quantum}\ }\textbf {\bibinfo {volume} {3}},\ \bibinfo
  {pages} {142} (\bibinfo {year} {2019})}\BibitemShut {NoStop}%
\bibitem [{\citenamefont {Huang}\ \emph {et~al.}(2021)\citenamefont {Huang},
  \citenamefont {Zhang}, \citenamefont {Newman}, \citenamefont {Ni},
  \citenamefont {Ding}, \citenamefont {Cai}, \citenamefont {Gao}, \citenamefont
  {Wang}, \citenamefont {Wu}, \citenamefont {Zhang}, \citenamefont {Ku},
  \citenamefont {Tian}, \citenamefont {Wu}, \citenamefont {Xu}, \citenamefont
  {Yu}, \citenamefont {Yuan}, \citenamefont {Szegedy}, \citenamefont {Shi},
  \citenamefont {Zhao}, \citenamefont {Deng},\ and\ \citenamefont
  {Chen}}]{huang_efficient_2021}%
  \BibitemOpen
  \bibfield  {author} {\bibinfo {author} {\bibfnamefont {C.}~\bibnamefont
  {Huang}}, \bibinfo {author} {\bibfnamefont {F.}~\bibnamefont {Zhang}},
  \bibinfo {author} {\bibfnamefont {M.}~\bibnamefont {Newman}}, \bibinfo
  {author} {\bibfnamefont {X.}~\bibnamefont {Ni}}, \bibinfo {author}
  {\bibfnamefont {D.}~\bibnamefont {Ding}}, \bibinfo {author} {\bibfnamefont
  {J.}~\bibnamefont {Cai}}, \bibinfo {author} {\bibfnamefont {X.}~\bibnamefont
  {Gao}}, \bibinfo {author} {\bibfnamefont {T.}~\bibnamefont {Wang}}, \bibinfo
  {author} {\bibfnamefont {F.}~\bibnamefont {Wu}}, \bibinfo {author}
  {\bibfnamefont {G.}~\bibnamefont {Zhang}}, \bibinfo {author} {\bibfnamefont
  {H.-S.}\ \bibnamefont {Ku}}, \bibinfo {author} {\bibfnamefont
  {Z.}~\bibnamefont {Tian}}, \bibinfo {author} {\bibfnamefont {J.}~\bibnamefont
  {Wu}}, \bibinfo {author} {\bibfnamefont {H.}~\bibnamefont {Xu}}, \bibinfo
  {author} {\bibfnamefont {H.}~\bibnamefont {Yu}}, \bibinfo {author}
  {\bibfnamefont {B.}~\bibnamefont {Yuan}}, \bibinfo {author} {\bibfnamefont
  {M.}~\bibnamefont {Szegedy}}, \bibinfo {author} {\bibfnamefont
  {Y.}~\bibnamefont {Shi}}, \bibinfo {author} {\bibfnamefont {H.-H.}\
  \bibnamefont {Zhao}}, \bibinfo {author} {\bibfnamefont {C.}~\bibnamefont
  {Deng}},\ and\ \bibinfo {author} {\bibfnamefont {J.}~\bibnamefont {Chen}},\
  }\bibfield  {title} {\bibinfo {title} {Efficient parallelization of tensor
  network contraction for simulating quantum computation},\ }\href
  {https://doi.org/10.1038/s43588-021-00119-7} {\bibfield  {journal} {\bibinfo
  {journal} {Nature Computational Science}\ }\textbf {\bibinfo {volume} {1}},\
  \bibinfo {pages} {578} (\bibinfo {year} {2021})}\BibitemShut {NoStop}%
\bibitem [{\citenamefont {Schuch}\ and\ \citenamefont
  {Cirac}(2010)}]{schuch_matrix_2010}%
  \BibitemOpen
  \bibfield  {author} {\bibinfo {author} {\bibfnamefont {N.}~\bibnamefont
  {Schuch}}\ and\ \bibinfo {author} {\bibfnamefont {J.~I.}\ \bibnamefont
  {Cirac}},\ }\bibfield  {title} {\bibinfo {title} {Matrix product state and
  mean-field solutions for one-dimensional systems can be found efficiently},\
  }\href {https://doi.org/10.1103/PhysRevA.82.012314} {\bibfield  {journal}
  {\bibinfo  {journal} {Physical Review A}\ }\textbf {\bibinfo {volume} {82}},\
  \bibinfo {pages} {012314} (\bibinfo {year} {2010})}\BibitemShut {NoStop}%
\bibitem [{\citenamefont {Aharonov}\ \emph {et~al.}(2010)\citenamefont
  {Aharonov}, \citenamefont {Arad},\ and\ \citenamefont
  {Irani}}]{aharonov_efficient_2010}%
  \BibitemOpen
  \bibfield  {author} {\bibinfo {author} {\bibfnamefont {D.}~\bibnamefont
  {Aharonov}}, \bibinfo {author} {\bibfnamefont {I.}~\bibnamefont {Arad}},\
  and\ \bibinfo {author} {\bibfnamefont {S.}~\bibnamefont {Irani}},\ }\bibfield
   {title} {\bibinfo {title} {Efficient algorithm for approximating
  one-dimensional ground states},\ }\href
  {https://doi.org/10.1103/PhysRevA.82.012315} {\bibfield  {journal} {\bibinfo
  {journal} {Physical Review A}\ }\textbf {\bibinfo {volume} {82}},\ \bibinfo
  {pages} {012315} (\bibinfo {year} {2010})}\BibitemShut {NoStop}%
\bibitem [{\citenamefont {Navascues}\ \emph {et~al.}(2007)\citenamefont
  {Navascues}, \citenamefont {Pironio},\ and\ \citenamefont
  {Acin}}]{navascues_bounding_2007}%
  \BibitemOpen
  \bibfield  {author} {\bibinfo {author} {\bibfnamefont {M.}~\bibnamefont
  {Navascues}}, \bibinfo {author} {\bibfnamefont {S.}~\bibnamefont {Pironio}},\
  and\ \bibinfo {author} {\bibfnamefont {A.}~\bibnamefont {Acin}},\ }\bibfield
  {title} {\bibinfo {title} {Bounding the set of quantum correlations},\ }\href
  {https://doi.org/10.1103/PhysRevLett.98.010401} {\bibfield  {journal}
  {\bibinfo  {journal} {Physical Review Letters}\ }\textbf {\bibinfo {volume}
  {98}},\ \bibinfo {pages} {010401} (\bibinfo {year} {2007})},\ \Eprint
  {https://arxiv.org/abs/quant-ph/0607119} {arXiv:quant-ph/0607119}
  \BibitemShut {NoStop}%
\bibitem [{\citenamefont {Requena}\ \emph {et~al.}(2023)\citenamefont
  {Requena}, \citenamefont {{Mu{\~n}oz-Gil}}, \citenamefont {Lewenstein},
  \citenamefont {Dunjko},\ and\ \citenamefont
  {Tura}}]{requena_certificates_2023}%
  \BibitemOpen
  \bibfield  {author} {\bibinfo {author} {\bibfnamefont {B.}~\bibnamefont
  {Requena}}, \bibinfo {author} {\bibfnamefont {G.}~\bibnamefont
  {{Mu{\~n}oz-Gil}}}, \bibinfo {author} {\bibfnamefont {M.}~\bibnamefont
  {Lewenstein}}, \bibinfo {author} {\bibfnamefont {V.}~\bibnamefont {Dunjko}},\
  and\ \bibinfo {author} {\bibfnamefont {J.}~\bibnamefont {Tura}},\ }\bibfield
  {title} {\bibinfo {title} {Certificates of quantum many-body properties
  assisted by machine learning},\ }\href
  {https://doi.org/10.1103/PhysRevResearch.5.013097} {\bibfield  {journal}
  {\bibinfo  {journal} {Physical Review Research}\ }\textbf {\bibinfo {volume}
  {5}},\ \bibinfo {pages} {013097} (\bibinfo {year} {2023})}\BibitemShut
  {NoStop}%
\bibitem [{\citenamefont {Kull}\ \emph {et~al.}(2024)\citenamefont {Kull},
  \citenamefont {Schuch}, \citenamefont {Dive},\ and\ \citenamefont
  {Navascu{\'e}s}}]{kull_lower_2024}%
  \BibitemOpen
  \bibfield  {author} {\bibinfo {author} {\bibfnamefont {I.}~\bibnamefont
  {Kull}}, \bibinfo {author} {\bibfnamefont {N.}~\bibnamefont {Schuch}},
  \bibinfo {author} {\bibfnamefont {B.}~\bibnamefont {Dive}},\ and\ \bibinfo
  {author} {\bibfnamefont {M.}~\bibnamefont {Navascu{\'e}s}},\ }\bibfield
  {title} {\bibinfo {title} {Lower {{Bounds}} on {{Ground-State Energies}} of
  {{Local Hamiltonians}} through the {{Renormalization Group}}},\ }\href
  {https://doi.org/10.1103/PhysRevX.14.021008} {\bibfield  {journal} {\bibinfo
  {journal} {Physical Review X}\ }\textbf {\bibinfo {volume} {14}},\ \bibinfo
  {pages} {021008} (\bibinfo {year} {2024})}\BibitemShut {NoStop}%
\bibitem [{\citenamefont {Balas}\ and\ \citenamefont
  {Oosten}(2000)}]{balas_cycle_2000}%
  \BibitemOpen
  \bibfield  {author} {\bibinfo {author} {\bibfnamefont {E.}~\bibnamefont
  {Balas}}\ and\ \bibinfo {author} {\bibfnamefont {M.}~\bibnamefont {Oosten}},\
  }\bibfield  {title} {\bibinfo {title} {On the cycle polytope of a directed
  graph},\ }\href
  {https://doi.org/10.1002/1097-0037(200008)36:1<34::AID-NET4>3.0.CO;2-2}
  {\bibfield  {journal} {\bibinfo  {journal} {Networks}\ }\textbf {\bibinfo
  {volume} {36}},\ \bibinfo {pages} {34} (\bibinfo {year} {2000})}\BibitemShut
  {NoStop}%
\bibitem [{\citenamefont {Ziegler}(1995)}]{ziegler_lectures_1995}%
  \BibitemOpen
  \bibfield  {author} {\bibinfo {author} {\bibfnamefont {G.~M.}\ \bibnamefont
  {Ziegler}},\ }\href {https://doi.org/10.1007/978-1-4613-8431-1} {\emph
  {\bibinfo {title} {Lectures on Polytopes}}},\ \bibinfo {series} {Graduate
  Texts in Mathematics}, Vol.\ \bibinfo {volume} {152}\ (\bibinfo  {publisher}
  {Springer-Verlag, New York},\ \bibinfo {year} {1995})\ pp.\ \bibinfo {pages}
  {x+370}\BibitemShut {NoStop}%
\bibitem [{\citenamefont {Bruns}\ \emph {et~al.}(2017)\citenamefont {Bruns},
  \citenamefont {Ichim}, \citenamefont {R{\"o}mer}, \citenamefont {Sieg},\ and\
  \citenamefont {S{\"o}ger}}]{bruns_normaliz_2017}%
  \BibitemOpen
  \bibfield  {author} {\bibinfo {author} {\bibfnamefont {W.}~\bibnamefont
  {Bruns}}, \bibinfo {author} {\bibfnamefont {B.}~\bibnamefont {Ichim}},
  \bibinfo {author} {\bibfnamefont {T.}~\bibnamefont {R{\"o}mer}}, \bibinfo
  {author} {\bibfnamefont {R.}~\bibnamefont {Sieg}},\ and\ \bibinfo {author}
  {\bibfnamefont {C.}~\bibnamefont {S{\"o}ger}},\ }\href@noop {} {\bibinfo
  {title} {Normaliz. {{Algorithms}} for rational cones and affine monoids}}
  (\bibinfo {year} {2017})\BibitemShut {NoStop}%
\bibitem [{\citenamefont
  {Schollw{\"o}ck}(2005)}]{schollwock_density-matrix_2005}%
  \BibitemOpen
  \bibfield  {author} {\bibinfo {author} {\bibfnamefont {U.}~\bibnamefont
  {Schollw{\"o}ck}},\ }\bibfield  {title} {\bibinfo {title} {The density-matrix
  renormalization group},\ }\href {https://doi.org/10.1103/RevModPhys.77.259}
  {\bibfield  {journal} {\bibinfo  {journal} {Reviews of Modern Physics}\
  }\textbf {\bibinfo {volume} {77}},\ \bibinfo {pages} {259} (\bibinfo {year}
  {2005})}\BibitemShut {NoStop}%
\bibitem [{\citenamefont {Hauschild}\ and\ \citenamefont
  {Pollmann}(2018)}]{hauschild_efficient_2018}%
  \BibitemOpen
  \bibfield  {author} {\bibinfo {author} {\bibfnamefont {J.}~\bibnamefont
  {Hauschild}}\ and\ \bibinfo {author} {\bibfnamefont {F.}~\bibnamefont
  {Pollmann}},\ }\bibfield  {title} {\bibinfo {title} {Efficient numerical
  simulations with tensor networks: {{Tensor}} network python ({{TeNPy}})},\
  }\href {https://doi.org/10.21468/SciPostPhysLectNotes.5} {\bibfield
  {journal} {\bibinfo  {journal} {SciPost Phys. Lect. Notes}\ ,\ \bibinfo
  {pages} {5}} (\bibinfo {year} {2018})}\BibitemShut {NoStop}%
\bibitem [{\citenamefont {Virtanen}\ \emph {et~al.}(2020)\citenamefont
  {Virtanen}, \citenamefont {Gommers}, \citenamefont {Oliphant}, \citenamefont
  {Haberland}, \citenamefont {Reddy}, \citenamefont {Cournapeau}, \citenamefont
  {Burovski}, \citenamefont {Peterson}, \citenamefont {Weckesser},
  \citenamefont {Bright}, \citenamefont {{van der Walt}}, \citenamefont
  {Brett}, \citenamefont {Wilson}, \citenamefont {Millman}, \citenamefont
  {Mayorov}, \citenamefont {Nelson}, \citenamefont {Jones}, \citenamefont
  {Kern}, \citenamefont {Larson}, \citenamefont {Carey}, \citenamefont {Polat},
  \citenamefont {Feng}, \citenamefont {Moore}, \citenamefont {VanderPlas},
  \citenamefont {Laxalde}, \citenamefont {Perktold}, \citenamefont {Cimrman},
  \citenamefont {Henriksen}, \citenamefont {Quintero}, \citenamefont {Harris},
  \citenamefont {Archibald}, \citenamefont {Ribeiro}, \citenamefont
  {Pedregosa}, \citenamefont {{van Mulbregt}},\ and\ \citenamefont {{SciPy 1.0
  Contributors}}}]{virtanen_scipy_2020}%
  \BibitemOpen
  \bibfield  {author} {\bibinfo {author} {\bibfnamefont {P.}~\bibnamefont
  {Virtanen}}, \bibinfo {author} {\bibfnamefont {R.}~\bibnamefont {Gommers}},
  \bibinfo {author} {\bibfnamefont {T.~E.}\ \bibnamefont {Oliphant}}, \bibinfo
  {author} {\bibfnamefont {M.}~\bibnamefont {Haberland}}, \bibinfo {author}
  {\bibfnamefont {T.}~\bibnamefont {Reddy}}, \bibinfo {author} {\bibfnamefont
  {D.}~\bibnamefont {Cournapeau}}, \bibinfo {author} {\bibfnamefont
  {E.}~\bibnamefont {Burovski}}, \bibinfo {author} {\bibfnamefont
  {P.}~\bibnamefont {Peterson}}, \bibinfo {author} {\bibfnamefont
  {W.}~\bibnamefont {Weckesser}}, \bibinfo {author} {\bibfnamefont
  {J.}~\bibnamefont {Bright}}, \bibinfo {author} {\bibfnamefont {S.~J.}\
  \bibnamefont {{van der Walt}}}, \bibinfo {author} {\bibfnamefont
  {M.}~\bibnamefont {Brett}}, \bibinfo {author} {\bibfnamefont
  {J.}~\bibnamefont {Wilson}}, \bibinfo {author} {\bibfnamefont {K.~J.}\
  \bibnamefont {Millman}}, \bibinfo {author} {\bibfnamefont {N.}~\bibnamefont
  {Mayorov}}, \bibinfo {author} {\bibfnamefont {A.~R.~J.}\ \bibnamefont
  {Nelson}}, \bibinfo {author} {\bibfnamefont {E.}~\bibnamefont {Jones}},
  \bibinfo {author} {\bibfnamefont {R.}~\bibnamefont {Kern}}, \bibinfo {author}
  {\bibfnamefont {E.}~\bibnamefont {Larson}}, \bibinfo {author} {\bibfnamefont
  {C.~J.}\ \bibnamefont {Carey}}, \bibinfo {author} {\bibfnamefont
  {{\.I}.}~\bibnamefont {Polat}}, \bibinfo {author} {\bibfnamefont
  {Y.}~\bibnamefont {Feng}}, \bibinfo {author} {\bibfnamefont {E.~W.}\
  \bibnamefont {Moore}}, \bibinfo {author} {\bibfnamefont {J.}~\bibnamefont
  {VanderPlas}}, \bibinfo {author} {\bibfnamefont {D.}~\bibnamefont {Laxalde}},
  \bibinfo {author} {\bibfnamefont {J.}~\bibnamefont {Perktold}}, \bibinfo
  {author} {\bibfnamefont {R.}~\bibnamefont {Cimrman}}, \bibinfo {author}
  {\bibfnamefont {I.}~\bibnamefont {Henriksen}}, \bibinfo {author}
  {\bibfnamefont {E.~A.}\ \bibnamefont {Quintero}}, \bibinfo {author}
  {\bibfnamefont {C.~R.}\ \bibnamefont {Harris}}, \bibinfo {author}
  {\bibfnamefont {A.~M.}\ \bibnamefont {Archibald}}, \bibinfo {author}
  {\bibfnamefont {A.~H.}\ \bibnamefont {Ribeiro}}, \bibinfo {author}
  {\bibfnamefont {F.}~\bibnamefont {Pedregosa}}, \bibinfo {author}
  {\bibfnamefont {P.}~\bibnamefont {{van Mulbregt}}},\ and\ \bibinfo {author}
  {\bibnamefont {{SciPy 1.0 Contributors}}},\ }\bibfield  {title} {\bibinfo
  {title} {{{SciPy}} 1.0: {{Fundamental}} algorithms for scientific computing
  in python},\ }\href {https://doi.org/10.1038/s41592-019-0686-2} {\bibfield
  {journal} {\bibinfo  {journal} {Nature Methods}\ }\textbf {\bibinfo {volume}
  {17}},\ \bibinfo {pages} {261} (\bibinfo {year} {2020})}\BibitemShut
  {NoStop}%
\end{thebibliography}%

\end{document}